\theoremstyle{plain}
 \newtheorem{thm}{Theorem}[section]
 \newtheorem{prop}{Proposition}[section]
 \newtheorem{lem}{Lemma}[section]
 \newtheorem{cor}{Corollary}[section]
\theoremstyle{definition}
 \newtheorem{rem}{Remark}[section]
\numberwithin{equation}{section}
\renewcommand{\setminus}{\smallsetminus}
\def\ji {\char'032}
\def\ja {\char'037}
\def\m  {\char'176}
 \font\srrm=wncyr8
\newcommand{\T}{\mathbb{T}}
\newcommand{\R}{\mathbb{R}}
\newcommand{\C}{\mathbb{C}}
\newcommand{\CP}{\mathbb{P}}
\DeclareMathOperator{\diag}{\mathrm{diag}}
\title[The Jacobi-Rosochatius problem on an ellipsoid]{The Jacobi-Rosochatius problem on an ellipsoid: the Lax representations and billiards}
\subjclass[2000]{70H06, 70H45, 37J35, 53D25}
\author[Jovanovi\'c]{\sc Bo\v zidar Jovanovi\'c}
\address{
Mathematical Institute \\
Serbian Academy of Sciences and Arts \\
Kneza Mihaila 36, 11000 Belgrade\\
Serbia}
\email{bozaj@mi.sanu.ac.rs}
\begin{document}


\begin{abstract}
The Lax representations of the geodesic flow, the
Jacobi-Roso\-chatius problem and its perturbations by means of
separable polynomial potentials, on an ellipsoid are constructed.
We prove complete integrability in the case of a generic symmetric
ellipsoid and describe analogous systems on complex projective
spaces. Also, we consider billiards within an ellipsoid under the
influence of the Hook and Rosochatius potentials between the
impacts. A geometric interpretation of the integrability analogous
to the classical Chasles and Poncelet theorems is given.
\end{abstract}

\maketitle

\tableofcontents

\section{Introduction}

A well known Jacobi's problem describes the motion of a material
point on a $n$-dimensional ellipsoid $E^n$
\begin{equation}\label{ellipsoid}
E^n=\{x\in\R^{n+1}\,\vert\, \langle A^{-1} x,x\rangle=1\}
\end{equation}
under the influence of the Hook elastic force $-\sigma x$. Here $
A=\diag(a_0,a_1,\dots,a_n)$ is a positive definite matrix and
$\sigma$ is a real parameter \cite{J, Moser}. The Lagrangian of
the system is $ L(x,\dot x)=\frac12\langle \dot x,\dot
x\rangle-\frac{\sigma}{2}\langle x,x\rangle.$ The corresponding
Euler-Lagrange equations are
\begin{equation}\label{EL}
\ddot x=\lambda A^{-1}x-\sigma x,
\end{equation}
where the Lagrange multiplier is $\lambda=-({\langle A^{-1}\dot
x,\dot x\rangle-\sigma})/{\langle A^{-2}x,x\rangle}. $

By introducing the momenta $ y={\partial L}/{\partial \dot x}=\dot
x$, we can write \eqref{EL} in the Hamiltonian form
\begin{equation}\label{H}
\dot x=y, \qquad \dot y=-\frac{\langle
A^{-1}y,y\rangle-\sigma}{\langle A^{-2}x,x\rangle} A^{-1}x-\sigma
x
\end{equation}
on the cotangent bundle $T^*E^n$ realized by the constraints
\begin{equation}\label{E-constraints}
F_1=\langle A^{-1} x,x\rangle-1=0, \qquad F_2=\langle A^{-1}
x,y\rangle=0.
\end{equation}

The equations \eqref{H} are Hamiltonian with respect to the
Hamiltonian
\begin{equation}
H(x,y)=\langle y,\dot x\rangle - L(x,\dot x)\vert_{\dot
x=y}=\frac12\langle y,y\rangle +\frac{\sigma}{2}\langle x,x\rangle
\end{equation}
and the Dirac-Poisson bracket defined by
\begin{equation}
\{f_1,f_2\}_D=\{f_1,f_2\} -\frac{
\{F_1,f_1\}\{F_2,f_2\}-\{F_2,f_1\}\{F_1,f_2\}}{\{F_1,F_2\}},
\label{Dirac_bracket}
\end{equation}
where $\{\cdot,\cdot\}$ is the canonical Poisson bracket on
$\R^{2n+2}(x,y)$ (see \cite{AKN, Moser}). The Dirac-Poisson
bracket of the coordinate functions are:
\begin{equation*}
\{x_i,x_j\}_D=0, \quad\{y_i,y_j\}_D=-\frac{x_i y_j-x_j y_i}{a_i
a_j\langle A^{-2}x,x\rangle}, \quad
\{x_i,y_j\}_D=\delta_{ij}-\frac{x_i x_j}{a_i a_j\langle
A^{-2}x,x\rangle}.
\end{equation*}

Note that the Hamiltonian flow and the Dirac-Poisson structure is
defined not only on $T^*E^n$ but on the whole $\R^{2n+2}$ without
$x=0$.

If we set $\sigma=0$, the system  represents the geodesic flow on
the ellipsoid \eqref{ellipsoid}. The geodesic flow and the Jacobi
problem are one of the basic classical models of integrable
systems \cite{J, AKN}. For the ellipsoids with distinct semi-axes,
they are separable in the Jacobi elliptic coordinates
$(\lambda_0=0,\lambda_1,\dots,\lambda_n)$ defined as follows
\cite{AKN, Moser, Moser2, V}. Through every point $x\in \R^{n+1}$,
$x_0\cdot x_1\cdot\dots\cdot x_n\ne 0$, pass exactly $n+1$
mutually orthogonal confocal quadrics $\mathcal Q_{\lambda_0},
\mathcal Q_{\lambda_1}, \dots, \mathcal Q_{\lambda_n}$ ($\lambda_0
< a_0 < \lambda_1 <
 a_1 < \dots < \lambda_n < a_n$) given by the equation
\begin{equation}\label{confocal}
\mathcal Q_\lambda: \qquad \langle (A-\lambda)^{-1}
x,x\rangle=\sum_{i=0}^n\frac{x_i^2}{a_i-\lambda}=1.
\end{equation}
The original coordinates are defined up to a sign:
$$
x_k^2=\frac{\prod_i (a_k-\lambda_i)}{\prod_{i,i\ne k}(a_k-a_i)},
\quad k=0,\dots,n.
$$

In his celebrated paper \cite{Moser}, Moser found the Lax
representation $\dot L=[L,B]$,
$$
L=\left(\mathbf{I}_{n+1}-\frac{y\otimes y}{\langle
y,y\rangle}\right)\left(A-x\otimes x\right)
\left(\mathbf{I}_{n+1}-\frac{y\otimes y}{\langle
y,y\rangle}\right), \quad B=A^{-1} x \wedge A^{-1} y,
$$
of the associated system
\begin{equation}\label{ME}
\dot x=\frac{\partial \Phi}{\partial y}, \quad \dot
y=-\frac{\partial\Phi}{\partial x}, \quad \Phi=-\langle
y,A^{-1}y\rangle(1-\langle x,A^{-1}x\rangle)-\langle
x,A^{-1}y\rangle^2.
\end{equation}

The set $\Phi=0$ describes the set of tangents of the ellipsoid
$E^n=\mathcal Q_0$. Restricted to $\Phi=0$, the system \eqref{ME}
has the following important property: the moving line
$p(t)=\{x(t)+s y(t)\,\vert\,s\in\R\}$ has a point of contact $\xi$
with $E^{n}$ along nonparameterized geodesic $\xi(\tau)$.
Moreover, the eigenvalues $\eta_1,\dots,\eta_n$ of $L$ (different
from zero), define $n$ quadrics $\mathcal
Q_{\eta_1},\dots,\mathcal Q_{\eta_n}$ from the confocal family
\eqref{confocal}, such that $p(t)$ is simultaneously tangent to
$\mathcal Q_{\eta_1},\dots,\mathcal Q_{\eta_n}$ (a variant of the
classical {Chasles's} theorem \cite{Chasles}).

Although Moser's method  is applied in the construction of various
integrable models admitting Lax representations \cite{AHP}, Lax
representations with a spectral parameter of the geodesic flow and
the Jacobi problem on $E^n$, according to author's knowledge, are
not given yet.

In Section 2 we define double Jacobi flows and complex double
Jacobi flows and construct two different Lax representations, a
"small" $2\times 2$ one (Theorem \ref{T1}) and a "big"
$(n+1)\times (n+1)$ one (Theorem \ref{T2}). An appropriate
restrictions of the double Jacobi flows and the complex double
Jacobi flows are the Jacobi problems on the ellipsoids
\eqref{ellipsoid} and
\begin{equation}\label{CE}
E^{2n+1}=\{z\in\C^{n+1}\,\vert\, \langle A^{-1}z,\bar
z\rangle=1\},
\end{equation}
respectively, leading to the Lax representations  of the Jacobi
problem (Theorem \ref{T3}, Section 3). Note that a small $2\times
2$ Lax representation has the usual Lax matrix form of the {\it
Jacobi-Mamford} system (see \cite{Mum, Va}).

The Jacobi problem on the ellipsoid \eqref{CE} is invariant with
respect to the standard $\T^{n+1}$-action on
$\C^{n+1}$.\footnote{We use the complex notation to simplify the
reduction procedure. Equivalently one can consider the real space
$\R^{2n+2}$ and the ellipsoid $E^{2n+1}$ defined by the matrix
$A=(a_0,a_0,a_1,a_1,\dots,a_n,a_n)$.} It is well known (e.g., see
\cite{Moser, K}), that the reduced flow can be naturally
considered as a system describing the motion of a material point
on the ellipsoid \eqref{ellipsoid} under the influence of the Hook
and the Rosochatius potentials \cite{R}
\begin{equation}\label{Rpot}
V(x)=\frac{\sigma}{2}\langle
x,x\rangle+\frac12\sum_{k=0}^n\frac{\mu_k^2}{x_k^2}
\end{equation}
(Section 4). The Lax representation is also invariant with respect
to the standard $\T^{n+1}$-action and induce a small $2\times
2$-Lax representation with a spectral parameter for the
Jacobi-Rosochatius system on the ellipsoid $E^n$ (Theorem
\ref{T6}). Note that if instead of the $\T^{n+1}$-reduction, we
perform a $S^1$-reduction, with $S^1$ diagonally embedded in
$\T^{n+1}$, we obtain a natural mechanical system on a complex
projective space (Proposition \ref{Hermitian}), providing a class
of examples of Hermite-Liouville manifolds (see Remark \ref{H-L})
\cite{IK, T}.

If all semi-axes of the ellipsoid \eqref{ellipsoid} are distinct,
the Jacobi-Rosochatius system is separable in elliptic coordinates
and has $n$ independent commuting integrals which are quadratic in
momenta. The geodesic flows on symmetric ellipsoids are studied in
details in \cite{Kris}. In general, the geodesic flows are
integrable in a noncommutative sense. In Section 5, we prove
complete integrability of the symmetric Jacobi-Rosochatius system
both in a noncommutative (Theorem \ref{integrabilnost}) and the
Liouville sense (Theorem \ref{integrabilnost2}).

In Section 6 we consider the billiard system within an ellipsoid
under the influence of the potential \eqref{Rpot} between the
impacts. By a slight modification of the results given by Fedorov
\cite{Fe2}, we describe the billiard mapping and the Lax
representation (Theorem \ref{C3}). A geometric interpretation of
the integrability analogous to classical Chasles and Poncelet
theorems is given (Theorem \ref{C4}).

Finally, in the last section, by using the $2\times 2$ Lax
representation, we give the hierarchy of the Lax representations
for the separable potential perturbations of the
Jacobi-Rosochatius system on the ellipsoid $E^n$.

\section{Double Jacobi flows}

Together with the Jacobi problem, let us consider a system defined
by the Lagrangian $L(x,\xi,\dot x,\dot\xi)=\langle \dot
x,\dot\xi\rangle-\sigma\langle x,\xi\rangle$ and constrained on
the hypersurface
$$
\Sigma=\{(x,\xi)\in\R^{2n+2}\, \vert\, \langle x,
A^{-1}\xi\rangle-1=0\}.
$$

The Euler-Lagrange equations are
\begin{eqnarray*}
&& \ddot x=\lambda A^{-1}x-\sigma x,\\
&& \ddot \xi=\lambda A^{-1}\xi-\sigma \xi,
\end{eqnarray*}
where the Lagrange multiplier is
\begin{equation*}
\lambda=-\frac{\langle A^{-1}\dot x,\dot\xi\rangle-\sigma}{\langle
A^{-2}x,\xi\rangle}.
\end{equation*}

As in the case of the Jacobi problem, we can write the
corresponding Hamiltonian equations on $T^*\Sigma$. Canonically
conjugate momenta to $(x,\xi)$ are  $\eta={\partial L}/{\partial
\dot x}=\dot\xi$, $y={\partial L}/{\partial \dot \xi}=\dot x$. The
Hamiltonian function of the system is given by the Legendre
transformation:
\begin{equation*}
H(x,\xi,\eta,y)=\langle \dot x, \eta\rangle+\langle
\dot\xi,y\rangle-L=\langle y,\eta\rangle+\sigma\langle
x,\xi\rangle,
\end{equation*}
and the Hamiltonian equations are
\begin{eqnarray}
&& \label{DF1}\dot x=y, \qquad \dot y=-\frac{\langle A^{-1}y,\eta\rangle-\sigma}{\langle A^{-2}x,\xi\rangle} A^{-1}x-\sigma x,\\
&& \label{DF2}\dot \xi=\eta, \qquad \dot \eta=-\frac{\langle
A^{-1} y,\eta\rangle-\sigma}{\langle A^{-2}x,\xi\rangle}
A^{-1}\xi-\sigma \xi.
\end{eqnarray}
The symplectic structure on
\begin{equation}
\label{Sigma}T^*\Sigma:\quad G_1=\langle x, A^{-1}\xi\rangle-1=0,
\quad G_2= \langle y, A^{-1} \xi\rangle+\langle x,
A^{-1}\eta\rangle=0
\end{equation}
is the restriction of the standard symplectic structure on
$\R^{4n+4}(x,\xi,\eta,y)$. The corresponding Dirac-Poisson
structure is given by \eqref{Dirac_bracket}, where we should
replace the constraint functions $F_1, F_2$ by $G_1, G_2$ and
$\{\cdot,\cdot\}$ is the canonical Poisson structure on
$\R^{4n+4}(x,\xi,\eta,y)$.

By the analogy with the double Neumann system (see \cite{Su}), we
refer to \eqref{DF1}, \eqref{DF2} as a {\it double Jacobi flow}.

We can also consider the complexified
 phase space, an affine $2n+2$-dimensional variety in $\C^{4n+4}(x,\xi,\eta,y)$
 defined by \eqref{Sigma}.
Then we refer to \eqref{DF1}, \eqref{DF2} as a {\it complex double
Jacobi flow} (here, the time parameter $t$ can be both: real or
complex).

Let $A_\lambda=\diag(\lambda-a_0,\lambda-a_1,\dots,\lambda-a_n)$
and
$$
q_\lambda(x,\xi)=\langle A^{-1}_\lambda
x,\xi\rangle=\sum_{i=0}^n\frac{x_i\xi_i}{\lambda-a_i}.
$$

\begin{thm}\label{T1}
The (complex) double Jacobi flow \eqref{DF1}, \eqref{DF2} implies
the matrix equation
\begin{equation}
\dot{\mathcal L}(\lambda)=[\mathcal L(\lambda),\mathcal
A(\lambda)]
\end{equation}
with $2\times 2$ matrices depending on the parameter $\lambda$
\begin{eqnarray*}
&& \mathcal L(\lambda)=
\begin{pmatrix}
q_\lambda(x,\eta) & q_\lambda(y,\eta)+\sigma \\
-1-q_\lambda(x,\xi) & -q_\lambda(y,\xi)
\end{pmatrix},\\
&& \mathcal A(\lambda)=\frac{1}{\langle A^{-2}x,\xi\rangle}
\begin{pmatrix}
0 & \frac{1}{\lambda}(\sigma-\langle A^{-1}y,\eta\rangle)-\sigma\langle A^{-2} x,\xi\rangle\\
\langle A^{-2} x,\xi\rangle & 0
\end{pmatrix}.
\end{eqnarray*}
\end{thm}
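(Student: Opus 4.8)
The plan is to verify the matrix identity $\dot{\mathcal L}(\lambda)=[\mathcal L(\lambda),\mathcal A(\lambda)]$ entry by entry, treating $\lambda$ as a formal parameter, by differentiating the entries of $\mathcal L(\lambda)$ along the flow \eqref{DF1}, \eqref{DF2} and comparing with the product on the right. First I would record the basic derivatives of the quadratic forms: since $\dot x=y$, $\dot\xi=\eta$, $\dot y=-\mu A^{-1}x-\sigma x$ and $\dot\eta=-\mu A^{-1}\xi-\sigma\xi$ with the shorthand $\mu=(\langle A^{-1}y,\eta\rangle-\sigma)/\langle A^{-2}x,\xi\rangle$, one gets
\begin{align*}
\frac{d}{dt}q_\lambda(x,\xi)&=q_\lambda(y,\xi)+q_\lambda(x,\eta),\\
\frac{d}{dt}q_\lambda(x,\eta)&=q_\lambda(y,\eta)-\mu\,q_\lambda(A^{-1}x,x)-\sigma q_\lambda(x,x),\\
\frac{d}{dt}q_\lambda(y,\xi)&=-\mu\,q_\lambda(A^{-1}x,\xi)-\sigma q_\lambda(x,\xi)+q_\lambda(y,\eta),\\
\frac{d}{dt}q_\lambda(y,\eta)&=-\mu\bigl(q_\lambda(A^{-1}x,\eta)+q_\lambda(y,A^{-1}\xi)\bigr)-\sigma\bigl(q_\lambda(x,\eta)+q_\lambda(y,\xi)\bigr).
\end{align*}
The key algebraic device is the elementary partial-fraction identity $A^{-1}_\lambda A^{-1}=\frac{1}{\lambda}(A^{-1}_\lambda-A^{-1})+\frac{1}{\lambda}\cdot\frac{1}{\lambda}(\cdots)$; more precisely $\frac{1}{(\lambda-a_i)a_i}=\frac{1}{\lambda}\bigl(\frac{1}{\lambda-a_i}+\frac{1}{a_i}\bigr)$, so that $q_\lambda(A^{-1}x,\xi)=\frac{1}{\lambda}\bigl(q_\lambda(x,\xi)+\langle A^{-1}x,\xi\rangle\bigr)=\frac{1}{\lambda}\bigl(q_\lambda(x,\xi)+1\bigr)$ on $\Sigma$, and similarly $q_\lambda(A^{-1}x,x)=\frac{1}{\lambda}\bigl(q_\lambda(x,x)+\langle A^{-1}x,x\rangle\bigr)$. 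These relations let me eliminate every occurrence of $A^{-1}$ inside a $q_\lambda$ in favor of the four functions $q_\lambda(x,\xi),q_\lambda(x,\eta),q_\lambda(y,\xi),q_\lambda(y,\eta)$ themselves (plus the constant $1$ and the scalars $\langle A^{-1}x,x\rangle,\langle A^{-1}x,\xi\rangle=1,$ etc.), turning the derivative formulas into closed expressions in the entries of $\mathcal L$.

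Next I would compute $[\mathcal L(\lambda),\mathcal A(\lambda)]$ directly. Writing $\mathcal A(\lambda)=\begin{pmatrix}0&\alpha\\\beta&0\end{pmatrix}$ with $\beta=1/\langle A^{-2}x,\xi\rangle\cdot\langle A^{-2}x,\xi\rangle=1$ — wait, more carefully $\beta=1$ and $\alpha=\frac{1}{\langle A^{-2}x,\xi\rangle}\bigl(\frac{1}{\lambda}(\sigma-\langle A^{-1}y,\eta\rangle)\bigr)-\sigma=-\frac{\mu}{\lambda}-\sigma$ — and $\mathcal L=\begin{pmatrix}P&Q\\R&S\end{pmatrix}$ with $P=q_\lambda(x,\eta)$, $Q=q_\lambda(y,\eta)+\sigma$, $R=-1-q_\lambda(x,\xi)$, $S=-q_\lambda(y,\xi)$, the commutator is
\begin{equation*}
[\mathcal L,\mathcal A]=\begin{pmatrix}\alpha R-\beta Q & \alpha S-\alpha P\\ \beta P-\beta S & \beta Q-\alpha R\end{pmatrix}=\begin{pmatrix}\alpha R-Q & \alpha(S-P)\\ P-S & Q-\alpha R\end{pmatrix}.
\end{equation*}
It remains to match each of the four entries with the corresponding $\dot{}$ computed above after the $A^{-1}$-elimination. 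The off-diagonal lower-left entry is the cleanest: $\dot R=-\frac{d}{dt}q_\lambda(x,\xi)=-q_\lambda(y,\xi)-q_\lambda(x,\eta)=S-P$, matching on the nose. The upper-right and the diagonal entries are where the $\frac1\lambda$-terms generated by $\alpha=-\mu/\lambda-\sigma$ must cancel against the $\frac1\lambda$-terms coming from the partial-fraction reduction of $q_\lambda(A^{-1}\cdot,\cdot)$; that bookkeeping — keeping track of the $\langle A^{-1}x,x\rangle$-type scalars and confirming that the genuinely $\lambda$-dependent pieces and the $\lambda$-independent pieces balance separately — is the main obstacle, though it is a finite and mechanical check. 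The complex case requires no new argument: the same computation is valid verbatim over $\C$ since only polynomial identities in the coordinates were used, and the constraint $\langle x,A^{-1}\xi\rangle=1$ defining $\Sigma$ (and its differential consequence $G_2=0$) are the only relations invoked, both of which persist in the complexified phase space.
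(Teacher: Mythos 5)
Your strategy is essentially the paper's own: differentiate the four quadratic forms $q_\lambda$ along the flow, use the partial--fraction identity $A^{-1}A^{-1}_\lambda=(A^{-1}+A^{-1}_\lambda)/\lambda$ together with the constraints \eqref{Sigma}, and match entries of the $2\times 2$ commutator; the only difference is that the paper first rescales time by $dt=\langle A^{-2}x,\xi\rangle\,d\tau$, which merely multiplies both sides by a common scalar and is cosmetic. So the method is sound, but two slips in your writeup should be fixed before the "mechanical bookkeeping" you defer can actually close. First, your displayed commutator is $[\mathcal A,\mathcal L]$, not $[\mathcal L,\mathcal A]$: with $\mathcal L=\left(\begin{smallmatrix}P&Q\\ R&S\end{smallmatrix}\right)$, $\mathcal A=\left(\begin{smallmatrix}0&\alpha\\ 1&0\end{smallmatrix}\right)$ one has $[\mathcal L,\mathcal A]=\left(\begin{smallmatrix}Q-\alpha R&\alpha(P-S)\\ S-P&\alpha R-Q\end{smallmatrix}\right)$; note that its $(2,1)$ entry $S-P$ is exactly what your check of $\dot R$ produces, so your verification contradicts your own formula but agrees with the correct one. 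Second, the derivative of the $(1,1)$ entry should read $\frac{d}{dt}q_\lambda(x,\eta)=q_\lambda(y,\eta)-\mu\,q_\lambda(x,A^{-1}\xi)-\sigma q_\lambda(x,\xi)$ (you wrote $x$ in place of $\xi$ in the last two terms); this is not innocuous, since the resulting equality $\frac{d}{dt}q_\lambda(x,\eta)=\frac{d}{dt}q_\lambda(y,\xi)$ is precisely what makes the two (opposite) diagonal entries of the traceless commutator consistent, and it is the relation the paper records explicitly. With these corrections the remaining checks do go through as you predict: the diagonal entries match via $q_\lambda(A^{-1}x,\xi)=\frac1\lambda(1+q_\lambda(x,\xi))$, which uses $G_1=0$, while the upper-right entry needs $q_\lambda(A^{-1}x,\eta)+q_\lambda(A^{-1}y,\xi)=\frac1\lambda\bigl(q_\lambda(x,\eta)+q_\lambda(y,\xi)\bigr)$, which uses $G_2=\langle A^{-1}y,\xi\rangle+\langle A^{-1}x,\eta\rangle=0$ --- so both constraints, not just $G_1$, are genuinely invoked, as you anticipated; the complexified case is indeed verbatim.
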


\begin{proof} To simplify the calculations, consider the time
reparametrization
\begin{equation}\label{new-time}
dt=\langle A^{-2}x,\xi\rangle d\tau.
\end{equation}

The double Jacobi flow in the new time gets the symmetric form
\begin{eqnarray}
&&\nonumber \label{DF1*}x'=
\langle A^{-2}x,\xi\rangle y, \\
&&\nonumber y'=(\sigma-\langle A^{-1}y,\eta\rangle) A^{-1}x-\sigma \langle A^{-2}x,\xi\rangle x,\\
&& \label{DF2*}\xi'=\langle A^{-2}x,\xi\rangle\eta, \\
&&\nonumber\eta'=(\sigma-\langle A^{-1}
y,\eta\rangle)A^{-1}\xi-\sigma \langle A^{-2}x,\xi\rangle\xi,
\end{eqnarray}
where $(\cdot)'=\frac{d}{d\tau}(\cdot)=\langle
A^{-2}x,\xi\rangle\frac{d}{dt}(\cdot)$.

By using the constraints \eqref{Sigma} and the identity
$$
A^{-1}A^{-1}_\lambda=A^{-1}_\lambda
A^{-1}=(A^{-1}+A^{-1}_\lambda)/\lambda,
$$
we obtain the relations
\begin{eqnarray*}
&& (q_\lambda (x,\eta))'=(q_\lambda
(y,\xi))'=\frac{1}{\lambda}(1+\langle A^{-1}_\lambda
x,\xi\rangle)(\sigma-\langle
A^{-1}y,\eta\rangle)\\
&& \qquad\qquad\qquad   +\langle A^{-2}x,\xi\rangle (\langle y,
A^{-1}_\lambda\eta\rangle-\sigma\langle x,A^{-1}_\lambda \xi\rangle),\\
&& (q_\lambda(y,\eta))'=\frac{1}{\lambda}(\langle A^{-1}_\lambda
x,\eta\rangle+\langle A^{-1}_\lambda y,\xi\rangle)(\sigma-\langle
A^{-1}y,\eta\rangle)\\
&& \qquad\qquad\qquad\qquad-\sigma\langle A^{-2}
x,\xi\rangle(\langle x,A^{-1}_\lambda \eta\rangle+\langle
A^{-1}_\lambda y,\xi\rangle),\\
&& (q_\lambda(x,\xi))'=\langle A^{-2} x,\xi\rangle(\langle
x,A^{-1}_\lambda \eta\rangle+\langle A^{-1}_\lambda y,\xi\rangle),
\end{eqnarray*}

Now, the reader can simply verify $\mathcal L'=[\mathcal L,\langle
A^{-2} x,\xi\rangle\mathcal A]$.
\end{proof}

Therefore, the coefficients of the invariant polynomials of the
matrix $a(\lambda)\mathcal L(\lambda)$ are the first integrals of
the system. If all of $a_i$ are distinct, the integrals can be
written in the form
\begin{eqnarray*}
&& f_i=y_i\eta_i+\sigma x_i\xi_i+\sum_{j\ne
i}\frac{(y_ix_j-y_jx_i)(\eta_i\xi_j-\eta_j\xi_i)}{a_i-a_j},\label{f_i}\\
&& g_i=y_i\xi_i-x_i\eta_i, \qquad i=0,1,\dots,n.\label{g_i}
\end{eqnarray*}

\begin{thm} \label{T2}
The (complex) double Jacobi flow \eqref{DF1}, \eqref{DF2}
restricted to the invariant variety
\begin{equation}\label{invariant}
\langle A^{-1} x,\eta\rangle=\langle A^{-1} y,\xi\rangle=0.
\end{equation}
implies the matrix equation
\begin{equation}
\dot{\mathcal L}^*(\lambda)=[\mathcal A^*(\lambda),\mathcal
L^*(\lambda)]
\end{equation}
with $(n+1)\times (n+1)$ matrices depending on the parameter
$\lambda$
\begin{eqnarray*}
&& \mathcal L^*(\lambda)=\lambda(y\otimes \xi-x\otimes\eta)+y\otimes \eta+\sigma x\otimes\xi-\sigma A-\lambda^2A,\\
&& \mathcal A^*(\lambda)=\frac{1}{\langle
A^{-2}x,\xi\rangle}\left( A^{-1}y\otimes A^{-1}\xi-A^{-1}x\otimes
A^{-1}\eta+\lambda A^{-1} \right).
\end{eqnarray*}
\end{thm}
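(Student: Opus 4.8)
The plan is to verify the Lax equation $\dot{\mathcal L}^*(\lambda)=[\mathcal A^*(\lambda),\mathcal L^*(\lambda)]$ directly, after first checking that the variety $\langle A^{-1}x,\eta\rangle=\langle A^{-1}y,\xi\rangle=0$ is indeed invariant under the flow. As in the proof of Theorem \ref{T1}, I would pass to the reparametrized time $dt=\langle A^{-2}x,\xi\rangle\,d\tau$ so that the equations take the polynomial form \eqref{DF1*}--\eqref{DF2*}; then it suffices to check $(\mathcal L^*)'=[\langle A^{-2}x,\xi\rangle\,\mathcal A^*,\mathcal L^*]$, where $(\cdot)'=\langle A^{-2}x,\xi\rangle\,\tfrac{d}{dt}(\cdot)$. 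Write $\mathcal B=\langle A^{-2}x,\xi\rangle\,\mathcal A^*=A^{-1}y\otimes A^{-1}\xi-A^{-1}x\otimes A^{-1}\eta+\lambda A^{-1}$, so the target is a polynomial identity in $\lambda$ of degree three.

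Next I would compute $(\mathcal L^*)'$ using \eqref{DF1*}--\eqref{DF2*}. Since $\mathcal L^*$ is a polynomial in $\lambda$ of degree two with coefficients built from rank-one tensors $y\otimes\xi$, $x\otimes\eta$, $y\otimes\eta$, $x\otimes\xi$ and the constant $A$, the derivative is obtained by Leibniz: for instance $(y\otimes\xi)'=y'\otimes\xi+y\otimes\xi'=(\sigma-\langle A^{-1}y,\eta\rangle)A^{-1}x\otimes\xi-\sigma\langle A^{-2}x,\xi\rangle\,x\otimes\xi+\langle A^{-2}x,\xi\rangle\,y\otimes\eta$, and similarly for the other tensors. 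On the other side, $[\mathcal B,\mathcal L^*]$ is expanded using the rank-one tensor identity $(a\otimes b)(c\otimes d)=\langle b,c\rangle\,a\otimes d$ together with $A^{-1}(a\otimes b)=(A^{-1}a)\otimes b$ and $(a\otimes b)A^{-1}=a\otimes(A^{-1}b)$. The commutator with the $\lambda A^{-1}$ term produces, at order $\lambda^3$, $[\lambda A^{-1},-\lambda^2 A]=0$, which matches the fact that $(\mathcal L^*)'$ has no $\lambda^3$ term; at lower orders one collects the coefficients of $\lambda^2$, $\lambda^1$, $\lambda^0$ and checks term-by-term equality.

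The key structural input making the coefficients match is the pair of constraints. The restriction $\langle A^{-1}x,\eta\rangle=\langle A^{-1}y,\xi\rangle=0$ (together with the defining relations \eqref{Sigma}, in the form $\langle A^{-1}x,\xi\rangle=1$ and $\langle A^{-1}y,\xi\rangle+\langle A^{-1}x,\eta\rangle=0$, the latter now being automatic) is exactly what kills the scalar factors of the form $\langle A^{-1}y,\xi\rangle$ or $\langle A^{-1}x,\eta\rangle$ that arise when contracting the rank-one tensors through $A^{-1}$; without them the commutator would produce extra terms absent from $(\mathcal L^*)'$. So before the main computation I would verify invariance: differentiating $\langle A^{-1}x,\eta\rangle$ and $\langle A^{-1}y,\xi\rangle$ along \eqref{DF1*}--\eqref{DF2*} and using $\langle A^{-1}x,\xi\rangle=1$, one finds their derivatives vanish on the variety, so the flow is tangent to it.

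I expect the main obstacle to be purely bookkeeping: correctly expanding the degree-two matrix polynomial commutator into its rank-one constituents and reconciling the six-or-so scalar contractions ($\langle A^{-2}x,\xi\rangle$, $\langle A^{-2}y,\xi\rangle$, $\langle A^{-2}x,\eta\rangle$, $\langle A^{-1}y,\eta\rangle$, etc.) that appear on both sides, being careful that the ones that must cancel do so precisely because of \eqref{invariant} and \eqref{Sigma}. There is no conceptual difficulty beyond this; as in Theorem \ref{T1}, the verification is a direct, if somewhat lengthy, computation, and I would present it by displaying $(\mathcal L^*)'$ and $[\mathcal B,\mathcal L^*]$ grouped by powers of $\lambda$ and noting the coefficient-wise agreement, leaving the elementary tensor algebra to the reader.
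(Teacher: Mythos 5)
Your plan is exactly the paper's proof: pass to the reparametrized time $dt=\langle A^{-2}x,\xi\rangle\,d\tau$, differentiate the rank-one tensor blocks of $\mathcal L^*$ by Leibniz, expand the commutator with $\Omega=A^{-1}y\otimes A^{-1}\xi-A^{-1}x\otimes A^{-1}\eta$ plus $\lambda A^{-1}$ using $(a\otimes b)(c\otimes d)=\langle b,c\rangle\,a\otimes d$, observe that the leftover terms carry factors $\langle A^{-1}x,\eta\rangle$ or $\langle A^{-1}y,\xi\rangle$ and hence vanish on \eqref{invariant}, and finally check that \eqref{invariant} is preserved by the flow. This matches the paper's argument step for step, so the proposal is correct and takes essentially the same route.
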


\begin{proof}From \eqref{DF2*} we get
\begin{eqnarray*}
&& (x\otimes \xi)'=\langle A^{-2}
x,\xi\rangle(y\otimes\xi+x\otimes\eta),\\
&& (y\otimes \eta)'=(\sigma-\langle A^{-1} y,\eta\rangle)(A^{-1}x
\otimes \eta+y\otimes A^{-1}\xi)-\sigma\langle
A^{-2}x,\xi\rangle(x\otimes\eta+y\otimes \xi),\\
&& (y\otimes\xi-x\otimes\eta)'=(\sigma-\langle A^{-1}
y,\eta\rangle)( A^{-1}x\otimes\xi-x\otimes A^{-1}\xi).\\
\end{eqnarray*}

Thus,
$$
(\mathcal L^*(\lambda))'=(\sigma-\langle A^{-1}
y,\eta\rangle)\left(\lambda(A^{-1}x \otimes \xi-x\otimes
A^{-1}\xi)+ A^{-1}x \otimes \eta+y\otimes A^{-1}\xi\right).
$$

Further, let us denote
$$
\Omega=A^{-1}y\otimes A^{-1}\xi-A^{-1}x\otimes A^{-1}\eta.
$$
Then, by using the constraints \eqref{Sigma}, we get
\begin{eqnarray*}
&&[\Omega,y\otimes\eta]=\langle A^{-1} y,\xi\rangle
A^{-1}y\otimes\eta+\langle A^{-1}x,\eta\rangle y\otimes
A^{-1}\eta\\
&&\qquad\qquad\qquad-\langle A^{-1}\eta,y\rangle
(A^{-1}x\otimes\eta+y\otimes A^{-1}\xi),\\
&&[\Omega,x\otimes\xi]=A^{-1}y\otimes \xi-\langle
A^{-1}\eta,x\rangle
A^{-1}x\otimes\xi-\langle A^{-1} y,\xi\rangle x\otimes A^{-1}\xi+x\otimes A^{-1}\eta,\\
&&[\Omega,y\otimes\xi-x\otimes \eta]=y\otimes
A^{-1}\eta-A^{-1}y\otimes\eta+ \langle A^{-1}x,\eta\rangle
(A^{-1}x\otimes\eta-x\otimes A^{-1}\eta)\\
&&\qquad\quad+\langle A^{-1}y,\xi\rangle
(A^{-1}y\otimes\xi-y\otimes A^{-1}\xi)- \langle
A^{-1}y,\eta\rangle (A^{-1}x\otimes\xi-x\otimes A^{-1}\xi),\\
&&[\Omega,A]=A^{-1}y\otimes \xi-A^{-1}x\otimes \eta-y\otimes
A^{-1}\xi+x\otimes A^{-1}\eta,
\end{eqnarray*}
implying
\begin{eqnarray*}
&& [\langle A^{-2} x,\xi\rangle\mathcal A^*(\lambda),\mathcal
L^*]=[\Omega+\lambda A^{-1},\lambda(y\otimes
\xi-x\otimes\eta)+y\otimes \eta+\sigma x\otimes\xi-\sigma
A-\lambda^2A]\\
&& \quad=(\sigma-\langle A^{-1}
y,\eta\rangle)\left(\lambda(A^{-1}x \otimes \xi-x\otimes
A^{-1}\xi)+ A^{-1}x \otimes \eta+y\otimes A^{-1}\xi\right)\\
&&\quad +\langle A^{-1} y,\xi\rangle (A^{-1}y\otimes\eta-\sigma
x\otimes A^{-1}\xi)+\langle A^{-1}x,\eta\rangle (y\otimes
A^{-1}\eta-\sigma
A^{-1}x\otimes \xi) \\
 &&\quad+\lambda\left(\langle A^{-1}x,\eta\rangle(A^{-1}x
\otimes \eta-x\otimes A^{-1}\eta)+\langle
A^{-1}y,\xi\rangle(A^{-1}y \otimes \xi-y\otimes A^{-1}\xi)\right).
\end{eqnarray*}

It remains to note that \eqref{invariant} defines an invariant
manifold of the system. Indeed, we have
$$
(\langle A^{-1}x,\eta\rangle)'=\langle A^{-2}x,\xi\rangle\langle
y,A^{-1}\eta\rangle+(\sigma-\langle A^{-1}y,\eta\rangle)\langle
A^{-2}x,\xi\rangle-\sigma\langle A^{-2}x,\xi\rangle=0.
$$
\end{proof}

Note that the matrix $\mathcal L(\lambda)$ is invariant under the
transformations of the phase space given by the
$(\R^*)^{n+1}$-action (respectively $({\C^*})^{n+1}$-action):
$$
(x_i,\xi_i,\eta_i,y_i) \longmapsto (s_i x_i, s_i^{-1}
\xi_i,s_i^{-1} \eta_i,s_i y_i), \qquad s_i\ne 0, \qquad
i=0,1,\dots,n,
$$
while the matrix $\mathcal L^*(\lambda)$ is $\R^*$-invariant
(respectively $\C^*$-invariant):
$$
(x_i,\xi_i,\eta_i,y_i) \longmapsto (s x_i, s^{-1} \xi_i,s^{-1}
\eta_i,s y_i), \qquad s\ne 0, \qquad i=0,1,\dots,n.
$$

\section{The Lax representations of the Jacobi problem}

The equations
$$
x=\xi, \qquad y=\eta
$$
define the invariant manifold of \eqref{DF1}, \eqref{DF2}, so the
double Jacobi flow contains as a subsystem the Jacobi problem
\eqref{H}. Also, for $x=\xi$, $y=\eta$, the condition
\eqref{invariant} is satisfied and the above theorems imply Lax
representations for the Jacobi problem. In particular, when we set
$\sigma=0$, we get the Lax representations for the geodesic flow
on a ellipsoid.

\begin{thm}\label{T3}
The  Jacobi problem \eqref{H} implies the matrix equations
\begin{equation}
\label{LA1} \dot{\mathcal L}(\lambda)=[\mathcal
L(\lambda),\mathcal A(\lambda)]
\end{equation}
and
\begin{equation}
 \label{LA2} \dot{\mathcal
L}^*(\lambda)=[\mathcal A^*(\lambda),\mathcal L^*(\lambda)]
\end{equation}
with $2\times 2$ and $(n+1)\times (n+1)$ matrices depending on the
parameter $\lambda$
\begin{eqnarray*}
&& \mathcal L(\lambda)=
\begin{pmatrix}
q_\lambda(x,y) & q_\lambda(y,y)+\sigma \\
-1-q_\lambda(x,x) & -q_\lambda(y,x)
\end{pmatrix},\\
&& \mathcal A(\lambda)=\frac{1}{\langle A^{-2}x,x\rangle}
\begin{pmatrix}
0 & \frac{1}{\lambda}(\sigma-\langle A^{-1}y,y\rangle)-\sigma\langle A^{-2} x,x\rangle\\
\langle A^{-2} x,x\rangle & 0
\end{pmatrix},\\
&& \mathcal L^*(\lambda)=\lambda(y\wedge x)+y\otimes y+\sigma x\otimes x-\sigma A-\lambda^2A,\\
&& \mathcal A^*(\lambda)=\frac{1}{\langle A^{-2}x,x\rangle}\left(
A^{-1}y\wedge A^{-1}x+\lambda A^{-1} \right),
\end{eqnarray*}
respectively.
\end{thm}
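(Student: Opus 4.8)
The plan is to obtain Theorem \ref{T3} as a direct specialization of Theorems \ref{T1} and \ref{T2} to the diagonal submanifold $\mathcal D=\{(x,\xi,\eta,y)\,\vert\,\xi=x,\ \eta=y\}$, exactly along the lines of the remark preceding the statement.

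First I would check that $\mathcal D$ is invariant under the double Jacobi flow \eqref{DF1}--\eqref{DF2}. Putting $\xi=x$ and $\eta=y$ turns the scalar factor $(\langle A^{-1}y,\eta\rangle-\sigma)/\langle A^{-2}x,\xi\rangle$ into $(\langle A^{-1}y,y\rangle-\sigma)/\langle A^{-2}x,x\rangle$ and makes the $(\dot\xi,\dot\eta)$-equations \eqref{DF2} coincide with the $(\dot x,\dot y)$-equations \eqref{DF1}; hence the vector field is tangent to $\mathcal D$, and the restricted system is precisely the Jacobi problem \eqref{H}. At the same time the defining relations of $T^*\Sigma$ in \eqref{Sigma} restrict on $\mathcal D$ to $\langle A^{-1}x,x\rangle-1=0$ and $2\langle A^{-1}x,y\rangle=0$, that is, to the constraints \eqref{E-constraints} cutting out $T^*E^n$, so the reduction is compatible with the underlying phase spaces.

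Next I would verify the extra hypothesis needed for the $(n+1)\times(n+1)$ representation: on $\mathcal D$ the invariant variety \eqref{invariant} reduces to the single equation $\langle A^{-1}x,y\rangle=0$, which is the constraint $F_2=0$ already in force on $T^*E^n$, so Theorem \ref{T2} applies along the Jacobi flow with no additional assumption. Finally I would substitute $\xi=x$, $\eta=y$ into the Lax matrices of Theorems \ref{T1} and \ref{T2}: using the symmetry $q_\lambda(u,v)=q_\lambda(v,u)$ of the bilinear form and the identity $u\otimes x-x\otimes u=u\wedge x$, the pairs $(\mathcal L,\mathcal A)$ and $(\mathcal L^*,\mathcal A^*)$ collapse term by term to the four matrices displayed in Theorem \ref{T3}, while the matrix equations $\dot{\mathcal L}=[\mathcal L,\mathcal A]$ and $\dot{\mathcal L}^*=[\mathcal A^*,\mathcal L^*]$ descend verbatim to \eqref{LA1} and \eqref{LA2}. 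There is no genuine obstacle here, the whole argument being a bookkeeping exercise; the only step that warrants explicit care is confirming that the \emph{constrained} dynamics, not merely the unconstrained ODE, preserves $\mathcal D$, which is why I would phrase everything in terms of the Euler--Lagrange/Hamiltonian equations rather than through the Dirac bracket \eqref{Dirac_bracket}.
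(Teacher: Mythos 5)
Your argument is correct and is essentially the paper's own proof: the text preceding Theorem \ref{T3} establishes exactly that $x=\xi$, $y=\eta$ is an invariant manifold on which the double Jacobi flow reduces to \eqref{H}, that condition \eqref{invariant} becomes the constraint $F_2=0$, and that Theorems \ref{T1} and \ref{T2} then specialize to \eqref{LA1} and \eqref{LA2}. Your additional bookkeeping (restriction of the constraints \eqref{Sigma} to \eqref{E-constraints}, the symmetry of $q_\lambda$, and $y\otimes x-x\otimes y=y\wedge x$) is consistent with the paper and adds nothing beyond detail.
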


From the Lax representations we obtain the well known form of the
integrals of the Jacobi problem given by Moser \cite{Moser}. Note
that the $2\times 2$ Lax matrix $\mathcal L(\lambda)$ has the
usual Lax matrix form of the {\it Jacobi-Mamford} systems (see
\cite{Mum, Va}). Also note, if all $a_i$ are distinct, the Lax
representations \eqref{LA1} and \eqref{LA2}  are equivalent to the
Jacobi problem \eqref{H},  up to the action of groups $\mathbb
Z_2^{n+1}$, $\mathbb Z_2$  generated by reflections
\begin{eqnarray*}
&&(x_i,y_i)\longmapsto (s_ix_i,s_iy_i), \quad s_i=\pm 1, \qquad
i=0,1,\dots,n,\\
&& (x,y)\longmapsto s(x,y), \qquad\qquad s=\pm 1.
\end{eqnarray*}
The time reperemetrezation \eqref{new-time}, for $x=\xi$,
coincides with the time reparametrization
$dt=\lambda_1\dots\lambda_n d\tau$ used in the integration of the
geodesic flow \cite{Wa, Br, F3}.

\section{The Jacobi-Rosochatius problem on an ellipsoid}

Consider the complex double Jacobi flow \eqref{DF1}, \eqref{DF2}
on the invariant real variety
\begin{equation}\label{CPS}
x=z, \quad \xi=\bar z, \quad y=p, \quad \eta=\bar p.
\end{equation}

The equations \eqref{Sigma}, \eqref{CPS} define the cotangent
bundle $T^*E^{2n+1}$ of the $2n+1$-dimensional real ellipsoid in
the complex space \eqref{CE}
\begin{equation}\label{complex-ellipsoid}
T^*E^{2n+1}: \quad \langle A^{-1}z,\bar z\rangle-1=0, \quad
\langle A^{-1} z,\bar p \rangle+\langle A^{-1}\bar z,p\rangle=0.
\end{equation}

The complex double Jacobi flow restricted to $T^*E^{2n-1}$
\begin{equation}
\label{CP1}\dot z=p, \qquad \dot p=-\frac{\langle A^{-1}p,\bar
p\rangle-\sigma}{\langle A^{-2}z,\bar z\rangle} A^{-1}z-\sigma z,
\end{equation}
describes the motion of a material point on $E^{2n+1}$, under the
influence of the elastic force $-\sigma z$. From Theorem \ref{T1}
we see that the system \eqref{CP1} implies the matrix equation
\begin{equation}\label{LA3}
\dot{\mathcal L}(\lambda)=[\mathcal L(\lambda),\mathcal
A(\lambda)],
\end{equation}
where
\begin{eqnarray*}
 && \nonumber \mathcal L(\lambda)=\begin{pmatrix}
q_\lambda(z,\bar p) & q_\lambda(p,\bar p)+\sigma \\
-1-q_\lambda(z,\bar z) & -q_\lambda(p,\bar z)
\end{pmatrix},\label{CL}
\\
&& \nonumber \mathcal A(\lambda)=\frac{1}{\langle A^{-2}z,\bar
z\rangle}
\begin{pmatrix}
0 & \frac{1}{\lambda}(\sigma-\langle A^{-1}p,\bar p\rangle)-\sigma\langle A^{-2} z,\bar z\rangle\\
\langle A^{-2} z,\bar z\rangle & 0
\end{pmatrix}.\label{CA}
\end{eqnarray*}

The system \eqref{CP1}, as well as the Lax representation
\eqref{LA3}, is invariant with respect to the Hamiltonian torus
action on $T^*E^{2n+1}$:
\begin{equation}\label{torus}
(z_k,p_k)\longmapsto e^{i\varphi_k}(z_k,p_k), \qquad
k=0,1,\dots,n,
\end{equation}
with the momentum mapping
\begin{equation}\label{actions}
\mathbf{h}=(h_0,h_1,\dots,h_n), \qquad h_k=-\frac{i}2
g_k=\frac{i}2(z_k\bar p_k-p_k\bar z_k).
\end{equation}

\subsection{The Jacobi system on a complex projective space}
In particular, the Hamiltonian $h=h_0+\dots+h_n$ induces a
$S^1$-action
\begin{equation}\label{circle}
(z,p)\mapsto e^{i\varphi}(z,p).
\end{equation}

The symplectic reduced space $h^{-1}(0)/S^1$ is simplectomorphic
to the cotangent bundle of the complex projective space $\CP^n$.
The reduced system is a natural mechanical system with the kinetic
energy determined by the ellipsoidal metric on $\CP^n$, the
submersion metric with respect to $S^1$-action, under the
influence of the "elastic" force. From Theorems
\ref{integrabilnost} and \ref{integrabilnost2}, where we identify
$\C^{n+1}\cong \R^{2n+2}$ and set $
A=\diag(a_0,a_0,a_1,a_1,\dots,a_n,a_n)$, we have integrability of
the Jacobi problem \eqref{complex-ellipsoid} for an arbitrary
choice of $a_i$. Similarly, the reduced flow is integrable. A
general treatment of the integrability of the reduced systems is
given in \cite{Zung, JB}. Here, the orbit of the $S^1$-action are
tangent to the isotropic tori $\mathcal T$, which lay in
$h^{-1}(0)$, and $\mathcal T/S^1$ are invariant isotropic tori for
the reduced flow.

In the next statement we describe the reduced system on $\CP^n$.
Let $\pi: \C^{n+1}\setminus\{0\}\to \CP^n$ be the canonical
projection of $w=(w_0,\dots,w_n)\in \C^{n+1}\setminus\{0\}$ to
$[w]=[w_0:\dots:w_n]\in\CP^n$, with respect to the $C^*$-action.

\begin{prop} \label{Hermitian}
The reduced Jacobi problem on $\CP^n$ is a natural
mechanical system with the kinetic energy determined by the metric
$$
\tilde g_A(\pi_* X,\pi_* X)\vert_{[w]}=\frac{\langle w,A\bar
w\rangle\langle X,A\bar X\rangle-\langle X,A\bar w\rangle\langle
w,A\bar X\rangle}{\langle w,A\bar w\rangle\langle w,\bar
w\rangle}, \quad X\in T_w\C^{n+1}\setminus\{0\}
$$
under the influence of the potential field $V([w])=\sigma{\langle
w, A\bar w\rangle}/{2\langle w,\bar w\rangle}$.
\end{prop}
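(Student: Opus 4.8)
The plan is to perform the symplectic reduction of $T^*E^{2n+1}$ by the diagonal $S^1$-action \eqref{circle} at the zero level $h^{-1}(0)$ of the momentum $h=h_0+\dots+h_n=\frac{i}{2}(\langle z,\bar p\rangle-\langle p,\bar z\rangle)$, and to identify the resulting reduced Hamiltonian with the natural mechanical system claimed. First I would recall that the ambient (unreduced) system on $T^*E^{2n+1}$ is governed by the Hamiltonian $H=\langle p,\bar p\rangle+\sigma\langle z,\bar z\rangle$ (the restriction of the double Jacobi flow Hamiltonian under \eqref{CPS}), constrained by \eqref{complex-ellipsoid}; since on the ellipsoid $\langle A^{-1}z,\bar z\rangle=1$, the potential term is $\sigma$ times a constant, so up to an additive constant one may view the potential as zero on $T^*E^{2n+1}$ — but the point of the Proposition is to rewrite everything in a gauge-invariant way on $\CP^n$ using an arbitrary homogeneous representative $w$, where the ellipsoidal constraint is relaxed. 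So the genuine content is: (i) unfold the Marsden–Weinstein quotient $h^{-1}(0)/S^1\cong T^*\CP^n$ concretely, and (ii) compute the push-forward of the kinetic quadratic form and the potential.

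The key computational step is the following. A point of $T^*E^{2n+1}$ is a pair $(z,p)$ with $\langle A^{-1}z,\bar z\rangle=1$ and $\mathrm{Re}\langle A^{-1}z,\bar p\rangle=0$; imposing $h=0$ adds $\mathrm{Im}\langle z,\bar p\rangle=0$, and quotienting by $e^{i\varphi}(z,p)\mapsto$ identifies $(z,p)$ with its $S^1$-orbit. I would parametrize the quotient by the point $[w]=\pi(z)\in\CP^n$ together with the covector; concretely, given $[w]\in\CP^n$ with any representative $w\ne0$, a tangent vector $\pi_*X$ at $[w]$ (with $X\in T_w\C^{n+1}$) lifts, after rescaling $w$ to sit on the ellipsoid and adjusting the vertical $S^1$-component, to a unique $(z,\dot z)$ in the constraint set with $\dot z$ the horizontal lift of $\pi_*X$. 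The submersion (ellipsoidal) metric on $\CP^n$ is by definition the quotient of the ellipsoidal metric $\langle \dot z,A^{-1}\dot z\rangle$ — wait, more precisely one must be careful which metric on $E^{2n+1}$ is being pushed down: the kinetic energy of \eqref{CP1} is $\langle p,\bar p\rangle$, i.e. the \emph{induced Euclidean} metric $\langle\dot z,\dot{\bar z}\rangle$ on $E^{2n+1}\subset\C^{n+1}$ (note $p=\dot z$), not the ``ellipsoidal'' one; the naming ``ellipsoidal metric on $\CP^n$'' refers to the fact that it descends from the round ellipsoid. I would therefore compute the submersion metric explicitly: project $X$ orthogonally (with respect to the relevant Hermitian form) onto the subspace orthogonal to both the vertical direction $iw$ and to the radial direction forced by the ellipsoid constraint $\langle A^{-1}z,\bar z\rangle=1$, namely onto $\{X:\langle A^{-1}w,\bar X\rangle=0,\ \mathrm{the\ }iw\mathrm{\ component\ killed}\}$, then take its squared length. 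The stated formula
$$
\tilde g_A(\pi_*X,\pi_*X)\big|_{[w]}=\frac{\langle w,A\bar w\rangle\langle X,A\bar X\rangle-\langle X,A\bar w\rangle\langle w,A\bar X\rangle}{\langle w,A\bar w\rangle\langle w,\bar w\rangle}
$$
is exactly a Fubini–Study–type expression for the metric twisted by $A$; I would verify it by checking three things: it is invariant under $w\mapsto \mu w$, $X\mapsto\mu X+\nu w$ for $\mu\in\C^*,\nu\in\C$ (well-definedness of a metric on $\CP^n$); it reduces, for $w$ on the ellipsoid $\langle A^{-1}w,\bar w\rangle=1$ and $X$ horizontal, to $\langle X,\bar X\rangle$; and its $A=\mathbf I$ specialization is the standard Fubini–Study metric (up to normalization). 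The potential $V([w])=\sigma\langle w,A\bar w\rangle/(2\langle w,\bar w\rangle)$ is obtained similarly: on the ellipsoid the unreduced potential is $\frac{\sigma}{2}\langle z,\bar z\rangle$, and writing $z=w/\sqrt{\langle A^{-1}w,\bar w\rangle}$... hmm, that would give $\langle w,\bar w\rangle/\langle A^{-1}w,\bar w\rangle$, so I would need to reconcile the $A$ versus $A^{-1}$ — most likely the paper's convention scales $w$ so that $\langle A^{-1}w,\bar w\rangle$ plays the role of $\langle w,A\bar w\rangle^{-1}$ after a change $w\mapsto A^{-1/2}w$ or similar; I would track this normalization carefully rather than guess, and present the reduction in whichever homogeneous coordinates make both formulas come out as stated.

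The main obstacle I anticipate is bookkeeping the two simultaneous ``gauge'' freedoms — the $\C^*$-scaling defining $\CP^n$ and the residual radial rescaling needed to land back on the ellipsoid $\langle A^{-1}z,\bar z\rangle=1$ — and making sure the horizontal lift used for the submersion metric is orthogonal with respect to the correct form (Euclidean $\langle\cdot,\bar\cdot\rangle$ for the metric, but the constraint involves $A^{-1}$), since a careless choice produces a formula with $A^{-1}$ where $A$ should appear. Once the correct horizontal complement is pinned down, the computation is a short exercise in Hermitian linear algebra: write $X=\alpha w+\beta v+X^\perp$ decomposing along $w$, along the ellipsoid-normal direction, and the horizontal part $X^\perp$, and read off $\|X^\perp\|^2$. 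I would also note that by general Marsden–Weinstein theory the reduced flow is automatically Hamiltonian for the reduced symplectic form on $T^*\CP^n$, and that the reduced Hamiltonian is kinetic-plus-potential because $H$ is (the momentum map $h$ being quadratic and the $S^1$-action cotangent-lifted, the reduced space is again a cotangent bundle with the canonical form, and $H$ descends to a fiberwise-quadratic function plus a function of the base). This last observation is what licenses calling the reduced system ``a natural mechanical system'' and lets me reduce the proof to the purely algebraic identification of $\tilde g_A$ and $V$ carried out above.
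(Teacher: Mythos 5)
Your overall strategy (reduce by the $S^1$-action at zero momentum, realize $h^{-1}(0)/S^1\cong T^*\CP^n$, and identify the reduced kinetic form and potential by horizontally projecting and doing Hermitian linear algebra) is the same as the paper's, but you leave open exactly the step that makes the stated formulas come out, and that step is the whole point. The paper first changes variables $z_i=\sqrt{a_i}\,w_i$, which turns the ellipsoid $\langle A^{-1}z,\bar z\rangle=1$ into the unit sphere $\langle w,\bar w\rangle=1$, the Euclidean kinetic energy $\langle \dot z,\dot{\bar z}\rangle$ into $\langle \dot w,A\dot{\bar w}\rangle$, and the Hook potential $\tfrac{\sigma}{2}\langle z,\bar z\rangle$ into $\tfrac{\sigma}{2}\langle w,A\bar w\rangle$; the projection $\pi$ in the Proposition is applied to this $w=A^{-1/2}z$, not to $z$. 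You notice the symptom of this (your attempted computation of the potential with $w\propto z$ gives $\sigma\langle w,\bar w\rangle/2\langle A^{-1}w,\bar w\rangle$, which is not the stated $V$) and explicitly defer the resolution (``I would track this normalization carefully rather than guess''), so the $A$-versus-$A^{-1}$ discrepancy is acknowledged but not closed. Without the substitution, neither the metric formula nor the potential can be verified; with it, both drop out almost immediately, since the paper then extends the Lagrangian to the $\C^*$-invariant expression $L=\bigl(\langle\dot w,A\dot{\bar w}\rangle-\sigma\langle w,A\bar w\rangle\bigr)/2\langle w,\bar w\rangle$ and performs Lagrange--Routh reduction for the full $\C^*$-action, which simultaneously disposes of the two ``gauge'' freedoms (the $S^1$ phase and the radial rescaling) that you flag as the main bookkeeping obstacle, and the submersion metric is read off from $h_A(X,X)-h_A(w,X)h_A(X,w)/h_A(w,w)$ with $h_A(X,Y)=\langle X,A\bar Y\rangle/\langle w,\bar w\rangle$.

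Two further concrete problems. First, your claim that on the ellipsoid the potential term is ``$\sigma$ times a constant'' is false: $\langle z,\bar z\rangle$ is not constant on $\{\langle A^{-1}z,\bar z\rangle=1\}$ unless $A$ is a multiple of the identity; the Hook potential genuinely varies there, which is why the reduced system has a nontrivial $V$. Second, your consistency check (ii) is misstated: for the formula as given, taking $w$ on the ellipsoid $\langle A^{-1}w,\bar w\rangle=1$ and $X$ with $\langle X,A\bar w\rangle=0$ yields $\langle X,A\bar X\rangle/\langle w,\bar w\rangle$, not $\langle X,\bar X\rangle$; the correct normalization (after the substitution) is $w$ on the unit sphere, where the formula returns the $A$-twisted kinetic form $\langle X,A\bar X\rangle$, in agreement with the transformed Lagrangian. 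So the proposal needs the change of variables made explicit and the erroneous side-claims corrected before it constitutes a proof.
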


\begin{proof} Under the change of variables $z_i=\sqrt{a_i} w_i$,
the Jacobi problem transforms to the system on a sphere $S^{2n+1}:
\, \langle w,\bar w\rangle=1$ with the Lagrangian function
$$
L(w,\dot w)=\frac12\langle \dot w,A\dot{\bar w}\rangle
-\frac{\sigma}2\langle w,A\bar w\rangle.
$$

It is well known that the reduced system on $h^{-1}(0)/S^1 \cong
T^*\CP^n$ can be described by the $S^1$-Lagrange-Routh reduction
with a zero value of the $S^1$-momentum mapping (e.g., see
\cite{AKN}). Since it is convenient to work with homogeneous
coordinates, we firstly extend $L$ to the $C^*$-invariant
Lagrangian function
\begin{equation}\label{LE}
L=\frac{\langle \dot w,A\dot w\rangle}{2\langle w,\bar w\rangle}
-\frac{\sigma\langle w,A\bar w\rangle}{2\langle w,\bar w\rangle},
\end{equation}
defined on $\C^{n+1}\setminus\{0\}$ and then perform the
Lagrange-Routh reduction with respect to the $C^*$-action.

Obviously, the reduced potential is $V([w])=\sigma{\langle w,
A\bar w\rangle}/{2\langle w,\bar w\rangle}$. Further, the
Lagrangian \eqref{LE} defines the Riemannian and Hermitian metrics
$$
g_A(X,Y)=\frac{\langle X,A \bar Y\rangle+\langle \bar
X,AY\rangle}{2\langle w,\bar w\rangle}, \quad
h_A(X,Y)=\frac{\langle X,A \bar Y\rangle}{\langle w,\bar
w\rangle}, \quad  X, Y\in T_w\C^{n+1}\setminus\{0\},
$$
respectively. The reduced system is a natural mechanical system
$(\CP^n,\tilde g_A, V([w]))$, where $\tilde g_A$ is the submersion
metric with respect to the $C^*$-action \cite{AKN}.

At every $w \in \C^{n+1}\setminus\{0\}$ we have a decomposition
$$
T_w \C^{n+1}\setminus\{0\}=\C^{n+1}=\mathcal V_w \oplus \mathcal
H_w,
$$
where $\mathcal V_w$ is the tangent space of the orbit of
$C^*$-action through $w$ ({\it vertical space} at $w$) and
$\mathcal H_w$ is its $g_A$-orthogonal complement ({\it horizontal
space} at $w$). Since $\mathcal V_w$ is the complex line through
$w$, its real $g_A$-orthogonal complement coincides with its
$h_A$-Hermitian orthogonal complement
$$
\mathcal H_w=\{ X \,\vert\, h_A(X,w)=0\}.
$$

Let $X,Y\in T_w\C^{n+1}\setminus\{0\}$. By definition, the
submersion metric $\tilde g_A(\pi_* X,\pi_*)\vert_{[w]}$ on
$\CP^n$ by is equal to $g_A(X',Y')\vert_w$, where $X'$ and $Y'$
are horizontal components of $X$ and $Y$
$$
X'=X-\frac{h_A(X,w)}{h_A(w,w)}w, \quad
Y'=Y-\frac{h_A(Y,w)}{h_A(w,w)}w.
$$

Therefore
\begin{eqnarray*}
\tilde g_A(\pi_* X,\pi_* X)\vert_{[w]} &=&
g_A(X',X')\vert_w=h_A(X',X')\vert_{w}=h_A(X',X)\\
&=& h_A(X,X)-\frac{h_A(w,X)h_A(X,w)}{h_A(w,w)}\\
&=& \frac{\langle w,A\bar w\rangle\langle X,A\bar X\rangle-\langle
X,A\bar w\rangle\langle w,A\bar X\rangle}{\langle w,A\bar
w\rangle\langle w,\bar w\rangle}
\end{eqnarray*}
\end{proof}

\begin{rem}\label{H-L}
Note that $(\CP^n, \tilde g_A)$ is an example of a
Hermite-Liouville manifold (see \cite{IK, T}). In particular, if
$A$ is the identity matrix, then $\tilde g_{A}$ is the standard
Fubini-Study metric on $\CP^n$. The integrability of the geodesic
flow of the Fubini-Study metric is proved by Thimm \cite{Th} and
Boyer, Kalnins and Winternitz \cite{BKW}. Further, besides the
Hook potential in \eqref{CP1}, we can add other separable
polynomial potentials $V(z)$ (see the last section), and by the
Maupertuis principle (e.g., see \cite{AKN}),
$(\CP^n,(c-V([w]))\tilde g_A)$, $c>\max V([w])$ are examples of
Hermite-Liouville manifolds as well.
\end{rem}

\subsection{Reduction to the Jacobi-Rosochatius problem.}
Now we shall perform the reduction with respect to the torus
action \eqref{torus}. It is well known that we obtain a natural
mechanical system under the influence of the Rosochatius potential
\cite{R, Moser, K}.

Introduce the canonical polar coordinates
$(x_k,\varphi_k,y_k,h_k)$, where $h_k$ are given by
\eqref{actions} and $\varphi_k,y_k$ by
\begin{equation}\label{polar}
z_k=x_ke^{i\varphi_k}, \quad p_k=y_k
e^{i\varphi_k}+i\frac{h_k}{x_k} e^{i\varphi_k} , \qquad
k=0,1,\dots,n.
\end{equation}

The Hamiltonian of the Jacobi problem in new coordinates reads
$$
H=\frac12\langle p,\bar p \rangle+\frac{\sigma}{2}\langle z,\bar
z\rangle=\frac12\langle y,y\rangle+\frac{\sigma}{2}\langle
x,x\rangle+\frac12\sum_{k=0}^n\frac{h_k^2}{x_k^2}.
$$

We see that $\varphi_k$ are cyclic variables of the system.
Consider the level set of the momentum mapping
$$
(T^*E^{2n+1})_\mu: \qquad \mathbf h=(\mu_0,\mu_1,\dots,\mu_n)
$$
(some of constants $\mu_k$ may be equal to zero).

The equations of the system on $(T^*E^{2n+1})_\mu$ separate on
{\it reconstruction equations}
$$
x_k^2\dot\varphi_k=\mu_k, \qquad k=0,1,\dots,n.
$$
and the {\it reduced system} on $(T^*E^{2n+1})_\mu/\T^{n+1}$ in
variables $(x,y)$:
\begin{equation}\label{RH}
\dot x_k=y_k, \quad \dot y_k=-\frac{\langle
A^{-1}y,y\rangle+\langle A^{-1}
\frac{\mu}x,\frac{\mu}x\rangle-\sigma}{\langle A^{-2}x,x\rangle}
a_k^{-1}x_k-\sigma x_k+\frac{\mu_k^2}{x_k^3},
\end{equation}
where ${\mu}/x=(\mu_0/x_0,\mu_1/x_1,\dots,\mu_n/x_n)$.

Note that, according to the definition \eqref{polar}, we have
$x_k\ge 0$. Also, the variables $(x,y)$ satisfy the constraints
\eqref{E-constraints}. However, we can consider the system
\eqref{RH} on the whole cotangent bundle \eqref{E-constraints}.
Then it represents the motion of a material point on a ellipsoid
with the potential energy $V(x)$ having two terms: the Hook and
the Rosochatius potential \eqref{Rpot}.

Applying the change of variables \eqref{polar} to the Lax
representation \eqref{LA3} on the invariant set
$(T^*E^{2n+1})_\mu$, after subtracting the multiple of the
identity matrix from $\mathcal L(\lambda)$, we get the following
statement.

\begin{thm}\label{T6} Suppose that the eigenvalues $a_i$ of the matrix $A$
are distinct. Up to the action of the group $\mathbb Z_2^{n+1}$
generated by the reflections
$$
(x_i,y_i)\longmapsto (s_ix_i,s_iy_i), \quad s_i=\pm 1, \qquad
i=0,1,\dots,n,
$$
the Jacobi-Rosochatius problem \eqref{RH} is equivalent to the
matrix equation
\begin{equation}\label{LA4}
\dot{\mathcal L}(\lambda)=[\mathcal L(\lambda),\mathcal
A(\lambda)]
\end{equation}
with $2\times 2$ matrices depending on the parameter $\lambda$
\begin{eqnarray*}
&& \mathcal L(\lambda)=
\begin{pmatrix}
q_\lambda(x,y) & q_\lambda(y,y)+q_\lambda(\frac{\mu}{x},\frac{\mu}{x})+\sigma \\
-1-q_\lambda(x,x) & -q_\lambda(y,x)
\end{pmatrix},\\
&& \mathcal A(\lambda)=\frac{1}{\langle A^{-2}x,x\rangle}
\begin{pmatrix}
0 & \frac{1}{\lambda}(\sigma-\langle A^{-1}y,y\rangle-\langle A^{-1}\frac{\mu}x,\frac{\mu}x\rangle)-\sigma\langle A^{-2} x,x\rangle\\
\langle A^{-2} x,x\rangle & 0
\end{pmatrix}.
\end{eqnarray*}
\end{thm}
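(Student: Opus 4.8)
\emph{Proof proposal.}
The plan is to substitute the canonical polar coordinates \eqref{polar} directly into the Lax representation \eqref{LA3}, which we already know holds along \eqref{CP1} by Theorem \ref{T1}, and then restrict to the momentum level set. Everything reduces to two elementary facts about the form $q_\lambda$ and the Hermitian products occurring in $\mathcal L$ and $\mathcal A$. First, for the ``diagonal'' entries the phases $e^{\pm i\varphi_k}$ cancel outright: $z_k\bar z_k=x_k^2$ and $p_k\bar p_k=y_k^2+h_k^2/x_k^2$, so $q_\lambda(z,\bar z)=q_\lambda(x,x)$, $q_\lambda(p,\bar p)=q_\lambda(y,y)+q_\lambda(\frac{h}{x},\frac{h}{x})$, $\langle A^{-2}z,\bar z\rangle=\langle A^{-2}x,x\rangle$, and $\langle A^{-1}p,\bar p\rangle=\langle A^{-1}y,y\rangle+\langle A^{-1}\frac{h}{x},\frac{h}{x}\rangle$. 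Second, for the ``mixed'' entries the phases again cancel but an imaginary term survives: $z_k\bar p_k=x_ky_k-ih_k$ and $p_k\bar z_k=x_ky_k+ih_k$, hence $q_\lambda(z,\bar p)=q_\lambda(x,y)-ic(\lambda)$ and $q_\lambda(p,\bar z)=q_\lambda(y,x)+ic(\lambda)$ with $c(\lambda)=\sum_k h_k/(\lambda-a_k)$. In particular all $\varphi_k$ disappear from $\mathcal L(\lambda)$ and $\mathcal A(\lambda)$, as must happen by their $\T^{n+1}$-invariance.

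Putting $h_k=\mu_k$ on $(T^*E^{2n+1})_\mu$, the matrix $\mathcal A(\lambda)$ becomes verbatim the one in the statement, while
\[
\mathcal L(\lambda)=\mathcal L_0(\lambda)-ic(\lambda)\,\mathbf{I}_2,\qquad
c(\lambda)=\sum_{k=0}^n\frac{\mu_k}{\lambda-a_k},
\]
where $\mathcal L_0(\lambda)$ denotes the matrix displayed in \eqref{LA4}. Now $c(\lambda)$ is, for each fixed $\lambda$, a constant on $(T^*E^{2n+1})_\mu$ because the $h_k$ are the components of the momentum map \eqref{actions}, hence first integrals, and $\mathbf{I}_2$ is central. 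Therefore $[\mathcal L(\lambda),\mathcal A(\lambda)]=[\mathcal L_0(\lambda),\mathcal A(\lambda)]$ and $\dot{\mathcal L}(\lambda)=\dot{\mathcal L}_0(\lambda)$, so the Lax equation \eqref{LA3}, restricted to $(T^*E^{2n+1})_\mu$ and pushed down to the $(x,y)$ variables, is precisely \eqref{LA4}. Since the reduced flow in $(x,y)$ is the Jacobi-Rosochatius system \eqref{RH}, this shows \eqref{RH} implies \eqref{LA4}.

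It remains to read off the converse ``up to $\mathbb Z_2^{n+1}$''. The entries of $\mathcal L_0(\lambda)$ and $\mathcal A(\lambda)$ depend on $(x,y)$ only through the quantities $x_k^2$, $x_ky_k$, $y_k^2$ and $\mu_k^2/x_k^2$ (these are the residues at $\lambda=a_k$ of $q_\lambda(x,x)$, $q_\lambda(x,y)$, $q_\lambda(y,y)$, $q_\lambda(\frac{\mu}{x},\frac{\mu}{x})$), and all of them are invariant under the reflections $(x_k,y_k)\mapsto(-x_k,-y_k)$. Hence the assignment of a Lax solution to a solution of \eqref{RH} is $2^{n+1}$-to-one with fibres the $\mathbb Z_2^{n+1}$-orbits; conversely, from a solution of \eqref{LA4} one recovers $x_k^2$ and $x_ky_k$, so the sign of $x_k$ is the only remaining freedom, and it is then forced on $y_k$, the other entries being automatically consistent with \eqref{RH}. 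This is the same bookkeeping as in the remark following Theorem \ref{T3}. I expect the only genuinely delicate points to be the two already flagged: checking that the discarded term is a scalar (not merely diagonal) matrix, and that its coefficient $c(\lambda)$ is a constant of motion on the reduced level set; both are immediate once one identifies the $h_k$ with the momenta \eqref{actions}.
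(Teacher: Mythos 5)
Your proposal is correct and follows essentially the same route as the paper, which proves Theorem \ref{T6} in a single sentence: substitute the polar change of variables \eqref{polar} into the Lax pair \eqref{LA3} on the level set $(T^*E^{2n+1})_\mu$ and subtract the multiple of the identity (your $-ic(\lambda)\,\mathbf{I}_2$ with $c(\lambda)=\sum_k \mu_k/(\lambda-a_k)$, harmless since it is constant along the reduced flow and central). Your additional bookkeeping for the ``equivalence up to $\mathbb Z_2^{n+1}$'' is consistent with the remark the paper makes after Theorem \ref{T3} and merely spells out what the paper asserts without proof.
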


The integrals obtained by \eqref{LA4} can be written in the form
\begin{equation}\label{f-JR}
f_i=y_i^2+\sigma x_i^2+\frac{\mu_i^2}{x_i^2}+\sum_{j\ne
i}\frac{1}{a_i-a_j}\left((y_ix_j-y_jx_i)^2+\frac{\mu_i^2x^2_j}{x^2_i}+\frac{\mu^2_jx^2_i}{x^2_j}\right),
\end{equation}
$i=0,1,\dots,n$. This is a commutative set of functions, both with
respect to the canonical Poisson brackets $\{\cdot,\cdot\}$ and
the Dirac-Poisson bracket $\{\cdot,\cdot\}_D$.

Let us note that a possible alternative approach in the
construction of the Lax representations for the Jacobi-Rosochatius
problem \eqref{RH} is by using the Lax representations of the
Neumann problem (e.g., see \cite{Su}) and the well known
correspondence between the Neumann problem and the geodesic flow
on an ellipsoid \cite{Knorr1, K}. An algebro-geometric study of
the Neumann system on a sphere $S^n$ with the addition of the
Rosochatius potential is given in \cite{GHWH}.

\section{Symmetric ellipsoids. Complete integrability}

Consider the case of a symmetric ellipsoid:
\begin{eqnarray}
&& \nonumber a_i=\alpha_0,  \qquad i\in I_0=\{0,\dots,k_0-1\},\\
&& \label{symmetric} a_i=\alpha_1, \qquad i\in I_1=\{k_0,\dots,k_0+k_1-1\},\\
&& \nonumber \dots \\
&& \nonumber a_i=\alpha_r, \qquad i\in
I_r=\{k_0+\dots+k_{r-1},\dots,k_0+\dots+k_{r-1}+k_r-1\},
\end{eqnarray}
$\alpha_i\ne\alpha_j, i\ne j$, $k_0+\dots+k_r=n+1$. Then the
Jacobi-Rosochatius problem \eqref{RH} is not equivalent to the Lax
representation \eqref{LA4}, but still implies it. From the Lax
representation \eqref{LA4}, the invariants of $\mathcal
L(\lambda)$ are integrals of the flow. By using the relation
\begin{eqnarray}
\nonumber \det\mathcal L(\lambda)&=&
(1+q_{\lambda}(x,x))\left(q_\lambda(y,y)+q_\lambda\left(\frac{\mu}{x},
\frac{\mu}{x}\right)+\sigma\right)-q_\lambda(x,y)^2\\
&=&\sigma+\sum_{s=0}^r \frac{\tilde
f_s}{\lambda-\alpha_s}+\frac{P_s}{(\lambda -
\alpha_s)^2}+\sum_{i=0}^n\frac{\mu_i^2}{(\lambda-a_i)^2},
 \label{moser-rel*}
\end{eqnarray}
we get the integrals
\begin{eqnarray}
\label{f-JR*}
&& \tilde f_s=\sum_{i\in I_s} \left(y_i^2+\sigma
x_i^2+\frac{\mu_i^2}{x_i^2}+\sum_{j\notin
I_s}\frac{P_{ij}}{a_i-a_j}\right),\\
&& \label{P-s} P_s=\sum_{i,j\in I_s,i<j} P_{ij}, \qquad k_s=\vert
I_s\vert \ge 2,
\end{eqnarray}
where $P_{ij}$ are given by
\begin{equation*}
P_{ij}=(y_ix_j-x_iy_j)^2+\frac{\mu_i^2x^2_j}{x^2_i}+\frac{\mu^2_jx^2_i}{x^2_j},
\end{equation*}
while for $\vert I_s\vert=1$ we set $P_s \equiv 0$. Whence, we
have $\rho$ nontrivial integrals among $P_s$, where $\rho$ is the
number of sets $I_s$ for which $k_s=\vert I_s\vert\ge 2$.

In terms of integrals $\tilde f_s$, the Hamiltonian of the system
can be express as
$$
H=\frac12\sum_{s=0}^r \tilde f_s.
$$

\begin{thm} Apart of the integrals arising from the Lax
representation, the rational functions
\begin{equation}\label{Ros}
P_{s,ij}:= P_{ij}, \qquad i,j\in I_s
\end{equation}
 are integrals of the Jacobi-Rosochatius problem
\eqref{RH}. The functions $\tilde f_s, P_s$ are central functions
within the set of integrals $\mathcal F=\{\tilde f_s, P_{s,ij}\}$:
\begin{eqnarray*}
&&  \{\tilde f_{s_1},\tilde f_{s_2}\}_D=0,\qquad \{\tilde
f_{s_1},P_{s_2}\}_D=0, \qquad \{P_{s_1},P_{s_2}\}_D=0,\\
&&  \{\tilde f_{s_1}, P_{s_2,ij}\}_D=0, \qquad \{P_{s_1},
P_{s_2,ij}\}_D=0.
\end{eqnarray*}
Also, the functions $P_{s_1,i_1j_1}$ and $P_{s_2,i_2j_2}$ mutually
commute for distinct $s_1$ and $s_2$,
\begin{equation}\label{cetvrta}
\{P_{s_1,i_1 j_1},P_{s_2,i_2 j_2}\}_D=0.
\end{equation}
\end{thm}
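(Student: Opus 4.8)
I would split the statement into its two assertions — integrability of the $P_{s,ij}$ and the list of Poisson identities — and handle them by different means. The integrability of $P_{s,ij}=P_{ij}$, $i,j\in I_s$, is a one‑line computation: since $a_i=a_j=\alpha_s$, differentiating $x_iy_j-x_jy_i$ along \eqref{RH} makes the ``force'' contribution drop out (the common scalar factor multiplying $a_k^{-1}x_k$, and the $-\sigma x_k$ terms, cancel because $a_i=a_j$), leaving $\tfrac{d}{dt}(x_iy_j-x_jy_i)=\mu_j^2x_i/x_j^3-\mu_i^2x_j/x_i^3$; on the other hand $\tfrac{d}{dt}\bigl(\mu_i^2x_j^2/x_i^2+\mu_j^2x_i^2/x_j^2\bigr)=2(x_iy_j-x_jy_i)\bigl(\mu_i^2x_j/x_i^3-\mu_j^2x_i/x_j^3\bigr)$, and adding the contributions to $\tfrac{d}{dt}P_{ij}$ gives $\dot P_{ij}=0$. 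The cancellation is precisely what the symmetry $a_i=a_j$ buys.

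For the Poisson relations the plan is to lift everything through the reduction of Section 4 and use a larger symmetry upstairs. On $T^*E^{2n+1}$ with the symmetric matrix $A$, besides the torus \eqref{torus} there acts the group $G=U(k_0)\times\cdots\times U(k_r)$, the $s$‑th factor rotating the block of coordinates $(z_i)_{i\in I_s}$; it preserves $\langle A^{-1}z,\bar z\rangle$ and the Hamiltonian of \eqref{CP1}, so \eqref{CP1} is $G$‑invariant and carries an equivariant momentum map $\Phi=(\Phi_0,\dots,\Phi_r)$, $\Phi_s\colon T^*E^{2n+1}\to\mathfrak u(k_s)^*$, with \eqref{torus} the maximal torus of $G$ and \eqref{actions} the diagonal entries of the $\Phi_s$. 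In the polar coordinates \eqref{polar} I would identify, up to additive constants depending only on $\mu$: (i) $\tilde f_s$ with the residue $\mathrm{res}_{\lambda=\alpha_s}\det\mathcal L(\lambda)$ of \eqref{moser-rel*}, which by the shape of $\mathcal L(\lambda)$ in \eqref{LA3} is a polynomial in $\langle A^{-1}z,\bar z\rangle$, $\langle A^{-1}p,\bar p\rangle$, $\langle A^{-1}z,\bar p\rangle$ and the block sums $\sum_{i\in I_t}|z_i|^2$, $\sum_{i\in I_t}|p_i|^2$, $\sum_{i\in I_t}z_i\bar p_i$, hence is $G$‑invariant; (ii) $P_s$ with one half of the double residue $\bigl(\sum_{i\in I_s}|z_i|^2\bigr)\bigl(\sum_{i\in I_s}|p_i|^2\bigr)-\bigl|\sum_{i\in I_s}z_i\bar p_i\bigr|^2$, also $G$‑invariant (and, up to a constant, the quadratic Casimir of $\Phi_s$); (iii) $P_{s,ij}$ with $|z_i\bar p_j-\bar z_jp_i|^2=4|(\Phi_s)_{ij}|^2$, a quadratic polynomial in the entries of $\Phi_s$ (on a level set $\mathbf h=\mu$ one has $|z_i\bar p_j-\bar z_jp_i|^2=P_{ij}+2\mu_i\mu_j$). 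Since Poisson brackets of $\T^{n+1}$‑invariant functions on $T^*E^{2n+1}$ descend to the reduced bracket on $T^*E^n$, which in the description \eqref{RH} is the Dirac bracket $\{\cdot,\cdot\}_D$, it then suffices to prove the relations upstairs.

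Upstairs they are formal. Because $\{\det\mathcal L(\lambda),\det\mathcal L(\mu)\}=0$ for all $\lambda,\mu$ — $\mathcal L(\lambda)$ being of the $2\times2$ Jacobi--Mumford type with its standard rational $r$‑matrix, the same commutativity already invoked for \eqref{f-JR} — all coefficients of the partial fraction expansion \eqref{moser-rel*}, in particular the $\tilde f_s$ and $P_s$, Poisson‑commute pairwise, giving $\{\tilde f_{s_1},\tilde f_{s_2}\}=\{\tilde f_{s_1},P_{s_2}\}=\{P_{s_1},P_{s_2}\}=0$. The remaining identities follow from the elementary fact that a $G$‑invariant function Poisson‑commutes with every polynomial in the entries of the $G$‑momentum map: applied to the factor $U(k_{s_2})$, the $U(k_{s_2})$‑invariant functions $\tilde f_{s_1}$ and $P_{s_1}$ commute with the representative $4|(\Phi_{s_2})_{ij}|^2$ of $P_{s_2,ij}$, and for $s_1\ne s_2$ the $U(k_{s_2})$‑invariant function $4|(\Phi_{s_1})_{i_1j_1}|^2$ commutes with $4|(\Phi_{s_2})_{i_2j_2}|^2$, which is \eqref{cetvrta}; replacing $4|(\Phi_s)_{ij}|^2$ by the exact representative $P_{s,ij}=4|(\Phi_s)_{ij}|^2-2h_ih_j$ changes none of these brackets, since $h_i$ is a component of the $\T^{n+1}$‑momentum and so Poisson‑commutes with every $\T^{n+1}$‑invariant function. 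The one laborious point in this scheme is (i)--(iii): carrying out the substitution \eqref{polar} and matching the explicit residues of $\det\mathcal L(\lambda)$, and the explicit entries of the $U(k_s)$‑momentum maps, with $\tilde f_s$, $P_s$, $P_{s,ij}$ while keeping track of the (inessential) additive constants; granted that, all the commutation statements are immediate consequences of the $r$‑matrix structure and of standard momentum‑map calculus.
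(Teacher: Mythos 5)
Your proposal is sound in two of its three ingredients, and those two are genuinely different from the paper's argument. The direct verification that $\dot P_{ij}=0$ along \eqref{RH} when $a_i=a_j$ is correct (the two contributions you compute cancel exactly). Likewise, the lift to $T^*E^{2n+1}$ and the use of the $U(k_0)\times\dots\times U(k_r)$ symmetry is a legitimate and rather elegant replacement for the paper's mechanism: the block sums $\sum_{i\in I_t}|z_i|^2$, $\sum_{i\in I_t}|p_i|^2$, $\sum_{i\in I_t}z_i\bar p_i$ do exhaust the dependence of $\det\mathcal L(\lambda)$ upstairs, the identifications (i)--(iii) (including $|z_i\bar p_j-\bar z_jp_i|^2=P_{ij}+2h_ih_j$) check out, the restricted momentum maps are momentum maps for the Dirac bracket because the unitary blocks preserve the constraints, and so all brackets of the form $\{\tilde f_{s_1},P_{s_2,ij}\}_D$, $\{P_{s_1},P_{s_2,ij}\}_D$ and \eqref{cetvrta} do follow from ``invariant function commutes with polynomials in momentum entries''. (In fact, since $P_{s}$ itself is, modulo polynomials in the $h_k$, a polynomial in the entries of $\Phi_s$, the same calculus already gives $\{\tilde f_{s_1},P_{s_2}\}_D=\{P_{s_1},P_{s_2}\}_D=0$ without any appeal to a Lax algebra.) The paper instead obtains all of these relations by deforming the symmetric ellipsoid into non-symmetric ones and passing to the limit along carefully chosen deformations of the $a_i$; your route avoids those ad hoc deformations, at the price of the bookkeeping you acknowledge in (i)--(iii) and of the reduction argument (which, strictly, only identifies the brackets on the open set $x_k>0$, $\mu_k\ne 0$ --- harmless, since everything is rational, but worth a sentence).

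The genuine gap is the remaining relation $\{\tilde f_{s_1},\tilde f_{s_2}\}_D=0$, which you derive from the assertion $\{\det\mathcal L(\lambda),\det\mathcal L(\nu)\}=0$ ``by the standard rational $r$-matrix of the Jacobi--Mumford form''. No such $r$-matrix relation is established in the paper, and it is not standard in this setting: the bracket here is the Dirac bracket of the constrained system (on $T^*E^{2n+1}$, resp.\ $T^*E^n$), not the canonical bracket on a linear phase space for which the references of Section 7 prove linear $r$-matrix algebras; moreover the commutativity stated after \eqref{f-JR} is only for distinct $a_i$, whereas upstairs the eigenvalues are automatically multiple --- so ``the same commutativity already invoked'' is precisely the symmetric-case statement the theorem is asserting, and invoking it there is circular unless you prove the $r$-matrix identity (a nontrivial computation you do not carry out). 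The gap is easily repaired, but by the paper's own idea rather than yours: for fixed $(x,y)$, $\lambda$, $\nu$, the bracket $\{\det\mathcal L(\lambda),\det\mathcal L(\nu)\}_D$ is a rational function of $(a_0,\dots,a_n)$ which vanishes when the $a_i$ are pairwise distinct, hence vanishes identically; equivalently, take limits of $\{\tilde f^\epsilon_{s_1},\tilde f^\epsilon_{s_2}\}^\epsilon_D=0$ as in the paper. With that substitution (and it is needed only for the $\tilde f_s$ among themselves), your scheme proves the full statement.
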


\begin{proof}
The theorem can be verified by a straightforward calculations.
Instead, we consider the Jacobi-Rosochatius problems on an
one-parametric family of deformed, non-symmetric ellipsoids
$$
E_\epsilon^n: \quad \langle A_\epsilon^{-1} x,x\rangle=1, \quad
A_\epsilon=\diag(a_0^\epsilon,\dots,a_n^\epsilon), \quad
a_i^\epsilon\ne a_j^\epsilon, \quad i\ne j,
$$
where
$$
\lim_{\epsilon \rightarrow 0} a_i^\epsilon=a_i,
$$
and $a_i^\epsilon$ are smooth functions defined on some interval
$(-\Delta,\Delta)$.

Let $\{\cdot,\cdot\}_D^\epsilon$ be the associated Dirac-Poisson
bracket \eqref{Dirac_bracket} and let $f_i^\epsilon$ be the
corresponding integrals \eqref{f-JR},
\begin{equation}\label{nula}
\{f_i^\epsilon,f_j^\epsilon\}_D^\epsilon=0.
\end{equation}

Define
\begin{eqnarray*}
&& \tilde f^\epsilon_s=\sum_{i\in I_s} f^\epsilon_i=\sum_{i\in
I_s} \left(y_i^2+\sigma x_i^2+\frac{\mu_i^2}{x_i^2}+\sum_{j\notin
I_s}\frac{P_{ij}}{a^\epsilon_i-a^\epsilon_j}\right), \quad
s=0,\dots,r,\\
&& P_{s,ij}^\epsilon=(a_i^\epsilon-a_j^\epsilon) f_i^\epsilon,
\qquad i,j\in I_s.
\end{eqnarray*}

From \eqref{nula} we get $\{\tilde f^\epsilon_{s_1},\tilde
f^\epsilon_{s_2}\}^\epsilon_D=0$ for all $\epsilon$. Since
$\lim_{\epsilon \rightarrow 0}\tilde f_s^\epsilon=\tilde f_s$, by
taking the limit we obtain
$$
\{\tilde f_{s_1},\tilde f_{s_2}\}_D=0.
$$

On the other hand, $\lim_{\epsilon \rightarrow 0}
P^\epsilon_{s,ij}$ can be singular and depends on the deformation.
Suppose that
$$
\lim_{\epsilon \rightarrow
0}\frac{a_i^\epsilon-a_j^\epsilon}{a_i^\epsilon-a_l^\epsilon}=0,
\qquad l \ne i,j.
$$
Then $\lim_{\epsilon \rightarrow 0} P^\epsilon_{s,ij}=P_{s,ij}$
and from $\{\tilde f^\epsilon_{s_1},
P^\epsilon_{s_2,ij}\}_D^\epsilon=0$ for all $\epsilon$, we get
$$
\{\tilde f_{s_1}, P_{s_2,ij}\}_D=0.
$$

Therefore
$$
\{\tilde f_{s_1}, P_{s_2}\}_D=\sum_{i,j\in I_{s_2}, i<j} \{\tilde
f_{s_1}, P_{s_2,ij}\}_D=0
$$
and
$$
\{ P_{s,ij},H\}_D= \frac12\sum_{s=0}^r \{P_{s,ij}, \tilde
f_s\}_D=0,
$$
that is $P_{s,ij}$ are integrals of the system. Similarly, for
$s_1 \ne s_2$, $i_1,j_1\in I_{s_1}$, $i_2,j_2\in I_{s_2}$, we can
always find a perturbation $A_\epsilon$ such that
$$
\lim_{\epsilon \rightarrow
0}\frac{a_{i_d}^\epsilon-a_{j_d}^\epsilon}{a_{i_d}^\epsilon-a_{l_d}^\epsilon}=0,
\qquad l_d \ne i_d,j_d, \qquad d=1,2.
$$
Therefore, $\lim_{\epsilon \rightarrow 0} P^\epsilon_{s_d,i_d
j_d}=P_{s_d,i_d j_d}$, $d=1,2$, and we have
\begin{eqnarray*}
&& \{P_{s_1,i_1 j_1},P_{s_2,i_2 j_2}\}_D=0,\\
&& \{P_{s_1},P_{s_2,i_2,j_2}\}= \sum_{i_1, j_1\in I_{s_1},
i_1<j_1}
\{P_{s_1,i_1j_1},P_{s_2,i_2 j_2}\}_D=0,\\
 && \{P_{s_1},P_{s_2}\}= \sum_{i_2, j_2\in I_{s_2}, i_2<j_2} \{P_{s_1},P_{s_2,i_2 j_2}\}_D=0.
\end{eqnarray*}

It remains to prove $\{P_{s}, P_{s,ij}\}_D=0$, $\vert I_s\vert \ge
3$. For simplicity, assume $s=0,i=0,j=1$. Consider a deformation
$A_\epsilon$ having the property
\begin{eqnarray}
&&\nonumber  \lim_{\epsilon \rightarrow
0}\frac{a_{0}^\epsilon-a_{1}^\epsilon}{a_{0}^\epsilon-a_{l}^\epsilon}=0,
\qquad l \ne 0,1,\\
&&\label{limes} \lim_{\epsilon \rightarrow
0}\frac{a_{2}^\epsilon-a_{0}^\epsilon}{a_{2}^\epsilon-a_{l}^\epsilon}=0,
\qquad l \ne 0,1,2 \\
&&\nonumber \lim_{\epsilon \rightarrow
0}\frac{a_{2}^\epsilon-a_{0}^\epsilon}{a_{2}^\epsilon-a_{1}^\epsilon}=1.
\end{eqnarray}
For example, we can take $a_0^\epsilon=\alpha_0, \,
a_1^\epsilon=\alpha_0+\epsilon^3, \,
a_2^\epsilon=\alpha_0+\epsilon^3+\epsilon^2$, $
a_3^\epsilon=\alpha_0+\epsilon^3+\epsilon^2+\epsilon,\,\dots,\,
a_{k_0}=\alpha_0+\epsilon^3+\epsilon^2+(k_0-2)\epsilon. $
Subsequently, we get
\begin{eqnarray*}
&& \lim_{\epsilon \rightarrow 0} P^\epsilon_{0,0,1}=\lim_{\epsilon
\rightarrow 0}(a_0^\epsilon-a_1^\epsilon)f^\epsilon_0=P_{0,0,1},
\\
&& \lim_{\epsilon \rightarrow 0} P^\epsilon_{0,2,0}=\lim_{\epsilon
\rightarrow
0}(a_2^\epsilon-a_0^\epsilon)f^\epsilon_2=P_{0,2,0}+P_{0,2,1}=P_{0,0,2}+P_{0,1,2}
\end{eqnarray*}
and, consequently,
\begin{equation}\label{prva}
\{P_{0,0,1},P_{0,1,2}+P_{0,0,2}\}_D=0.
\end{equation}

Next, we take $a_0^\epsilon=\alpha_0, \,
a_1^\epsilon=\alpha_0+\epsilon^2, \,
a_2^\epsilon=\alpha_0+\epsilon^2+\epsilon$,
$a_3^\epsilon=\alpha_0+2\epsilon^2+\epsilon$,
$a_4^\epsilon=\alpha_0+2\epsilon^2+2\epsilon \,\dots,\,
a_{k_0}=\alpha_0+2\epsilon^2+(k_0-2)\epsilon. $ Then
\begin{eqnarray*}
\lim_{\epsilon \rightarrow
0}\frac{a_{0}^\epsilon-a_{1}^\epsilon}{a_{0}^\epsilon-a_{l}^\epsilon}=0,
\qquad l \ne 0,1,\qquad  \lim_{\epsilon \rightarrow
0}\frac{a_{2}^\epsilon-a_{3}^\epsilon}{a_{2}^\epsilon-a_{l}^\epsilon}=0,
\qquad l \ne 2,3,
\end{eqnarray*}
and $ \lim_{\epsilon \rightarrow 0} P^\epsilon_{0,0,1}=P_{0,0,1}$,
$ \lim_{\epsilon \rightarrow 0} P^\epsilon_{0,2,3}=P_{0,2,3}$.
Thus, we get
\begin{equation}\label{druga}
\{P_{0,0,1},P_{0,2,3}\}_D=0.
\end{equation}

Finally, repeating the arguments given for \eqref{prva} and
\eqref{druga}, we obtain the commuting relations
\begin{equation}\label{treca}
\{P_{s,ij},P_{s,ik}+P_{s,jk}\}_D=0, \qquad
\{P_{s,ij},P_{s,kl}\}_D=0, \qquad i,j \ne k,l,
\end{equation}
implying
\begin{eqnarray*}
\{P_{s,ij},P_s\}_D=\sum_{k,l\in I_s,
k<l}\{P_{s,ij},P_{s,kl}\}_D=\sum_{k\in
I_s}\{P_{s,ij},P_{s,ik}\}_D+\{P_{s,ij},P_{s,jk}\}_D=0.
\end{eqnarray*}
\end{proof}

In particular, for the case of the Jacobi problem (where we set
$\mu_0=\dots=\mu_n=0$), or for the case of a geodesic flow on a
symmetric ellipsoid ($\mu_0=\dots=\mu_n=\sigma=0$), the system is
invariant with respect to the $SO(k_0)\times\dots\times SO(k_r)$
action and the integrals \eqref{Ros} reduce to the squares of the
Noether integrals
\begin{equation}\label{Phi}
\Phi_{s,ij}=y_ix_j-x_iy_j, \qquad  i<j, \qquad i,j\in I_s,
\end{equation}
while the central functions $P_s$ reduce to the invariants
\begin{equation}
\Phi^2_s=\sum_{i,j\in I_s,i<j} \Phi^2_{s,ij}.
\end{equation}

The detail study of the geodesic flow  is given in \cite{Kris}
(the case $n=3$ can be found in \cite{DDB}). It is proved that
among the central functions $\tilde f_s, P_s$, with
$\mu_0=\dots=\mu_n=\sigma=0$, there are $r+\rho_0$ independent
ones, while that among $\tilde f_s$ and the Noether integrals
\eqref{Phi} there are $2n-r-\rho$ independent ones. Therefore, for
a sufficiently small parameters $\mu_i$ and $\sigma$, we get that
the dimensions of linear spaces
\begin{eqnarray*}
&& F(x,y)=\langle X_{\tilde f_s}(x,y), X_{P_{s,ij}}(x,y)\,\vert\, s=0,1,\dots,r, \, i,j\in I_s \rangle,\\
&& K(x,y)=\langle X_{\tilde f_s}(x,y), X_{P_{s}}(x,y)\,\vert\,
s=0,1,\dots,r \rangle
\end{eqnarray*}
are at least $2n-r-\rho$ and $r+\rho$, respectively, at a generic
point $(x,y)\in T^*E^n$. Here $X_f$ denotes the Hamiltonian vector
field with respect to the Dirac-Poisson bracket
\eqref{Dirac_bracket}. According to \eqref{moser-rel*} we have the
relation
\begin{equation}\label{peta}
\sum_{s=0}^r \frac{\tilde f_s}{\alpha_s}=\sigma+ \sum_{s=0}^r
\frac{P_s}{\alpha_s^2}+\sum_{i=0}^n\frac{\mu_i^2}{a_i^2},
\end{equation}
so $\dim K(x,y)\le r+\rho$.

Since all object are rational functions, $\dim F(x,y)\ge
2n-r-\rho$ and $\dim K(x,y)= r+\rho$, for a generic values of
$\mu_i$, $\sigma$, $(x,y)\in T^*E^n$. As a result, we conclude
that $\mathcal F=\{\tilde f_s, P_{s,ij}\}$ is a complete set of
integrals (a generic dimension of $F(x,y)$ equals $2n-r-\rho$ and
$\dim F(x,y)+\dim K(x,y)=\dim T^*E^n$) and we can apply the
Nekhoroshev-Mishchenko-Fomenko theorem on noncommutative
integrability (see \cite{N, MF2, AKN}).

\begin{thm}\label{integrabilnost}
The Jacobi-Rosochatius problem \eqref{RH} (the Jacobi problem
\eqref{H}) on a symmetric ellipsoid \eqref{ellipsoid},
\eqref{symmetric} is completely integrable in a non-commutative
sense by means of integrals \eqref{f-JR*} and \eqref{Ros} (where
we set $\mu_i=0$). Generic trajectories take place over
$r+\rho$-dimensional invariant isotropic tori, spanned by the
Hamiltonian vector fields $X_{\tilde f_s}, X_{P_s}$.
\end{thm}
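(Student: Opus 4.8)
The plan is to derive the theorem from the Nekhoroshev-Mishchenko-Fomenko theorem on noncommutative integrability \cite{N, MF2, AKN}: it is enough to check that $\mathcal F=\{\tilde f_s,P_{s,ij}\}$ is a complete set of integrals on $T^*E^n$, i.e., that $\ddim\mathcal F+\dind\mathcal F=\dim T^*E^n=2n$, where $\ddim\mathcal F$ denotes the generic dimension of $F(x,y)=\langle X_{\tilde f_s},X_{P_{s,ij}}\rangle$ and $\dind\mathcal F$ the generic dimension of the span $K(x,y)=\langle X_{\tilde f_s},X_{P_s}\rangle$ of the Hamiltonian vector fields of the central functions. The integrals are already produced: $\tilde f_s$ and $P_s$ come from $\det\mathcal L(\lambda)$ via \eqref{moser-rel*}, while the extra functions $P_{s,ij}$ and the identities exhibiting $\tilde f_s,P_s$ as central elements of $\mathcal F$ were established in the preceding theorem. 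In particular $\{\tilde f_{s_1},\tilde f_{s_2}\}_D=\{\tilde f_{s_1},P_{s_2}\}_D=\{P_{s_1},P_{s_2}\}_D=0$, so $K(x,y)$ is isotropic, and $P_s=\sum_{i<j\in I_s}P_{s,ij}$ implies $X_{P_s}=\sum_{i<j\in I_s}X_{P_{s,ij}}$, hence $K(x,y)\subseteq F(x,y)$.

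First I would determine the two generic ranks. The $r+1$ functions $\tilde f_s$ and the $\rho$ nontrivial functions $P_s$ satisfy the single linear relation obtained by differentiating \eqref{peta} (whose coefficients are constants), so $\dim K(x,y)\leq r+\rho$. In the purely geodesic case $\mu_i=\sigma=0$, where each $P_{s,ij}$ reduces to the square \eqref{Ros} of the Noether integral \eqref{Phi}, the analysis in \cite{Kris} gives $\dim K=r+\rho$ and $\dim F=2n-r-\rho$ at a generic point. Since all of $\tilde f_s$ and $P_{s,ij}$ are rational both in $(x,y)$ and in the parameters $\mu_i,\sigma$, the rank functions $\dim F$ and $\dim K$ are lower semicontinuous in all of these variables, hence attain at least their geodesic values on a Zariski-dense set; thus $\dim F(x,y)\geq 2n-r-\rho$ and $\dim K(x,y)=r+\rho$ for generic $\mu_i,\sigma$ and generic $(x,y)\in T^*E^n$.

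Next I would use the a priori inequality $\ddim\mathcal F+\dind\mathcal F\leq 2n$, valid for any set of functions: on a generic common level set $N=\mathcal F^{-1}(c)$ the null distribution of $\omega|_N$ is $\ker d\mathcal F\cap(\ker d\mathcal F)^{\omega}$, it contains the span of the Hamiltonian vector fields of the central functions and so has dimension $\geq\dind\mathcal F$, while $\dim N$ exceeds it by the (nonnegative) rank of $\omega|_N$ and $\dim N=2n-\ddim\mathcal F$. Combined with the lower bounds of the previous step this forces $\ddim\mathcal F=2n-r-\rho$, $\dind\mathcal F=r+\rho$, $\ddim\mathcal F+\dind\mathcal F=2n$, and shows moreover that a generic $N$ is isotropic of dimension $r+\rho$; so $\mathcal F$ is complete. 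One point requires care: inside a single block $I_s$ the $P_{s,ij}$ do not mutually Poisson-commute, so $\mathcal F$ is not a function group, and one should use the form of the Nekhoroshev-Mishchenko-Fomenko theorem stated for complete sets of integrals (as in \cite{N, MF2}); alternatively, in the geodesic limit one may replace the $P_{s,ij}$ by the linear Noether integrals \eqref{Phi}, which together with the $SO(k_0)\times\dots\times SO(k_r)$-invariant functions $\tilde f_s$ do form a function group, and then pass to the limit.

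Finally, the common level sets of $\mathcal F$ are compact: $x$ ranges over the ellipsoid $E^n$, the value $H=\frac12\sum_s\tilde f_s$ is fixed, and $\frac12\langle y,y\rangle=H-\frac{\sigma}2\langle x,x\rangle-\frac12\sum_k\mu_k^2/x_k^2$ is bounded because $\langle x,x\rangle$ is bounded on $E^n$ and $\sum_k\mu_k^2/x_k^2\geq 0$ (the latter sum is then bounded above as well, so $x_k$ is bounded away from $0$ whenever $\mu_k\ne 0$). By the Nekhoroshev-Mishchenko-Fomenko theorem a generic connected component of such a level set is an invariant isotropic torus of dimension $\dind\mathcal F=r+\rho$, the commuting flows of $\tilde f_s$ and $P_s$ are tangent to it and span its tangent space, and, since $X_H=\frac12\sum_s X_{\tilde f_s}$, the flow is quasi-periodic on it; the Jacobi problem \eqref{H} is the special case $\mu_i=0$. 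The main obstacle is the exact value $\ddim\mathcal F=2n-r-\rho$: only the lower bound is immediate, and closing the gap needs both the a priori inequality above and the transfer of the generic-rank computation from the geodesic flow on a symmetric ellipsoid \cite{Kris} via the semicontinuity/deformation argument.
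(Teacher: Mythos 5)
Your proposal is correct and follows essentially the same route as the paper: you use the commutation relations and the integrals $P_{s,ij}$ from the preceding theorem, the relation \eqref{peta} to bound $\dim K$, the rank counts for the geodesic flow on the symmetric ellipsoid from \cite{Kris} transferred to generic $\mu_i,\sigma$ by rationality/semicontinuity, and then the Nekhoroshev--Mishchenko--Fomenko theorem. The extra details you supply (the a priori inequality $\ddim\mathcal F+\dind\mathcal F\leq 2n$, compactness of level sets, and the caveat that $\mathcal F$ is not a function group) are left implicit in the paper but are consistent with its argument.
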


In general, noncommutative integrability implies the usual
Liouville integrability by means of smooth commuting integrals
\cite{BJ2}. Here, the integrals can be chosen to be linear
functions of non-commuting integrals.

\begin{thm} \label{integrabilnost2}
The Jacobi-Rosochatius problem \eqref{RH} on a symmetric ellipsoid
\eqref{ellipsoid}, \eqref{symmetric} is Liouville integrable by
means of integrals \eqref{f-JR*} and
\begin{equation}\label{L-JR}
L_{s,k}=\sum_{i,j \in I_{s,k}, i<j} P_{s,ij}, \qquad
k=1,\dots,k_s-1, \, s=0,\dots,r,
\end{equation}
where
\begin{eqnarray*}
&&  I_{0,k}=\{0,\dots,k\}, \\
&&  I_{1,k}=\{k_0,\dots,k_0+k\},\\
&&  \dots \\
&&I_{r,k}=\{k_0+\dots+k_{r-1},\dots,k_0+\dots+k_{r-1}+k\}.
\end{eqnarray*}
\end{thm}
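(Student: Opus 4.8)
The plan is to establish that the $n+1$ functions in $\{\tilde f_s\}_{s=0}^{r}\cup\{L_{s,k}\}$ pairwise Poisson--commute with respect to $\{\cdot,\cdot\}_D$ and that $n$ of them are functionally independent on a generic fibre; this is exactly Liouville integrability of \eqref{RH} on the $2n$-dimensional $T^*E^n$. Most of the commutation relations are free: since each $\tilde f_s$ is a central function of $\mathcal F$, one has $\{\tilde f_{s_1},\tilde f_{s_2}\}_D=0$ and $\{\tilde f_{s_1},L_{s_2,k}\}_D=\sum_{i<j\in I_{s_2,k}}\{\tilde f_{s_1},P_{s_2,ij}\}_D=0$, and for $s_1\neq s_2$ the bilinearity of the bracket together with \eqref{cetvrta} gives $\{L_{s_1,k},L_{s_2,l}\}_D=0$. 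So the only new commutation relations to prove are
\begin{equation*}
\{L_{s,k},L_{s,l}\}_D=0,\qquad s\ \text{fixed},\quad 1\le k<l\le k_s-1 .
\end{equation*}

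These I would prove by the same deformation device used in the previous proof. Fix $s$; to lighten notation take $s=0$ and $I_0=\{0,1,\dots,m\}$ with $m=k_0-1$. Pick a smooth family of ellipsoids $E^n_\epsilon$ with pairwise distinct semi-axes, $a_i^\epsilon\to a_i$, arranged inside the block $I_0$ in a \emph{three-scale staircase}: the $a_i^\epsilon$ with $i\in\{0,\dots,k\}$ lie in an $O(\epsilon^{N})$-neighbourhood of $\alpha_0$, those with $i\in\{k+1,\dots,l\}$ in an $O(\epsilon^{M})$-neighbourhood, and those with $i\in\{l+1,\dots,m\}$ in an $O(\epsilon)$-neighbourhood, where $1<M<N$ (concretely one adds successive increments of orders $\epsilon^{N}$, then $\epsilon^{M}$, then $\epsilon$, exactly as in the explicit examples of the previous proof), while the other blocks $I_1,\dots,I_r$ are split into distinct semi-axes converging to $\alpha_1,\dots,\alpha_r$. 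Let $f_i^\epsilon$ be the integrals \eqref{f-JR} and $\{\cdot,\cdot\}_D^\epsilon$ the associated Dirac bracket, so $\{f_i^\epsilon,f_j^\epsilon\}_D^\epsilon=0$. Put $F_k^\epsilon=\sum_{i=0}^{k}(a_i^\epsilon-\alpha_0)f_i^\epsilon$ and $F_l^\epsilon=\sum_{i=0}^{l}(a_i^\epsilon-\alpha_0)f_i^\epsilon$. Expanding $f_i^\epsilon$ via \eqref{f-JR} and pairing the $(i,j)$- and $(j,i)$-terms whenever $i,j$ both lie in the summation range, the coefficient of $P_{ij}$ is identically $1$; a term with exactly one index outside the range carries the factor $(a_i^\epsilon-\alpha_0)/(a_i^\epsilon-a_j^\epsilon)$, whose numerator lies on a strictly finer scale than the denominator and hence tends to $0$; and the ``diagonal'' terms $(a_i^\epsilon-\alpha_0)(y_i^2+\sigma x_i^2+\mu_i^2/x_i^2)$ tend to $0$. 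Therefore $F_k^\epsilon\to L_{0,k}$ and $F_l^\epsilon\to L_{0,l}$ pointwise on $\R^{2n+2}\setminus\{x=0\}$. Since $\{F_k^\epsilon,F_l^\epsilon\}_D^\epsilon=\sum_{i,j}(a_i^\epsilon-\alpha_0)(a_j^\epsilon-\alpha_0)\{f_i^\epsilon,f_j^\epsilon\}_D^\epsilon=0$ and $\{\cdot,\cdot\}_D^\epsilon$ depends smoothly on $\epsilon$ with $\{\cdot,\cdot\}_D^0=\{\cdot,\cdot\}_D$, passing to the limit $\epsilon\to0$ yields $\{L_{0,k},L_{0,l}\}_D=0$.

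It remains to count independent integrals. The set $\{\tilde f_s\}_{s=0}^{r}\cup\{L_{s,k}:1\le k\le k_s-1,\ 0\le s\le r\}$ has $(r+1)+\sum_{s=0}^{r}(k_s-1)=n+1$ members. Because $I_{s,k_s-1}=I_s$ one has $L_{s,k_s-1}=P_s$, so relation \eqref{peta} is a genuine (rational) relation among these $n+1$ functions; hence at most $n$ of them are independent. Conversely, by the discussion preceding Theorem~\ref{integrabilnost} the subset $\{\tilde f_s\}\cup\{P_s=L_{s,k_s-1}\}$ already has $\dim K(x,y)=r+\rho$ independent differentials at a generic point, and within each symmetric block the nested functions $L_{s,1},\dots,L_{s,k_s-1}$ form the classical Gel'fand--Tsetlin family: for $\sigma=\mu_i=0$ they are the Casimir functions $\sum_{i<j\in I_{s,k}}\Phi_{s,ij}^2$ of the chain $SO(k+1)\subset\cdots\subset SO(k_s)$, functionally independent at generic points, and independence persists for generic small $\sigma$ and $\mu_i$ since all the functions involved are rational. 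Adjoining the $\sum_{k_s\ge2}(k_s-2)=n-r-\rho$ functions $L_{s,k}$ with $k<k_s-1$ therefore raises the number of independent integrals from $r+\rho$ to $n=\tfrac12\dim T^*E^n$. Thus $\{\tilde f_s,L_{s,k}\}$ is a complete commuting family and \eqref{RH} is Liouville integrable; the Liouville tori are obtained from the isotropic tori of Theorem~\ref{integrabilnost} by completing them with the flows of the added commuting integrals.

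I expect the main obstacle to be twofold. On the commutativity side, one must design the staircase deformation so that \emph{every} unwanted cross-term provably disappears in the limit: for a general pair $k<l$ this requires careful bookkeeping of which differences $a_i^\epsilon-a_j^\epsilon$ dominate, and the three nested scales are genuinely needed. On the independence side, the delicate point is verifying that \eqref{peta} is the \emph{only} relation among the enlarged set, so that the count lands precisely on $n$ — this is ultimately what reduces to the (classical) functional independence of nested $SO$-Casimirs, now perturbed by the Rosochatius terms.
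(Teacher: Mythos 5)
Your proof is correct, and its overall skeleton (pairwise commutativity of the enlarged family, then a count showing exactly $n$ independent members) matches the paper's; the difference lies in how the only nontrivial commutations $\{L_{s,k},L_{s,l}\}_D=0$ (same block $s$) are obtained. The paper gets them purely algebraically from the relations \eqref{treca} already established in the previous theorem: for $i,j\in I_{s,k}$ one groups the terms of $\{P_{s,ij},L_{s,k}\}_D$ by the third index and uses $\{P_{s,ij},P_{s,im}+P_{s,jm}\}_D=0$ together with $\{P_{s,ij},P_{s,kl}\}_D=0$, whence each $L_{s,k}$ commutes with every $P_{s,ij}$, $i,j\in I_{s,k}$, and the $L$'s commute among themselves. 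You instead run a fresh limiting argument with the aggregated functions $F_k^\epsilon=\sum_{i\le k}(a_i^\epsilon-\alpha_0)f_i^\epsilon$ and a three-scale staircase deformation; your bookkeeping of the coefficients (exactly $1$ for both indices in range, vanishing otherwise) is right, and the scales $1<M<N$ do make both $F_k^\epsilon\to L_{0,k}$ and $F_l^\epsilon\to L_{0,l}$ simultaneously, so the argument works — it is just heavier than necessary, since it essentially re-proves \eqref{treca} in summed form. On the independence side you are somewhat more explicit than the paper: the paper simply asserts that the $n-r$ integrals \eqref{L-JR} are independent and that, by completeness of $\mathcal F$ and the relation \eqref{peta}, $r$ of the $\tilde f_s$ are independent of the $P_{s,ij}$, giving $n$ commuting integrals; you obtain the upper bound $n$ from \eqref{peta} (via $L_{s,k_s-1}=P_s$) and the lower bound from $\dim K=r+\rho$ plus the classical independence of the nested $SO$-chain (Gel'fand--Tsetlin/Thimm) invariants, perturbed by genericity in $\sigma,\mu_i$. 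That last joint-independence claim is still asserted rather than proved, but it is at least as detailed as the paper's own ``it is clear'' step, so there is no gap relative to the published argument.
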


\begin{proof}
According to \eqref{cetvrta}, we have
$\{L_{s_1,k_1},L_{s_2,k_2}\}_D=0$ for $s_1\ne s_2$, while from
\eqref{treca} we get
$$
\{P_{s,ij},L_{s,k}\}_D=0, \qquad i,j\in I_{s,k},
$$
and, in particular,
$$
\{L_{s,k_1},L_{s,k_2}\}_D=0
$$
(note that $P_s=L_{s,k_{s}-1}$).

It is clear that the integrals \eqref{L-JR} are mutually
independent and their total number is
$$
(k_0-1)+\dots+(k_r-1)=(n+1)-(r+1)=n-r.
$$
On the other hand, from the completeness of $\mathcal F=\{\tilde
f_s, P_{s,ij}\}$, $r$ functions among $\tilde f_s$ are independent
from the integrals $P_{s,ij}$ (we have the relation \eqref{peta}
among them). Therefore, the set of integrals $\{\tilde f_s,
L_{s,k}\}$ is a complete commutative set on $T^*E^n$.
\end{proof}

A choice of commuting integrals is not unique and \eqref{L-JR} is
motivated by a {\it  chain of subalgebras method} in the
construction of commutative functions on Lie algebras (e.g., see
Thimm \cite{Th}). Another complete families of commuting integrals
can be obtained, for example, by using separable variables related
to the degeneration of the elliptic coordinates (see \cite{BKW,
KM}).

\section{Billiards inside ellipsoids}

\subsection{Billiards: continuous and discrete description}
Let $(Q,g)$ be a $n$--dimensional Riemannian manifold and let
$D\subset Q$ be a domain with a (smooth) boundary $\Gamma$. Let
$\pi: T^*Q \to Q$ be a natural projection and let $g^{-1}$ be the
contravariant metric on the cotangent bundle. Consider the {\it
reflection mapping} $ r: \pi^{-1} \Gamma \to \pi^{-1} \Gamma,\quad
\mathbf y_- \mapsto \mathbf y_+$, which associates the covector
$\mathbf y_+\in T^*_\mathbf x Q$, $\mathbf x\in \Gamma$ to a
covector $\mathbf y_-\in T^*_\mathbf x Q$ such that the following
conditions hold:
$$
\vert \mathbf y_+ \vert =\vert \mathbf y_- \vert, \qquad \mathbf
y_+-\mathbf y_- \bot \Gamma.
$$

A {\it billiard} in $D$ is a dynamical system with the phase space
$T^*D$ whose trajectories are geodesics given by the Hamiltonian
equations with the Hamiltonian $H(\mathbf x,\mathbf y)=\frac12
g^{-1}(\mathbf y,\mathbf y)$, reflected at points $\mathbf x\in
\Gamma$ according to the billiard law: $r(\mathbf y_-)=\mathbf
y_+$. Here $\mathbf y_-$ and $\mathbf y_+$ denote the momenta
before and after the reflection. If some potential force field
$V(\mathbf x)$ is added than the system is described with the same
reflection law and Hamiltonian equations with the Hamiltonian
$H(\mathbf x,\mathbf y)=\frac12 g^{-1}(\mathbf y,\mathbf
y)+V(\mathbf x)$.

A function $f: T^*Q \to \R$ is an {\it integral} of the billiard
system if it commutes with the Hamiltonian ($\{f,H\}=0$) and does
not change under the reflection ($f(\mathbf x,\mathbf y)=f(\mathbf
x,r(\mathbf y))$, $\mathbf x\in\Gamma$). The billiard is {\it
completely integrable in the sense of Birkhoff} if it has $n$
independent integrals polynomial in the momenta, which are in
involution (see \cite{Kozlov}).

It is well known that the billiard system within an ellipsoid
$E^{n-1}\subset\R^n$ under the influence of an arbitrary potential
separable in elliptic coordinates is completely integrable in the
sense of Birkhoff \cite{Kozlov, DJ, DJR}. Moreover, the manifold
$T^*D/r\setminus \Sigma$ carries well defined symplectic
structure, such that the billiard flow is the usual Hamiltonian
flow with $n$ commuting smooth integrals ($\Sigma$ is the
codimension two submanifold -- the cotangent bundle of the
ellipsoid, see Lazutkin \cite{Laz}). Whence we can use the
Arnold-Liouville theorem in the description of the system.
Alternatively, we can consider the billiard as a discrete
integrable system with the {\it billiard mapping} $\phi: (\mathbf
x_k,\mathbf y_k)\mapsto (\mathbf x_{k+1},\mathbf y_{k+1})$, where
$\mathbf x_k$ is a sequence of the points of impact and $\mathbf
y_k$ is the corresponding sequence of outgoing momenta \cite{Ves}.
However, it is not a simple task to explicitly describe the
billiard map in the Descartes coordinates, as it the case when the
trajectories between the impacts are the straight lines.

The first result in this direction is performed by Fedorov
\cite{Fe2}, who calculated the billiard map and found the Lax
representation for the billiard system under the influence of the
elastic force \cite{W}. As a slight modification, in this section
we consider the billiard system under the additional influence of
the Rosochatius potential. In the derivation of the billiard
mapping we use a discrete version of the reduction given in
Section 4.

\subsection{Harmonic oscillator and ellipsoidal billiards}
Consider the Jacobi problem \eqref{CP1} on the $2n+1$-dimensional
ellipsoid $E^{2n+1}$. When parameter $a_0$ tends to zero, the
Jacobi flow transforms to the billiard problem within real
$2n-1$-dimensional ellipsoid in $\C^n$
\begin{equation}\label{ellipsoid*}
E^{2n-1}=\{\mathbf z\in\C^{n}\,\vert\, \langle a^{-1} \mathbf
z,\bar{\mathbf z}\rangle=1\},
\end{equation}
where the motion between the impacts is influenced by the elastic
force $-\sigma z$ (harmonic oscillations constrained inside the
ellipsoid \eqref{ellipsoid*}):
\begin{equation}\label{HarO}
\dot{\mathbf z}=\mathbf p, \qquad \dot{\mathbf p}=-\sigma \mathbf
z.
\end{equation}
Here we denoted  $a=\diag(a_1,\dots,a_n)$, $\mathbf
z=(z_1,\dots,z_n)$, $\mathbf p=(p_1,\dots,p_n)$.

If $\sigma \le 0$, then all trajectories have reflections from the
boundary $E^{2n-1}$, while for $\sigma>0$, the initial conditions
$(\mathbf z_0,\mathbf p_0)$ determining the energy $h=H(\mathbf
z_0,\mathbf p_0)$ should satisfy
$$
h+\frac{\sigma}2\langle \mathbf z,\bar{\mathbf
z}\rangle>\epsilon>0, \qquad \mathbf z\in E^{2n-1}.
$$

Consider the integral
\begin{equation}\label{J}
J=\langle A^{-1} p,\bar p\rangle\langle A^{-2} z,\bar z\rangle
-\sigma\langle A^{-2} z,\bar z\rangle
\end{equation}
 of the Jacobi problem \eqref{CP1}.  For distinct $a_i$, it is equal to the sum
$-\sum_i {a_i}^{-2} f_i$. Also, note that $H/J$ for $\sigma=0$
equals to the square of the Joachimsthal integral of the geodesic
flow \cite{Moser}.

In the limiting process, the integral \eqref{J} multiplied by
$a_0$ becomes
\begin{equation}\label{J*}
\lim_{a_0\to 0} a_0\,J=\frac14(\langle a^{-1} \mathbf
z,\bar{\mathbf p} \rangle+\langle a^{-1} \bar{\mathbf z},\mathbf
p\rangle)^2+(1-\langle a^{-1}\mathbf z,\bar{\mathbf
z}\rangle)(\langle a^{-1}\mathbf p,\bar{\mathbf p}\rangle-\sigma).
\end{equation}

Let  $\mathbf z_k$, $k\in\mathbb Z$, be the set of impact points.
By $\mathbf p_k$ we denote the outgoing velocity at $\mathbf z_k$.
From \eqref{J*} we get the integral
$$
J_k=\langle a^{-1} \mathbf z_k,\bar{\mathbf p}_k \rangle+\langle
a^{-1} \bar{\mathbf z}_k,\mathbf p_k\rangle.
$$
of the billiard mapping within ellipsoid \eqref{ellipsoid*}
\begin{equation}\label{BM}
\phi(\mathbf z_k,\mathbf p_k)=(\mathbf z_{k+1},\mathbf p_{k+1}).
\end{equation}

With the above notation, the mapping \eqref{BM} given in
\cite{Fe2} reads
\begin{eqnarray*}
&& \mathbf z_{k+1}=-\frac1{\nu_k}[(\sigma-\langle \mathbf
p_k,a^{-1}\bar{\mathbf p}_k\rangle)\mathbf z_k+J_k \mathbf p_k],\label{Dis1}\\
&&\mathbf p_{k+1}=-\frac1{\nu_k}[(\sigma-\langle \mathbf
p_k,a^{-1}\bar{\mathbf p}_k\rangle)(\mathbf p_k+\pi_k
a^{-1}\mathbf z_k)+ J_k(\pi_k a^{-1}\mathbf p_k-\sigma\mathbf
z_k)],\label{Dis2}
\end{eqnarray*}
where
$$
\nu_k=\sqrt{\sigma J_k^2+(\sigma-\langle\mathbf
p_k,a^{-1}\bar{\mathbf p}_k \rangle)^2}, \quad
\pi_k={J_k}/{\langle \mathbf z_{k+1},a^{-2}\bar{\mathbf
z}_{k+1}\rangle}.
$$

\subsection{The Jacobi-Rosochatius billiard}
The harmonic oscillations with impacts at the boundary
\eqref{ellipsoid*} have integrals
\begin{equation}\label{dis-int}
h_j(\mathbf z_k,\mathbf p_k)=\frac{i}2(z_{k,j}\bar
p_{k,j}-p_{k,j}\bar z_{k,j}), \quad j=1,\dots,n,
\end{equation}
and we can perform a discrete analogue of the reduction described
in Section 4.

Let us fix the values of the integrals
$$
h_j=\mu_j,\quad j=1,\dots,n,
$$
where some of $\mu_j$ can be equal to zero.  Introduce the
coordinate change
\begin{equation}\label{polar-dis2}
z_{k,j}=x_{k,j}e^{i\varphi_{k,j}}, \quad p_{k,j}=y_{k,j}
e^{i\varphi_{k,j}}+i\frac{\mu_j}{x_{k,j}} e^{i\varphi_{k,j}} \quad
\text{for}\quad \mu_j\ne 0
\end{equation}
and consider the restrictions of $z_{k,j},p_{k,j}$ to $\R$,
$$
z_{k,j}=x_{k,j}\in\R, \quad p_{k,j}=y_{k,j}\in\R \quad
\text{for}\quad \mu_j=0.
$$

The mapping \eqref{BM}, induces the mapping
\begin{equation}\label{BM*}
\Phi(\mathbf x_k,\mathbf y_k)=(\mathbf x_{k+1},\mathbf y_{k+1}),
\qquad \mathbf x_k, \mathbf x_{k+1}\in E^{n-1}_*
\end{equation}
where $E^{n-1}_*$ is the ellipsoidal component of the boundary of
the domain
$$
D^n=\{\mathbf x\in\R^n\,\vert\, \langle \mathbf x,a^{-1}\mathbf
x\rangle \le 1, \,\, x_j \ge 0\,\,\text{for}\,\, \mu_j\ne 0,\,
j=1,\dots,n\}.
$$

\begin{figure}[ht]
\includegraphics{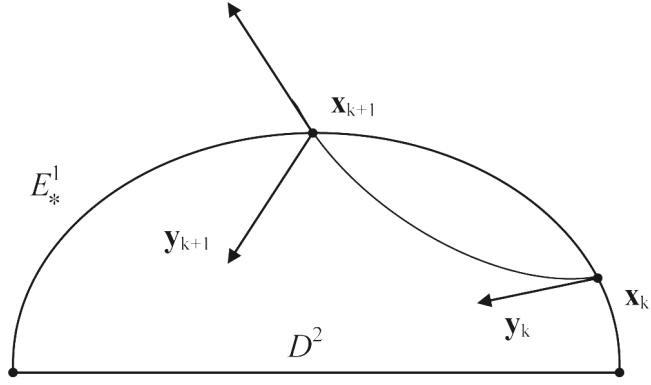}
\caption{Billiard domain for $n=2$, $\mu_1=0$, $\mu_2 \ne 0$.}
\end{figure}

\begin{lem}
\begin{eqnarray*}
&& x_{k+1,j}=\frac{1}{\nu_k}\sqrt{\left(J_k
y_{k,j}+K_k x_{k,j}\right)^2+\frac{\mu_j^2}{x_{k,j}^2}}, \qquad \mu_j\ne 0, \\
&& x_{k+1,j}=-\frac{1}{\nu_k}\left[K_k x_{k,j}+J_k y_{k,j}\right],
\qquad \qquad\qquad \mu_j=
0,\\
&&
y_{k+1,j}=-\frac{e^{-i\delta_{k,j}}}{\nu_k}\left[K_k\left(\frac{\pi_k}{a_j}x_{k,j}+y_{k,j}+i\frac{\mu_j}{x_{k,j}}
\right) \right]\\
&&\qquad\qquad-\frac{J_k
e^{-i\delta_{k,j}}}{\nu_k}\left[\frac{\pi_k}{a_j}y_{k,j}-\sigma
x_{k,j}+i\frac{\pi_k\mu_j}{a_jx_{k,j}}\right]-i\frac{\mu_j}{x_{k+1,j}},
\end{eqnarray*}
where
\begin{eqnarray*}
&& J_k=2{\langle \mathbf x_{k},a^{-1}{\mathbf y}_{k}\rangle},\\
&& K_k=\sigma-\langle \mathbf y_k,a^{-1}\mathbf
y_k\rangle-\left\langle \frac{\mathbf\mu}{\mathbf
x_k},a^{-1}\frac{\mathbf\mu}{\mathbf x_k}\right\rangle,\\
 && \nu_k=\sqrt{\sigma J_k^2 + K_k^2},\\
&& \pi_k=J_k/{\langle \mathbf x_{k+1},a^{-2}{\mathbf
x}_{k+1}\rangle},\\
&& \delta_{k,j}=Arg\left[-K_k x_{k,j}-J_k
y_{k,j}-iJ_k\frac{\mu_j}{x_{k,j}}\right].
\end{eqnarray*}
\end{lem}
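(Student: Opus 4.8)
The plan is to derive the formulas in the Lemma by substituting the polar coordinate change \eqref{polar-dis2} into the billiard mapping \eqref{BM} from \cite{Fe2}, exactly mirroring on the discrete level the reduction of the continuous Jacobi problem to the Jacobi-Rosochatius problem carried out in Section 4. First I would express the two combinations appearing in Fedorov's map, namely $\sigma-\langle\mathbf p_k,a^{-1}\bar{\mathbf p}_k\rangle$ and $J_k=\langle a^{-1}\mathbf z_k,\bar{\mathbf p}_k\rangle+\langle a^{-1}\bar{\mathbf z}_k,\mathbf p_k\rangle$, in the new variables. Writing $z_{k,j}=x_{k,j}e^{i\varphi_{k,j}}$ and $p_{k,j}=(y_{k,j}+i\mu_j/x_{k,j})e^{i\varphi_{k,j}}$ (and $z_{k,j}=x_{k,j}$, $p_{k,j}=y_{k,j}$ when $\mu_j=0$), one gets $\bar z_{k,j}p_{k,j}=x_{k,j}y_{k,j}+i\mu_j$, so that $z_{k,j}\bar p_{k,j}+\bar z_{k,j}p_{k,j}=2x_{k,j}y_{k,j}$ and $p_{k,j}\bar p_{k,j}=y_{k,j}^2+\mu_j^2/x_{k,j}^2$. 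Hence $J_k=2\langle\mathbf x_k,a^{-1}\mathbf y_k\rangle$ and $\sigma-\langle\mathbf p_k,a^{-1}\bar{\mathbf p}_k\rangle=\sigma-\langle\mathbf y_k,a^{-1}\mathbf y_k\rangle-\langle\mu/\mathbf x_k,a^{-1}\mu/\mathbf x_k\rangle=K_k$, and $\nu_k=\sqrt{\sigma J_k^2+K_k^2}$ follows immediately; note also that $\langle\mathbf z_{k+1},a^{-2}\bar{\mathbf z}_{k+1}\rangle=\langle\mathbf x_{k+1},a^{-2}\mathbf x_{k+1}\rangle$ since the phases cancel, giving the stated $\pi_k$.

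Next I would compute the new position. Fedorov's formula $\mathbf z_{k+1}=-\nu_k^{-1}[(\sigma-\langle\mathbf p_k,a^{-1}\bar{\mathbf p}_k\rangle)\mathbf z_k+J_k\mathbf p_k]$ gives, in the $j$-th slot, $z_{k+1,j}=-\nu_k^{-1}e^{i\varphi_{k,j}}[K_kx_{k,j}+J_k y_{k,j}+iJ_k\mu_j/x_{k,j}]$. When $\mu_j=0$ this is real up to the unit phase $e^{i\varphi_{k,j}}$, and absorbing that phase into $\varphi_{k+1,j}$ leaves $x_{k+1,j}=-\nu_k^{-1}(K_kx_{k,j}+J_k y_{k,j})$. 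When $\mu_j\neq0$ we must take the modulus, since $x_{k+1,j}\geq0$ is the radial coordinate and the leftover phase gets folded into the new angular variable $\varphi_{k+1,j}$; this yields $x_{k+1,j}=\nu_k^{-1}\sqrt{(K_kx_{k,j}+J_k y_{k,j})^2+J_k^2\mu_j^2/x_{k,j}^2}$. A short manipulation shows $K_kx_{k,j}+J_ky_{k,j}$ and $J_ky_{k,j}+K_kx_{k,j}$ are the same, matching the stated form; the argument $\delta_{k,j}=\mathrm{Arg}[-K_kx_{k,j}-J_ky_{k,j}-iJ_k\mu_j/x_{k,j}]$ is precisely the residual phase, i.e. $z_{k+1,j}=x_{k+1,j}e^{i(\varphi_{k,j}+\delta_{k,j})}$, so $\varphi_{k+1,j}=\varphi_{k,j}+\delta_{k,j}$.

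For the new momentum I would start from Fedorov's $\mathbf p_{k+1}=-\nu_k^{-1}[(\sigma-\langle\mathbf p_k,a^{-1}\bar{\mathbf p}_k\rangle)(\mathbf p_k+\pi_k a^{-1}\mathbf z_k)+J_k(\pi_k a^{-1}\mathbf p_k-\sigma\mathbf z_k)]$, read off the $j$-th component, substitute $z_{k,j}=x_{k,j}e^{i\varphi_{k,j}}$ and $p_{k,j}=(y_{k,j}+i\mu_j/x_{k,j})e^{i\varphi_{k,j}}$, and factor out $e^{i\varphi_{k,j}}$. Then, to pass to the reduced momentum $y_{k+1,j}$, I use the defining relation $p_{k+1,j}=(y_{k+1,j}+i\mu_j/x_{k+1,j})e^{i\varphi_{k+1,j}}=(y_{k+1,j}+i\mu_j/x_{k+1,j})e^{i(\varphi_{k,j}+\delta_{k,j})}$, so that $y_{k+1,j}=e^{-i\delta_{k,j}}\cdot(\text{the bracketed expression})-i\mu_j/x_{k+1,j}$. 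Grouping the bracket into its $K_k$-proportional part $K_k(\pi_k a_j^{-1}x_{k,j}+y_{k,j}+i\mu_j/x_{k,j})$ and its $J_k$-proportional part $J_k(\pi_k a_j^{-1}y_{k,j}-\sigma x_{k,j}+i\pi_k\mu_j/(a_jx_{k,j}))$ gives exactly the displayed formula. The main obstacle is purely bookkeeping: one must verify that the residual phase extracted from $z_{k+1,j}$ in the position formula is the \emph{same} phase $\delta_{k,j}$ that must be extracted from $p_{k+1,j}$ in the momentum formula — equivalently, that $p_{k+1,j}$ really does have argument $\varphi_{k,j}+\delta_{k,j}$ modulo a real factor, which is needed for the coordinates $(\mathbf x_{k+1},\varphi_{k+1},\mathbf y_{k+1},\mu)$ to be consistent canonical polar coordinates. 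This follows because the torus action \eqref{torus} commutes with the billiard map (the integrals \eqref{dis-int} are preserved), so the map descends to the quotient and the reconstruction phase is common to all components of a given orbit; once this is noted, the rest is the routine algebra sketched above.
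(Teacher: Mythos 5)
Your route is exactly the intended one: the paper gives no separate proof of this lemma, which is meant to follow by substituting the discrete polar change of variables \eqref{polar-dis2} into Fedorov's map \eqref{BM}, and your bookkeeping of $J_k$, $K_k$, $\nu_k$, $\pi_k$, of the radial/angular split of $z_{k+1,j}$, and of the momentum formula via $p_{k+1,j}=(y_{k+1,j}+i\mu_j/x_{k+1,j})e^{i(\varphi_{k,j}+\delta_{k,j})}$ is the correct computation. One point you gloss over, however, deserves to be stated explicitly: your substitution gives
$$
x_{k+1,j}=\frac1{\nu_k}\sqrt{\bigl(K_kx_{k,j}+J_ky_{k,j}\bigr)^2+J_k^2\frac{\mu_j^2}{x_{k,j}^2}}\,,
$$
with the factor $J_k^2$ on the Rosochatius term, whereas the statement as printed has $\mu_j^2/x_{k,j}^2$ without it; saying that "a short manipulation shows \dots matching the stated form" only addresses the trivial reordering of the real part and silently identifies two different expressions. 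In fact your version is the consistent one: it is forced by $x_{k+1,j}=\vert z_{k+1,j}\vert$ and agrees with the lemma's own $\delta_{k,j}=\mathrm{Arg}[-K_kx_{k,j}-J_ky_{k,j}-iJ_k\mu_j/x_{k,j}]$, whose imaginary part carries $J_k$; so the printed radial formula should be read as containing a typo, and a careful write-up should flag this rather than assert agreement. A second, smaller remark: the "same residual phase" issue is per component $j$ (there is no single common phase), and what actually needs checking is that $e^{-i\delta_{k,j}}$ times the bracketed expression minus $i\mu_j/x_{k+1,j}$ is real, i.e.\ $\mathrm{Im}(z_{k+1,j}\bar p_{k+1,j})=-\mu_j$; this is most cleanly verified by a one-line computation with Fedorov's formulas using $\nu_k^2=K_k^2+\sigma J_k^2$ (equivalently, by your observation that the integrals \eqref{dis-int} are preserved by the flow and the reflection, so the map commutes with the torus action and descends).
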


This is a billiard mapping of the billiard system within a domain
$D^n$, where the motion between the impacts is described by the
Hamiltonian function
$$
H(\mathbf x,\mathbf y)=\frac12\langle\mathbf y,\mathbf
y\rangle+\frac{\sigma}{2}\langle\mathbf x,\mathbf
x\rangle+\frac12\sum_{j=1}^n\frac{\mu_k^2}{x_j^2}
$$
that is,
\begin{equation}\label{JR}
\dot{\mathbf x}=\mathbf y, \qquad \dot{\mathbf x}=-\sigma \mathbf
x+\frac{\mu^2}{\mathbf x^3},
\end{equation}
where $\mu^2/\mathbf x^3=(\mu_1^2/x_1^3,\dots,\mu_n^2/x_n^3)$.

The trajectories $\mathbf x(t)$ of \eqref{JR} are projections of
trajectories $\mathbf z(t)$ of the harmonic oscillator
\eqref{HarO} (conics or degenerate conics) by the reduction
\eqref{polar-dis2}. As above, if  $\sigma \le 0$, then all
trajectories will have reflections from the boundary, while for
$\sigma>0$, the initial conditions $(\mathbf x_0,\mathbf p_0)$
determining the energy $h=H(\mathbf x_0,\mathbf y_0)$ should
satisfy
$$
h+\frac{\sigma}2\langle \mathbf x,{\mathbf x}\rangle>\epsilon>0,
\qquad \mathbf x\in E^{n-1}_*.
$$

We refer to \eqref{BM*} as the {\it Jacobi-Rosochatius billiard
mapping}. After a straightforward modification of Fedorov's  Lax
representation \cite{Fe2} by applying a discrete reduction, we
obtain the following statement.

\begin{thm} \label{C3} Suppose that the eigenvalues $a_i$ of the matrix $a$
are distinct. Up to the action of the group  generated by the
reflections
\begin{equation}\label{simetrije}
(x_j,y_j)\mapsto (s_jx_j,s_jy_j), \quad s_j=\pm 1, \quad
{\mu_j=0}, \quad s_j=1, \quad \mu_j\ne 0,
\end{equation}
$j=1,\dots,n,$ the Jacobi-Rosochatius billiard map \eqref{BM*} is
equivalent to the matrix equation
\begin{equation}\label{LAF}
{\mathcal L}_{\mathbf x_{k+1},\mathbf y_{k+1}}(\lambda)=\mathcal
A_{\mathbf x_{k},\mathbf y_{k}}(\lambda)\mathcal L_{\mathbf
x_{k},\mathbf y_{k}}(\lambda)\mathcal A^{-1}_{\mathbf
x_{k},\mathbf y_{k}}(\lambda)
\end{equation}
with $2\times 2$ matrices depending on the parameter $\lambda$
\begin{eqnarray}
&& \label{LAF*} \mathcal L_{\mathbf x_k,\mathbf y_k}(\lambda)=
\begin{pmatrix}
Q_\lambda(\mathbf x_k,\mathbf y_k) & Q_\lambda(\mathbf y_k,\mathbf y_k)+Q_\lambda(\frac{\mu}{\mathbf x_k},\frac{\mu}{\mathbf x_k})+\sigma \\
-1-Q_\lambda(\mathbf x_k,\mathbf x_k) & -Q_\lambda(\mathbf
y_k,\mathbf x_k)
\end{pmatrix},\\
&& \nonumber \mathcal A_{\mathbf x_k,\mathbf y_k}(\lambda)=
\begin{pmatrix}
K_k\lambda+J_k\pi_k & \sigma J_k\lambda-K_k \pi_k
\\
-J_k\lambda & K_k\lambda
\end{pmatrix},
\end{eqnarray}
where $Q_\lambda(\cdot,\cdot)=\langle a^{-1}_\lambda
\cdot,\cdot\rangle$,
$a_\lambda=\diag(\lambda-a_1,\dots,a_n-\lambda)$.
\end{thm}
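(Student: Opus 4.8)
The plan is to obtain Theorem \ref{C3} from Fedorov's discrete Lax representation for the harmonic-oscillator billiard in the complex ellipsoid \eqref{ellipsoid*} by carrying out the discrete torus reduction, in exact parallel with the passage from Theorem \ref{T3} to Theorem \ref{T6} in the continuous case. Fedorov \cite{Fe2} provides for the oscillator billiard \eqref{HarO}--\eqref{BM} a discrete zero-curvature relation $\mathcal L^{\circ}_{\mathbf z_{k+1},\mathbf p_{k+1}}(\lambda)=\mathcal A^{\circ}_{\mathbf z_k,\mathbf p_k}(\lambda)\,\mathcal L^{\circ}_{\mathbf z_k,\mathbf p_k}(\lambda)\,\mathcal A^{\circ}_{\mathbf z_k,\mathbf p_k}(\lambda)^{-1}$, where $\mathcal L^{\circ}$ is the $2\times2$ Jacobi--Mumford matrix built from $q_\lambda(\mathbf z,\bar{\mathbf z})$, $q_\lambda(\mathbf z,\bar{\mathbf p})$, $q_\lambda(\mathbf p,\bar{\mathbf p})$ (the $a_0\to0$ degeneration of the matrix $\mathcal L(\lambda)$ of \eqref{LA3}) and $\mathcal A^{\circ}$ is a $\lambda$-linear $2\times2$ matrix encoding the Euclidean reflection off $E^{2n-1}$. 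First I would record that this pair is invariant under the Hamiltonian torus action with momentum map \eqref{dis-int}: the Euclidean reflection off $E^{2n-1}$, the kinetic energy and the oscillator potential are all invariant under $(z_j,p_j)\mapsto e^{i\varphi_j}(z_j,p_j)$, hence so are the billiard map \eqref{BM} and the matrices $\mathcal L^{\circ},\mathcal A^{\circ}$.

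The reduction is then carried out in the coordinates \eqref{polar-dis2}. Writing $z_{k,j}=x_{k,j}e^{i\varphi_{k,j}}$, $p_{k,j}=(y_{k,j}+i\mu_j/x_{k,j})e^{i\varphi_{k,j}}$ and restricting to the level set $h_j=\mu_j$ (with $z_{k,j}=x_{k,j}$, $p_{k,j}=y_{k,j}$ real when $\mu_j=0$), all phases cancel in the entries of $\mathcal L^{\circ}$: one finds $q_\lambda(\mathbf z_k,\bar{\mathbf z}_k)=Q_\lambda(\mathbf x_k,\mathbf x_k)$, $q_\lambda(\mathbf p_k,\bar{\mathbf p}_k)=Q_\lambda(\mathbf y_k,\mathbf y_k)+Q_\lambda(\frac{\mu}{\mathbf x_k},\frac{\mu}{\mathbf x_k})$, $q_\lambda(\mathbf z_k,\bar{\mathbf p}_k)=Q_\lambda(\mathbf x_k,\mathbf y_k)-i\,m_\lambda$ and $q_\lambda(\mathbf p_k,\bar{\mathbf z}_k)=Q_\lambda(\mathbf x_k,\mathbf y_k)+i\,m_\lambda$, where $Q_\lambda(\mathbf u,\mathbf v)=\sum_j u_jv_j/(\lambda-a_j)$ and $m_\lambda=\sum_j\mu_j/(\lambda-a_j)$ is constant along the billiard map. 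Subtracting the scalar matrix $-i\,m_\lambda\,\mathbf I$ — which leaves the conjugation relation unchanged, just as in the remark preceding Theorem \ref{T6} — turns the reduced $\mathcal L^{\circ}$ into exactly $\mathcal L_{\mathbf x_k,\mathbf y_k}(\lambda)$ of \eqref{LAF*}. Substituting \eqref{polar-dis2} into $\mathcal A^{\circ}$ and using the reductions $J_k=2\langle\mathbf x_k,a^{-1}\mathbf y_k\rangle$, $K_k=\sigma-\langle\mathbf y_k,a^{-1}\mathbf y_k\rangle-\langle\frac{\mu}{\mathbf x_k},a^{-1}\frac{\mu}{\mathbf x_k}\rangle$, $\pi_k=J_k/\langle\mathbf x_{k+1},a^{-2}\mathbf x_{k+1}\rangle$ of the quantities in \eqref{BM} yields the stated $\mathcal A_{\mathbf x_k,\mathbf y_k}(\lambda)$; as a consistency check, $\det\mathcal A_{\mathbf x_k,\mathbf y_k}(\lambda)=\lambda^2(K_k^2+\sigma J_k^2)=\lambda^2\nu_k^2$, so $\mathcal A$ is invertible whenever $\lambda\nu_k\neq0$ and the spectral curve $\det(\mathcal L_{\mathbf x_k,\mathbf y_k}(\lambda)-w\,\mathbf I)=0$ is preserved. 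Equating residues of \eqref{LAF} at $\lambda=a_j$ then reproduces the update formulas of the Lemma — the potential double pole of the right-hand side at $\lambda=0$ cancels by virtue of the ellipsoid constraint $\langle a^{-1}\mathbf x_k,\mathbf x_k\rangle=1$ — and the square roots appearing there are precisely the source of the $\mathbb Z_2$-indeterminacy \eqref{simetrije}: the matrix equation determines the signs of the $\mu_j=0$ components of $(\mathbf x_{k+1},\mathbf y_{k+1})$ only, the discrete analogue of the $\mathbb Z_2^{n+1}$ ambiguity in Theorem \ref{T6}.

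The step I expect to be the main obstacle is verifying that Fedorov's reflection matrix, after both the $a_0\to0$ degeneration and the substitution \eqref{polar-dis2}, collapses to the simple $\lambda$-linear form with entries $K_k\lambda+J_k\pi_k$, $\sigma J_k\lambda-K_k\pi_k$, $-J_k\lambda$, $K_k\lambda$. This requires tracking carefully how the phase factors $e^{i\varphi_{k,j}}$ together with the extra argument shift $\delta_{k,j}=\mathrm{Arg}[-K_kx_{k,j}-J_ky_{k,j}-iJ_k\mu_j/x_{k,j}]$ appearing in the Lemma's formula for $y_{k+1,j}$ recombine so as to drop out of the reduced $2\times2$ relation, and confirming that $\nu_k$ and $\pi_k$ are the correct reductions of the normalizing factors in \eqref{BM}. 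A fully computational alternative — essentially the ``straightforward modification'' alluded to just before the theorem — is to bypass the reduction entirely and verify \eqref{LAF} directly by inserting the Lemma's expressions for $(\mathbf x_{k+1},\mathbf y_{k+1})$ into both sides and comparing the two rational functions of $\lambda$ pole by pole; this is routine but lengthy, and I would keep it as an independent cross-check.
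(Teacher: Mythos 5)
Your proposal is correct and follows essentially the same route the paper indicates: Theorem \ref{C3} is obtained from Fedorov's discrete Lax representation for the oscillator billiard \cite{Fe2} by the discrete torus reduction \eqref{polar-dis2}, with the phases cancelling, the scalar multiple $-i\,m_\lambda\,\mathbf I$ (with $m_\lambda=\sum_j\mu_j/(\lambda-a_j)$ constant along the map) removed from the Lax matrix exactly as in the passage to Theorem \ref{T6}, and the reflection ambiguity \eqref{simetrije} accounted for when recovering the map from the residues. Your intermediate computations (the reduced entries of $\mathcal L$, the reductions of $J_k$, $K_k$, $\pi_k$, and $\det\mathcal A=\lambda^2\nu_k^2$) agree with the paper's formulas.
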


\begin{rem}
Note that the billiard can be seen as a limit of the system
\eqref{RH} as $a_0$ tends to zero and $\mu_0=0$. Contrary, as the
value of the integral $J_k=2\langle \mathbf x_k,\mathbf
y_k\rangle$ tends to zero, the impact points $\mathbf x_k$
approximate the trajectories of the Jacobi-Rosochatius problem on
$E^{n-1}_*$. As in the case of the Jacobi-Rosochatius system
\eqref{RH} on a symmetric ellipsoid, the Jacobi-Rosochatius
billiard map \eqref{BM*} for a symmetric ellipsoid (see the
equation \eqref{symmetric2} given below) is an example of a
discrete system integrable in a noncommutative sense.
\end{rem}

\subsection{The Chasles and Poncelet theorems} In what follows we
shall give a geometric interpretation of the integrability.

Let us recall on a well known variant of the {Chasles} theorem for
the billiard system within ellipsoid
\begin{equation}\label{E3}
E^{n-1}=\{\mathbf x\in\R^n\,\vert\, \langle \mathbf
x,a^{-1}\mathbf x\rangle =1\}
\end{equation}
without external forces, e.g., see \cite{Kozlov, Ves, DrRa}.
Assume that the eigenvalues of the matrix $a$ are different.  Let
$\mathbf x_k$ be a generic sequence of impact points. Then the
sequence of lines $\mathbf x_k\mathbf x_{k+1}$ is simultaneously
tangent to the same set of quadrics $\mathcal
Q_{\eta_1},\dots,\mathcal Q_{\eta_{n-1}}$ from the confocal family
\begin{equation}\label{confocal2}
\mathcal Q_\lambda: \qquad
\sum_{i=1}^n\frac{x_i^2}{a_i-\lambda}=1.
\end{equation}

Namely, the set of lines $p=p(\mathbf x,\mathbf y)=\{\mathbf
x+s\mathbf y\,\vert\, s\in\R\}$ that are tangent to the quadric
$\mathcal Q_\eta$ from the confocal family \eqref{confocal2} are
given by the equation \cite{Moser}
\begin{equation}\label{MC}
\Phi_{\mathbf x,\mathbf y}(\eta)=(Q_\eta(\mathbf x,\mathbf
x)+1)Q_\eta(\mathbf y,\mathbf y)-Q_\eta^2(\mathbf x,\mathbf y)=0.
\end{equation}

On the other hand, from \eqref{LAF} where we set $\sigma=0$ and
$\mu_i=0$,  $\Phi_{\mathbf x_k,\mathbf y_k}(\lambda)=\det\mathcal
L_{\mathbf x_{k},\mathbf y_{k}}(\lambda)$ is an integral of the
system. Therefore, if $\eta$ is a zero of $\Phi_{\mathbf
x_{k},\mathbf y_{k}}(\lambda)$, then the lines $p_k=p(\mathbf
x_k,\mathbf y_k)=\mathbf x_k\mathbf x_{k+1}$, $k\in\mathbb Z$ are
simultaneously tangent to $\mathcal Q_{\eta}$. For a generic
trajectory $(\mathbf x_k,\mathbf y_k)$, we have $n-1$ distinct
solutions $\eta_1,\dots,\eta_{n-1}$ of the equation $\Phi_{\mathbf
x_k, \mathbf y_k}(\lambda)=0$.

Moreover, suppose that the trajectory $\mathbf x_k$ is periodic.
Then any billiard trajectory which shares the same caustic
quadrics is also periodic, with the same period ({\it the Poncelet
theorem} in $\R^n$ \cite{P, CCS, DrRa2, DrRa}).

An analytical condition on caustics $\mathcal
Q_{\eta_1},\dots,\mathcal Q_{\eta_{n-1}}$ for periodic billiard
trajectories is derived by Dragovi\'c and Radnovi\'c, generalizing
classical Cayley's condition for $n=2$  \cite{DrRa2, DrRa}. The
geometry of the lines common to the confocal quadrics is further
studied in \cite{Knorr, DrRa}, while Chasles's-type theorems for
several natural mechanical systems are given in \cite{Mum, Fe,
FeJo}.

Let $d$ be the number of indexes $i$ for which $\mu_i\ne 0$.

\begin{thm}\label{C4}
Assume that the eigenvalues of the matrix $a$ are different. Let
$(\mathbf x_k,\mathbf y_k)$ be a generic trajectory of the
Jacobi-Rosochatius billiard map \eqref{BM*}.

(i) Assume $\sigma\ne 0$ (respectively, $\sigma=0$). The
trajectories
\begin{equation}\label{C5}
l_k: \qquad \mathbf x_k(t), \qquad t\in\R
\end{equation}
of the Jacobi-Rosochatius system \eqref{JR} with initial
conditions $(\mathbf x_k(0),\mathbf y_k(0))=(\mathbf x_k,\mathbf
y_k)$ are simultaneously tangent to the quadrics
\begin{equation}\label{C6}
\mathcal Q_{\eta_1},\dots,\mathcal Q_{\eta_{n+d}} \qquad
(\text{respectively}, \, \mathcal Q_{\eta_1},\dots,\mathcal
Q_{\eta_{n-1+d}})
\end{equation}
from the confocal family \eqref{confocal2}, where
$\eta_1,\dots,\eta_{n+d}$ (respectively,
$\eta_1,\dots,\eta_{n-1+d}$) are solutions of the equation
$\det\mathcal L_{\mathbf x_k, \mathbf y_k}(\lambda)=0$.

(ii) Suppose that the trajectory $(\mathbf x_k, \mathbf y_k)$ is
periodic. Then any billiard trajectory which shares the same
caustic quadrics \eqref{C6} is also periodic with the same period.
\end{thm}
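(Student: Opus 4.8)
The plan is to derive part (i) from the Lax representation \eqref{LAF} of Theorem \ref{C3} together with the Chasles-type description \eqref{MC} of lines tangent to a confocal quadric, and then obtain part (ii) as a formal consequence of (i) in the spirit of the Poncelet theorem. First I would observe that by Theorem \ref{C3} the quantity $\Phi_{\mathbf x_k,\mathbf y_k}(\lambda)=\det\mathcal L_{\mathbf x_k,\mathbf y_k}(\lambda)$ is conjugation-invariant under the billiard map, hence its zeros $\eta_1,\dots,\eta_{n+d}$ (resp.\ $\eta_1,\dots,\eta_{n-1+d}$ when $\sigma=0$) do not depend on $k$; one must count the zeros by expanding the determinant as in \eqref{moser-rel*}, noting that the Rosochatius terms $Q_\lambda(\frac{\mu}{\mathbf x_k},\frac{\mu}{\mathbf x_k})$ contribute $d$ extra double poles at $\lambda=a_j$, $\mu_j\ne 0$, so that clearing denominators in $\det\mathcal L_{\mathbf x_k,\mathbf y_k}(\lambda)=0$ yields a polynomial of the asserted degree $n+d$ (resp.\ $n-1+d$).

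The central point is then to interpret these invariant $\eta_i$ geometrically as caustics for the continuous trajectories $l_k$. Here I would use the Remark following Theorem \ref{C3}: as $J_k=2\langle\mathbf x_k,\mathbf y_k\rangle\to 0$ the impact points trace out a trajectory of the Jacobi--Rosochatius flow \eqref{JR} on $E^{n-1}_*$, while more to the point, the trajectory $l_k$ of \eqref{JR} through $(\mathbf x_k,\mathbf y_k)$ is itself the projection, under the reduction \eqref{polar-dis2}, of a harmonic-oscillator conic $\mathbf z(t)$ of \eqref{HarO}. So the claim ``$l_k$ tangent to $\mathcal Q_{\eta_i}$'' should be reduced to the corresponding statement about the oscillator conic and its confocal caustics in $\C^n$, which in turn is governed by the $2\times 2$ Lax matrix $\mathcal L(\lambda)$ of \eqref{LA3} for the Jacobi flow \eqref{CP1}; one sends $a_0\to0$ and uses that $\det\mathcal L(\lambda)$ restricted to the reduced variables $(\mathbf x,\mathbf y)$ coincides, after the polar substitution \eqref{polar-dis2}, with $Q_\lambda(\frac{\mu}{\mathbf x},\frac{\mu}{\mathbf x})+Q_\lambda(\mathbf y,\mathbf y)+\sigma$ combined with $1+Q_\lambda(\mathbf x,\mathbf x)$ exactly in the pattern of \eqref{MC}. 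Since $\det\mathcal L_{\mathbf x_k,\mathbf y_k}(\lambda)$ is the same function evaluated along the billiard orbit, its vanishing at $\eta_i$ is precisely the condition $\Phi_{\mathbf x_k,\mathbf y_k}(\eta_i)=0$ of \eqref{MC} (generalized by the Rosochatius terms), i.e.\ tangency of $l_k$ to $\mathcal Q_{\eta_i}$.

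For part (ii), I would argue exactly as in the classical Poncelet situation recalled just before the theorem: the set of billiard trajectories sharing the fixed caustics $\mathcal Q_{\eta_1},\dots,\mathcal Q_{\eta_{n+d}}$ forms (a dense subset of) a common level set of the commuting integrals $\det\mathcal L_{\cdot,\cdot}(\eta_i)$, which by the integrability established in Theorems \ref{integrabilnost}, \ref{integrabilnost2} (and Fedorov's analysis \cite{Fe2}) is an invariant torus on which the billiard map acts as a translation. Periodicity of one orbit on such a torus forces the translation to be of finite order, hence every orbit on the same torus is periodic with the same period; this is the Poncelet-type closure argument from \cite{P, CCS, DrRa2, DrRa}.

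The main obstacle I anticipate is the first, bookkeeping-heavy step: correctly tracking how the reduction \eqref{polar-dis2} and the $a_0\to0$ limit transform $\det\mathcal L(\lambda)$, and in particular verifying that the Rosochatius contributions produce exactly $d$ new caustic parameters and that the tangency interpretation \eqref{MC} survives the reduction (on the domain $D^n$ the relevant ``lines'' are the projected conics, not straight lines, so one must be careful about what ``tangent to $\mathcal Q_\eta$'' means and why the algebraic condition still reads off the same polynomial). Once that identification is in place, both (i) and (ii) follow by the standard arguments indicated above.
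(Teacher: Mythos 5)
Your overall route for part (i) is the paper's route -- lift the Jacobi--Rosochatius trajectories through the polar reduction \eqref{polar-dis2} to straight lines ($\sigma=0$) or harmonic-oscillator conics ($\sigma\ne0$) and read tangency off $\det\mathcal L(\lambda)$ -- but the step you flag as ``the main obstacle'' is exactly the content of the proof, and you leave it unresolved. Concretely: \eqref{MC} (and Fedorov's \eqref{MCF}) characterizes tangency of \emph{lines} (resp.\ oscillator conics) to $\mathcal Q_\eta$, not of the projected curves $l_k$, and the Rosochatius-augmented determinant is not covered by these statements as they stand. The paper closes this by lifting only the coordinates with $\mu_j\ne0$, i.e.\ working in $\C^d\times\R^{n-d}\cong\R^{n+d}$ with the \emph{symmetric} matrix $a=(a_1,a_1,\dots,a_d,a_d,a_{d+1},\dots,a_n)$: upstairs the segments are genuine lines/conics, but since the lifted ellipsoid has double semi-axes one needs a Chasles theorem for symmetric ellipsoids (Lemma \ref{CTS}, proved via the factorization \eqref{MC*}, giving $N=n+d-1$ caustics), respectively the remark that \eqref{MCF} remains valid for multiple eigenvalues; one then uses the $SO(2)^d$-invariance of the caustic quadrics \eqref{KR} so that they project onto the confocal family \eqref{confocal2} and tangency descends to the curves \eqref{C5}. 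Your sketch instead invokes \eqref{MC} ``generalized by the Rosochatius terms'' and an $a_0\to0$ limit of \eqref{LA3}, which conflates the $(2n+1)$-dimensional complex-ellipsoid picture with the correct partial lift and never addresses the multiplicity issue (the standard Chasles count assumes distinct semi-axes) nor why tangency survives the projection. Your pole-counting giving $n+d$ (resp.\ $n-1+d$) zeros is fine.

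For part (ii) there is a second gap: the common level set of the integrals determined by the caustics is in general a \emph{union} of Lagrangian tori, not a single torus, and the theorem asserts periodicity for \emph{any} trajectory sharing the caustics, not only those on the torus of the given periodic orbit. Your translation-on-a-torus argument only handles one component. The paper first notes that the caustic parameters fix the values $c_1,\dots,c_n$ of the integrals \eqref{f-JR} uniquely via $\det\mathcal L(\eta_l)=0$, passes to the Hamiltonian picture on $T^*D/r\setminus\Sigma$ and applies Arnold--Liouville on each regular torus, and then uses that all components of $M_{c_1,\dots,c_n}$ are mapped onto one another by the reflections \eqref{simetrije}, so periodicity (with the same period) propagates to every component; some such argument relating the components is needed to finish your proof.
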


\begin{proof} (i) Firstly, we take $\sigma=0$. For a sake of simplicity, assume $\mu_1,\dots,\mu_d
\ne 0, \, \mu_{d+1}=\dots=\mu_n=0$.

Consider the billiard system without external forces within a
symmetric ellipsoid $E^{n+d-1}$ in
$$
\R^{n+d}\cong\C^d(z_1,\dots,z_d)\times\R^{n-d}(x_{d+1},\dots,x_d),
$$
where we take
$a=(a_1,a_1,\dots,a_d,a_d,a_{d+1},a_{d+2},\dots,a_n)$, $a_i\ne
a_j$, $i\ne j$. Let
\begin{equation}\label{lift}
(\tilde{\mathbf x}_k,\tilde{\mathbf
y}_k)=(z_{1,k},\dots,z_{d,k},x_{d+1,k},\dots,x_{n,k},
p_{1,k},\dots,p_{d,k},y_{d+1},\dots,y_{n,k})
\end{equation}
be a generic billiard trajectory with the values of the integrals
\begin{equation}\label{lift*}
h_j=\frac{i}2(z_{k,j}\bar p_{k,j}-p_{k,j}\bar z_{k,j})=\mu_j,
\quad j=1,\dots,d.
\end{equation}

From Lemma \eqref{CTS} below, the lines $\tilde p_k=\tilde{\mathbf
x}_k\tilde{\mathbf x}_{k+1} \subset \C^d\times\R^{n-d}$ determined
by the billiard trajectory \eqref{lift} are tangent to the
$N=n+d-1$ quadrics
\begin{equation}\label{KR}
\tilde{\mathcal Q}_{\eta_i}: \,
\frac{z_1^2}{a_1-\eta_l}+\dots+\frac{z_d^2}{a_d-\eta_l}+
\frac{x_{d+1}^2}{a_{d+1}-\eta_l}+\dots+\frac{x_n^2}{a_n-\eta_l}=1,
\end{equation}
where $\eta_l$ are zeros of the corresponding polynomial
\eqref{MC*}.

By applying the reduction \eqref{polar-dis2} in the variables
$z_1,\dots,z_d$ we get the billiard trajectory $(\mathbf
x_k,\mathbf y_k)$ of the Rosochatius billiard map \eqref{BM*} with
$\sigma=0$, $\mu_1,\dots,\mu_d \ne 0, \, \mu_{d+1}=\dots=\mu_n=0$.
At the same time,
 $\det\mathcal L_{\mathbf x_k,\mathbf y_k}(\eta_i)=0$,
 $i=1,\dots,n+d-1$ and the lines $\tilde p_k$ project to the
 curves \eqref{C5} tangent to the quadrics \eqref{C6}.

In the other direction, let $(\mathbf x_k,\mathbf y_k)$ be a
trajectory of the Rosochatius billiard map \eqref{BM*} with
$\sigma=0$, $\mu_1,\dots,\mu_d \ne 0, \, \mu_{d+1}=\dots=\mu_n=0$.
Then we can lift $(\mathbf x_k,\mathbf y_k)$ to the
$SO(2)^d$-invariant family of billiard trajectories \eqref{lift}
of the billiard system without external forces within $E^{n+d-1}$
satisfying \eqref{lift*}.

The case $\sigma\ne 0, \mu_1=\dots=\mu_n=0$ is proved by Fedorov
\cite{Fe2}. For $\sigma\ne 0$, instead of the equation \eqref{MC},
which characterizes tangent lines to the quadric $\mathcal
Q_\eta$, we use the equation
\begin{equation}\label{MCF} \Phi_{\mathbf x,\mathbf y}^\sigma(\eta)=(Q_\eta(\mathbf
x,\mathbf x)+1)(Q_\eta(\mathbf y,\mathbf
y)+\sigma)-Q_\eta^2(\mathbf x,\mathbf y)=0.
\end{equation}

A conic $l=\{\mathbf x(t)\,\vert\,t\in\R\}$ associated to a
solution of the equation
$$
\dot{\mathbf x}(t)=\mathbf y, \qquad \dot{\mathbf
y}(t)=-\sigma\mathbf x
$$
with the initial condition $(\mathbf x_0,\mathbf y_0)$ is tangent
to $\mathcal Q_\eta$ if and only if $\Phi_{\mathbf x,\mathbf
y}^\sigma(\eta)=0$ (Proposition 1 in \cite{Fe2}). The statement
can be applied for matrices $a$ with multiple eigenvalues and
repeating the arguments given above, item (i) follows.

\

(ii) Suppose that the trajectory $(\mathbf x_k, \mathbf y_k)$ with
the energy $H=h$ is periodic. The Jacobi-Rosochatius billiard
\eqref{BM*} can be considered as a usual integrable Hamiltonian
system on $T^*D/r\setminus \Sigma$ (we take only trajectories
which are not tangent to the boundary $E^{n-1}_*$, that is
$J_k=2\langle \mathbf x_k,\mathbf y_k\rangle\ne 0$), see
\cite{Laz}. The associated continuous billiard trajectory
$(\mathbf x(t),\mathbf y(t))$ is periodic as well.

The parameters of caustics $\eta_1,\dots,\eta_{n+d}$
(respectively, $\eta_1,\dots,\eta_{n+d-1}$) correspond to the
unique values $c_1,\dots,c_n$, $c_1+\dots+c_n=2h$ of the integrals
\eqref{f-JR} (where we set $i,j=1,\dots,n$) determined from the
equations
$$
\det\mathcal L_{\mathbf x(t),\mathbf
y(t)}(\eta_l)=\sigma+\sum_{i=1}^n
\frac{c_i}{\eta_l-a_i}+\sum_{i=1}^n\frac{\mu_i^2}{(\eta_l-a_i)^2}=0.
$$

The curves \eqref{C5} tangent to the caustics \eqref{C6} are
associated to billiard trajectories which
 belong to the
Lagrangian tori that are components of the invariant level set
$$
M_{c_1,\dots,c_n} \subset T^*D/r\setminus \Sigma: \qquad
f_1=c_1,\dots,f_n=c_n.
$$

According to the Arnold-Liuville theorem, if a trajectory on a
regular Lagrangian torus is periodic, all trajectories on the same
torus are also periodic with the same period. Since all components
of $M_{c_1,\dots,c_n}$ are related by the reflections
\eqref{simetrije}, it is clear that if one Lagrangian torus is
periodic, so are the others one.
\end{proof}

We note that there is a natural generalization of the presented
results to the ellipsoidal billiards on spheres and hyperbolic
spaces \cite{Kozlov, Ves2, DJR, Ta}.

\begin{lem}\label{CTS}{\rm [The Chasles theorem for a symmetric
ellipsoid]} Consider a billiard within a symmetric ellipsoid
$E^{n-1}$ defined by
\begin{eqnarray}
&& \label{symmetric2} a_i=\alpha_1, \qquad i\in I_1=\{1,\dots,k_1\},\,\dots\,, \\
&& \nonumber a_i=\alpha_r, \qquad i\in
I_r=\{k_1+\dots+k_{r-1}+1,\dots,k_0+\dots+k_{r-1}+k_r\},
\end{eqnarray}
$\alpha_i\ne\alpha_j, i\ne j$, $k_1+\dots+k_r=n$. Let $\mathbf
x_k$ be a generic sequence of impact points. Then the sequence of
lines $\mathbf x_k\mathbf x_{k+1}$ is simultaneously tangent to
the same set of quadrics $\mathcal Q_{\eta_1},\dots,\mathcal
Q_{\eta_{N}}$ from the confocal family \eqref{confocal2}, where
$N=\delta_1+\dots+\delta_r-1$, $\delta_s=2$ for $k_s=\vert I_s
\vert \ge 2$, and $\delta_s=1$ for $k_s=\vert
I_s\vert=1$.\footnote{Note that $N=r+\rho-1$, where $r+\rho$ is
the dimension of invariant isotropic tori of the corresponding
geodesic flow on the symmetric ellipsoid \eqref{ellipsoid} in
$\R^{n+1}$ with semi-axis $\sqrt{a_0}\approx 0$ and
$\sqrt{a_1},\dots,\sqrt{a_n}$ given by \eqref{symmetric2} (Theorem
\ref{integrabilnost}).} Further, due to the
$SO(k_1)\times\dots\times SO(k_r)$ symmetry of the system, if
$\mathbf x_k$ is a billiard trajectory, so is $R(\mathbf x_k)$,
$R\in SO(k_1)\times\dots\times SO(k_r)$, and the lines $R(\mathbf
x_k\mathbf  x_{k+1})$ are tangent to the same set of $N$ quadrics.
\end{lem}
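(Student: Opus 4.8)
The statement is the symmetric-ellipsoid version of the usual billiard Chasles theorem, and I would deduce it from the Lax representation of Theorem~\ref{C3} specialized to $\sigma=0$, $\mu_1=\dots=\mu_n=0$, reasoning as in the proof of Theorem~\ref{C4}(i). The only non-routine point is that Theorem~\ref{C3} is phrased for ellipsoids with distinct semi-axes; but distinctness is used there solely to obtain the \emph{equivalence} of the billiard map \eqref{BM*} with the matrix equation, whereas the matrix equation \eqref{LAF} itself --- which after clearing denominators is a polynomial identity in the coordinates and in $\lambda$, valid whenever $(\mathbf x_{k+1},\mathbf y_{k+1})$ is the billiard image of $(\mathbf x_k,\mathbf y_k)$ --- survives the coalescence of the $a_i$ verbatim, since the billiard map of the symmetric ellipsoid \eqref{symmetric2} is given by the same formulas and no $a_i-a_j$ sits in a denominator. (This parallels Section~5, where the symmetric Jacobi--Rosochatius flow \eqref{RH} still implies \eqref{LA4}.) Hence $\Phi_{\mathbf x_k,\mathbf y_k}(\lambda):=\det\mathcal L_{\mathbf x_k,\mathbf y_k}(\lambda)$ is preserved by the billiard map: it does not depend on $k$ along any trajectory. (Alternatively one may invoke the purely geometric fact that a reflection off $E^{n-1}$ carries a line tangent to a confocal quadric to a line tangent to the same quadric, which requires no distinctness hypothesis.)

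Next I would apply Moser's tangency criterion \eqref{MC}: for every $\eta$ different from all $\alpha_s$, the line $p(\mathbf x,\mathbf y)=\{\mathbf x+s\mathbf y\mid s\in\R\}$ is tangent to the confocal quadric $\mathcal Q_\eta$ of \eqref{confocal2} precisely when $\Phi_{\mathbf x,\mathbf y}(\eta)=0$; this concerns a line and a single nondegenerate quadric only, and is insensitive to symmetry of $E^{n-1}$. Combined with the $k$-independence of $\Phi_{\mathbf x_k,\mathbf y_k}$, every common root $\eta$ of these coinciding functions produces a quadric $\mathcal Q_\eta$ tangent to \emph{all} the lines $\mathbf x_k\mathbf x_{k+1}$ of the trajectory.

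It remains to count these roots. With $\mathbf x_{I_s},\mathbf y_{I_s}$ the projections of $\mathbf x,\mathbf y$ onto the block $I_s$, one has
\begin{equation*}
Q_\lambda(\mathbf x,\mathbf x)=\sum_{s}\frac{\langle\mathbf x_{I_s},\mathbf x_{I_s}\rangle}{\lambda-\alpha_s},\qquad
Q_\lambda(\mathbf x,\mathbf y)=\sum_{s}\frac{\langle\mathbf x_{I_s},\mathbf y_{I_s}\rangle}{\lambda-\alpha_s},
\end{equation*}
so $\Phi_{\mathbf x,\mathbf y}(\lambda)=(Q_\lambda(\mathbf x,\mathbf x)+1)Q_\lambda(\mathbf y,\mathbf y)-Q_\lambda(\mathbf x,\mathbf y)^2$ has poles only at $\lambda=\alpha_s$, of order at most two, and the coefficient of $(\lambda-\alpha_s)^{-2}$ equals $\langle\mathbf x_{I_s},\mathbf x_{I_s}\rangle\langle\mathbf y_{I_s},\mathbf y_{I_s}\rangle-\langle\mathbf x_{I_s},\mathbf y_{I_s}\rangle^2=\sum_{i<j\in I_s}(x_iy_j-x_jy_i)^2$, which is the billiard invariant $P_s$. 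By Lagrange's identity it vanishes identically for $k_s=1$ (simple pole, $\delta_s=1$) and is nonzero for a generic trajectory when $k_s\ge2$ (double pole, $\delta_s=2$). Since $\Phi_{\mathbf x,\mathbf y}(\lambda)=\langle\mathbf y,\mathbf y\rangle\lambda^{-1}+O(\lambda^{-2})$ at infinity and $\langle\mathbf y,\mathbf y\rangle$ (the doubled energy, a billiard invariant) is nonzero, the polynomial $\Phi_{\mathbf x,\mathbf y}(\lambda)\prod_s(\lambda-\alpha_s)^{\delta_s}$ has degree exactly $N=\delta_1+\dots+\delta_r-1$, with $N$ simple roots $\eta_1,\dots,\eta_N$, none equal to an $\alpha_s$, for a generic trajectory. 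The genericity conditions used ($\langle\mathbf y,\mathbf y\rangle\ne0$; $P_s\ne0$ for $k_s\ge2$; simplicity of the $\eta_l$; and $\langle\mathbf x_k,a^{-1}\mathbf y_k\rangle\ne0$, so that the lines are not tangent to $E^{n-1}$) are open and, being read off the $k$-independent function $\Phi$, hold simultaneously along the whole trajectory. This gives the first assertion.

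For the symmetry statement: straight lines, the boundary $E^{n-1}$ of \eqref{symmetric2}, and the Euclidean reflection law are all invariant under $G=SO(k_1)\times\dots\times SO(k_r)$, so $G$ carries billiard trajectories to billiard trajectories; each block pairing $\langle\cdot_{I_s},\cdot_{I_s}\rangle$ is $G$-invariant, hence so are $Q_\lambda(\mathbf x,\mathbf x)$, $Q_\lambda(\mathbf x,\mathbf y)$, $Q_\lambda(\mathbf y,\mathbf y)$, and therefore $\Phi_{\mathbf x,\mathbf y}(\lambda)$. Thus $\Phi_{R\mathbf x_k,R\mathbf y_k}(\lambda)=\Phi_{\mathbf x_k,\mathbf y_k}(\lambda)$ for $R\in G$, the roots $\eta_l$ are unchanged, and the lines $R(\mathbf x_k\mathbf x_{k+1})$ are tangent to the very same $N$ quadrics $\mathcal Q_{\eta_1},\dots,\mathcal Q_{\eta_N}$. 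The one point that genuinely needs attention is the persistence of Fedorov's Lax identity through the collision of semi-axes in the first paragraph, and --- being a polynomial identity --- it is immediate.
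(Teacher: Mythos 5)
Your proposal is correct and follows essentially the same route as the paper: Moser's tangency criterion \eqref{MC} (which remains valid for multiple eigenvalues), the partial-fraction structure of $\det\mathcal L_{\mathbf x,\mathbf y}(\lambda)$ as in \eqref{moser-rel*} giving the count $N=\delta_1+\dots+\delta_r-1$ via the residues $\tilde f_s$, $P_s$, and the invariance of this determinant under the billiard map to conclude that all lines $\mathbf x_k\mathbf x_{k+1}$ touch the same quadrics. The only difference is that you explicitly justify why the Lax identity of Theorem \ref{C3} (stated for distinct semi-axes) persists under coalescence of the $a_i$, and you add the genericity and $SO(k_1)\times\dots\times SO(k_r)$-invariance details, points the paper treats implicitly.
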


\begin{proof}
Note that the description \eqref{MC} of the tangent lines to the
quadric $\mathcal Q_\eta$ does hold for the matrices $a$ with
multiple eigenvalues \eqref{symmetric2}. According to
\eqref{moser-rel*}, the number of quadrics tangent to a generic
line $p=p(\mathbf x,\mathbf y)$ equals to the number of zeros of
the polynomial
\begin{eqnarray}
\nonumber \Psi_{\mathbf x,\mathbf y}(\lambda) &=&
 (\lambda-\alpha_1)^{\delta_1} \cdots
(\lambda-\alpha_r)^{\delta_r}\det\mathcal L_{\mathbf x,\mathbf
y}(\lambda)\\
\label{MC*}  &=& \sum_{s=1}^r
\left((\lambda-\alpha_s)^{\delta_s-1}\prod_{i\ne s}
(\lambda-\alpha_i)^{\delta_i}\tilde f_s+\prod_{i\ne s}
(\lambda-\alpha_i)^{\delta_i}P_s\right),
\end{eqnarray}
where $\tilde f_s$, $P_s$ are given by \eqref{f-JR*}, \eqref{P-s}
(where we set $\sigma=0$, $\mu_i=0$). Therefore, a generic line
$p=p(\mathbf x,\mathbf y)$ is tangent to
$N=\delta_1+\dots+\delta_r-1$ quadrics from the family
\eqref{confocal2}.

Consider a generic billiard trajectory $(\mathbf x_k,\mathbf y_k)$
within a symmetric ellipsoid \eqref{E3}, \eqref{symmetric2}. As
above, if the line $p_k=p(\mathbf x_k,\mathbf y_k)=\mathbf
x_k\mathbf x_{k+1}$ is tangent to the quadric $\mathcal Q_{\eta}$
defined by the equation \eqref{confocal2}, i.e, $\Psi_{\mathbf
x_{k},\mathbf y_{k}}(\eta)=0$,  then the lines $p_k$, $k\in\mathbb
Z$ are also tangent to $\mathcal Q_\eta$.
\end{proof}

\section{Hierarchy of the Lax representations}

By taking the limit $a_0 \rightarrow 0$ and assuming $\mu_0=0$,
from Theorem \ref{T6} we can write down the Lax representation
\begin{eqnarray*} && \dot{\mathcal L}_{\mathbf x,\mathbf
y}(\lambda)=[\mathcal L_{\mathbf x,\mathbf y}(\lambda),\mathcal
A_{\mathbf x,\mathbf y}(\lambda)],
 \\
 && \mathcal L_{\mathbf x,\mathbf y}(\lambda)=
\begin{pmatrix}
Q_\lambda(\mathbf x,\mathbf y) & Q_\lambda(\mathbf y,\mathbf y)+Q_\lambda(\frac{\mu}{\mathbf x},\frac{\mu}{\mathbf x})+\sigma \\
-1-Q_\lambda(\mathbf x,\mathbf x) & -Q_\lambda(\mathbf y,\mathbf
x)
\end{pmatrix}, \\
&& \mathcal A_{\mathbf x,\mathbf y}(\lambda)=
\begin{pmatrix}
0 & -\sigma \\
1 & 0
\end{pmatrix}.
\end{eqnarray*}
for the Jacobi-Rosochatius system on $\R^n$ \eqref{JR}, i.e, for
the trajectories of the billiard system between the impacts. Here
we used
\begin{eqnarray*}
\frac{\sigma-\langle A^{-1}y,y\rangle-\langle
A^{-1}\frac{\mu}x,\frac{\mu}x\rangle}{\langle x,A^{-2}
x\rangle}=\frac{\sigma-\langle a^{-1}\mathbf y,\mathbf
y\rangle-\frac{a_0(1-2\langle \mathbf x,a^{-1}\mathbf
y\rangle)^2}{4(1-\langle \mathbf x,a^{-1}\mathbf
x\rangle)}-\langle a^{-1}\frac{\mu}{\mathbf x},\frac{\mu}{\mathbf
x}\rangle}{\langle \mathbf x,a^{-2} \mathbf
x\rangle+\frac{1}{a_0}(1-\langle \mathbf x,a^{-1}\mathbf
x\rangle)} \rightarrow 0,
\end{eqnarray*}
as $ a_0 \rightarrow 0$, where $x=(x_0,\mathbf x)$ and
$y=(y_0,\mathbf y)$ satisfy the constraints \eqref{E-constraints}.

This Lax representation, for $\mu_1=\dots=\mu_n$, is equivalent to
the Lax representation for the harmonic oscillator given in
\cite{EEKT}, where, by induction, the Lax representations for
natural mechanical systems with polynomial potentials separable in
elliptic coordinates in $\R^n$ are given  (see also \cite{AW}). In
the same way, by using Theorem \ref{T6}, we can give the $2\times
2$-Lax representations for the separable potential perturbations
of the Jacobi-Rosochatius system on $E^n$.

\subsection{Separable potentials in $\R^{n+1}$}
Recall, a potential $V(x)$ is separable in the elliptic
coordinates $\lambda_0 < a_0 < \lambda_1 < a_1 < \dots < \lambda_n
< a_n$ defined by \eqref{confocal} if and only if it is a solution
of the Bertrand-Darboux equations
\begin{equation}\label{BD}
(a_i-a_j)\frac{\partial^2 V}{\partial x_i \partial x_j}+ \left(
x_i\frac{\partial}{\partial x_j}-  x_j\frac{\partial}{\partial
x_i}\right) \left(2V+\sum_{k=0}^{n+1} x_k \frac{\partial
V}{\partial x_k}\right)=0, \quad i\ne j
\end{equation}
(Benenti \cite{Be}, see also Marshall and Wojciechowski
\cite{MW}). The solutions of \eqref{BD} can be written in the form
$V(x)=\sum_i x_i \frac{\partial \mathcal V}{\partial x_i}$, where
$\mathcal V(x)$ are solutions of
\begin{equation*}\label{ZZ}
(a_i-a_j)\frac{\partial^2 \mathcal V}{\partial x_i \partial
x_j}=\left( x_i\frac{\partial}{\partial x_j}-
x_j\frac{\partial}{\partial x_i}\right) \left(\sum_k
x_k\frac{\partial \mathcal V}{\partial x_k} \right), \quad i\ne j
\end{equation*}
(see \cite{Za}). Then, a complete set of commuting integrals is
given by
\begin{equation}\label{novi int}
f_i=y_i^2+\sum_{j\ne i}\frac{(y_ix_j-y_jx_i)^2}{a_i-a_j}+2F_i(x),
\qquad i=0,1,\dots,n,
\end{equation}
where $F_i(x)=x_i\frac{\partial \mathcal V}{\partial x_i}$
\cite{MW, Za}.

Polynomial potentials are described in \cite{V, Za, KBM}. Basic
homogeneous polynomial solutions $V^{(k)}$ of degree $2k$ of the
equations \eqref{BD} in elliptic coordinates reads
$$
V^{(k)}(\lambda_0,\dots,\lambda_n)=-\sum_{j=0}^{n+1}\frac{\lambda_j^{k-1}\prod_i(\lambda_j-a_i)}{\prod_{i\ne
j}(\lambda_j-\lambda_i)}
$$
and $V^{(k)},F_0^{(k)},\dots,F_n^{(k)}, \, k\in\mathbb N$ satisfy
the system of the recurrence relations
\begin{equation}\label{rec}
F_i^{(k+1)}=a_i F_i^{(k)}-x_i^2 V^{(k)}, \quad F_i^{(1)}=x_i^2,
\quad V^{(k)}(x)=F^{(k)}_0+\dots+F^{(k)}_n
\end{equation}
(we use the notation given by Zaitsev \cite{Za}). For example,
\begin{eqnarray*}
&& V^{(1)}=\langle x,x\rangle \quad (\text{the Hook potential}), \quad F^{(1)}_i=x_i^2,  \\
&& V^{(2)}=\langle Ax,x\rangle- V^{(1)}\langle x,x\rangle \quad
(\text{the Garnier potential}), \quad
F^{(2)}_i=x_i^2(a_i-V^{(1)}),\\
&& V^{(3)}=\langle A^2 x,x\rangle- V^{(1)}\langle A x,x\rangle -
V^{(2)} \langle x,x\rangle, \quad F^{(3)}_i=x_i^2(a^2_i-a_i
V^{(1)}-V^{(2)})).
\end{eqnarray*}

The rational and the Laurent polynomial solutions of \eqref{BD}
are given in \cite{V, KBM, Dr, DJ}. The basis for degrees $-2$ and
$-4$ are given by
\begin{eqnarray*}
&& V^{(-1)}_s=\frac{1}{x_s^2}\quad (\text{the Rosochatius
potentials}),\\
&& V^{(-2)}_s(x)=\frac{1}{x_s^4} \left(1+\sum_{j\ne
s}\frac{x_j^2}{a_s-a_j}\right),
 \quad s=0,\dots,n, \\
&& F^{(-1)}_{s,i}=\frac{1}{a_i-a_s}\frac{x_i^2}{x_s^2}, \quad
F^{(-2)}_{s,i}=\frac{2}{a_i-a_s}\frac{x_i^2}{x_s^4}\left(1+\sum_{j\ne
s}\frac{x_j^2}{a_s-a_j}\right), \quad i\ne s, \\
&& F^{(-k)}_{s,s}=V^{(-k)}_s-\sum_{i\ne s} F^{(-k)}_{s,i}, \qquad
k=1,2.
\end{eqnarray*}

\subsection{Natural mechanical systems on ellipsoids with separable
potentials} Consider the motion of a material point on an
ellipsoid \eqref{ellipsoid} under the influence of the potential
\begin{equation}\label{sep-pot}
V(x)=V^+(x)+\frac12\sum_{i=0}^n\frac{\mu_i^2}{x_i^2}, \quad
V^+(x)=\frac12\sum_{k=1}^m \sigma_k V^{(k)}(x),
\end{equation}
where $\sigma_k$ are real parameters. The equations of motion are
\begin{equation}\label{H-sep-pot}
\dot x=y, \quad \dot y=-\frac{\langle A^{-1}y,y\rangle-\langle
\nabla V(x),A^{-1}x\rangle}{\langle A^{-2}x,x\rangle}
A^{-1}x-\nabla V(x).
\end{equation}

As a straightforward generalization of Theorem \ref{T6}, by using
the constraints \eqref{E-constraints}, the recurrence relations
\eqref{rec}, and the identities
$$
A_\lambda A^k=(\lambda-A)^{-1}A^k=\lambda^k
A^{-1}_\lambda-\sum_{i=0}^{k-1} \lambda^{k-i} A^i, \qquad
k\in\mathbb N
$$
we get.

\begin{thm} Suppose that the eigenvalues $a_i$ of the matrix $A$
are distinct. Up to the action of the group $\mathbb Z_2^{n+1}$
generated by the reflections
$$
(x_i,y_i)\longmapsto (s_ix_i,s_iy_i), \quad s_i=\pm 1, \qquad
i=0,1,\dots,n,
$$
the system \eqref{H-sep-pot} is equivalent to the matrix equation
\begin{equation}
\dot{\mathcal L}(\lambda)=[\mathcal L(\lambda),\mathcal
A(\lambda)]
\end{equation}
with $2\times 2$ matrices depending on the parameter $\lambda$
\begin{eqnarray*}
&& \mathcal L(\lambda)=
\begin{pmatrix}
q_\lambda(x,y) & q_\lambda(y,y)+q_\lambda(\frac{\mu}{x},\frac{\mu}{x})+\Delta(x,\lambda) \\
-1-q_\lambda(x,x) & -q_\lambda(y,x)
\end{pmatrix},\\
&& \mathcal A(\lambda)=\frac{1}{\langle A^{-2}x,x\rangle}
\begin{pmatrix}
0 & \frac{1}{\lambda}(\langle \nabla V^+(x),A^{-1}x
\rangle-\langle A^{-1}y,y\rangle-\langle
A^{-1}\frac{\mu}x,\frac{\mu}x\rangle)-\Omega(x,\lambda)\\ \langle
A^{-2}x,x\rangle & 0
\end{pmatrix},
\end{eqnarray*}
where
\begin{eqnarray*}
&&\Delta(x,\lambda)=\sigma_1\Delta_1(x,\lambda)+\dots+\sigma_m\Delta_m(x,\lambda),\\
&&\Delta_k(x,\lambda)=\lambda^{k-1}-\lambda^{k-2}V^{(1)}-\lambda^{k-3} V^{(2)}-\dots-\lambda V^{(k-2)}-V^{(k-1)},\\
&&\Omega(x,\lambda)=\langle A^{-2}
x,x\rangle\left(\sigma_1\Omega_1(x,\lambda)+\dots+\sigma_m
\Omega_m(x,\lambda)\right),
\end{eqnarray*}
and $\Omega_k(x,\lambda)$ are determined from the equations
$$
2\Omega_k(1+q_\lambda(x,x))=2\Delta_k(x,\lambda)+\langle
A^{-1}_\lambda x,\nabla V^{(k)}(x)\rangle, \qquad k=1,\dots,m.
$$
\end{thm}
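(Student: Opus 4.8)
The plan is to follow the proof of Theorem~\ref{T6} almost verbatim. After the time substitution $dt=\langle A^{-2}x,x\rangle\,d\tau$ used there, it suffices to show that \eqref{H-sep-pot} implies $\mathcal L'=[\mathcal L,\langle A^{-2}x,x\rangle\mathcal A]$, where $(\cdot)'=\langle A^{-2}x,x\rangle\frac{d}{dt}(\cdot)$. The structural point is that the whole dependence on the potential data is concentrated in the $(1,2)$-entries of both $\mathcal L$ and $\langle A^{-2}x,x\rangle\mathcal A$: writing $\mathcal L=\mathcal L_0+\mathcal L_1$ and $\langle A^{-2}x,x\rangle\mathcal A=\mathcal B_0+\mathcal B_1$, where $\mathcal L_0,\mathcal B_0$ are the geodesic matrices of Theorem~\ref{T3} ($\sigma=0$, $\mu_i=\sigma_k=0$) and $\mathcal L_1,\mathcal B_1$ are strictly upper triangular, one has $[\mathcal L_1,\mathcal B_1]=0$; moreover $\mathcal L_1,\mathcal B_1$ do not involve $y$, and $\mathcal L_1$, $\mathcal B_1$ and the right-hand side of \eqref{H-sep-pot} depend affinely on the parameters $(\mu_0^2,\dots,\mu_n^2,\sigma_1,\dots,\sigma_m)$. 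A short inspection then shows that $\mathcal L'-[\mathcal L,\langle A^{-2}x,x\rangle\mathcal A]$ is affine in these parameters; since its value at $\mu=\sigma=0$ vanishes by Theorem~\ref{T3}, it suffices to check the Lax equation separately for each single Rosochatius parameter $\mu_i$ and for each single polynomial term $V^{+}=\tfrac12\sigma_k V^{(k)}$. The Rosochatius contributions are literally those already verified in Theorem~\ref{T6}, so only a single $V^{(k)}$ requires new work.

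So fix $V^{+}=\tfrac12\sigma_k V^{(k)}$ and $\mu=0$. Only the $(1,2)$-entry of the Lax equation then carries the potential, the other three entries being covered by the geodesic case. Differentiating $\ell:=q_\lambda(y,y)+\sigma_k\Delta_k(x,\lambda)$ along \eqref{H-sep-pot}, using the constraints \eqref{E-constraints}, the resolvent identity $A^{-1}A_\lambda^{-1}=A_\lambda^{-1}A^{-1}=(A^{-1}+A_\lambda^{-1})/\lambda$, and $\nabla_x\Delta_k=-\sum_{j=1}^{k-1}\lambda^{k-1-j}\nabla V^{(j)}$, the terms of $(q_\lambda(y,y))'$ containing $\langle A^{-1}y,y\rangle$ and $\langle\nabla V^{(k)},A^{-1}x\rangle$ reproduce exactly the $\tfrac1\lambda(\cdots)$ part of the $(1,2)$-entry $b_\lambda$ of $\langle A^{-2}x,x\rangle\mathcal A$, and one is left with
\[
\ell'-2q_\lambda(x,y)\,b_\lambda=\sigma_k\,\langle A^{-2}x,x\rangle\,\Bigl\langle\,A_\lambda^{-1}\nabla V^{(k)}+\sum_{j=1}^{k-1}\lambda^{k-1-j}\nabla V^{(j)}-2\Omega_k\,A_\lambda^{-1}x,\ y\,\Bigr\rangle .
\]
As $y$ runs over the hyperplane $\langle A^{-1}x,y\rangle=0$, the Lax equation is thus equivalent to the requirement that, on the ellipsoid $\langle A^{-1}x,x\rangle=1$, the vector $A_\lambda^{-1}\nabla V^{(k)}+\sum_{j=1}^{k-1}\lambda^{k-1-j}\nabla V^{(j)}-2\Omega_k A_\lambda^{-1}x$ be a scalar multiple of $A^{-1}x$, with $\Omega_k$ normalized by $2\Omega_k(1+q_\lambda(x,x))=2\Delta_k+\langle A_\lambda^{-1}x,\nabla V^{(k)}\rangle$ as in the statement.

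To prove this vector identity I would first record the resolvent-type relation $\bigl(1+q_\lambda(x,x)\bigr)\Delta_k(x,\lambda)=\lambda^{k-1}+\sum_{i=0}^{n}F_i^{(k)}/(\lambda-a_i)$, which follows from the recurrence \eqref{rec} by a one-line induction on $k$ using $a_i/(\lambda-a_i)=\lambda/(\lambda-a_i)-1$ and $\sum_i F_i^{(k)}=V^{(k)}$. Then, combining the closed form $V^{(k)}=\langle A^{k-1}x,x\rangle-\sum_{j=1}^{k-1}V^{(j)}\langle A^{k-1-j}x,x\rangle$ with the operator identity $A_\lambda^{-1}A^{s}=\lambda^{s}A_\lambda^{-1}-\sum_{i=0}^{s-1}\lambda^{s-1-i}A^{i}$, one runs an induction on $k$ (feeding in the already-established cases $j<k$ and using $\langle A^{-1}x,x\rangle=1$ repeatedly): the terms telescope so that $A_\lambda^{-1}\nabla V^{(k)}+\sum_{j<k}\lambda^{k-1-j}\nabla V^{(j)}$ lies in $\R A_\lambda^{-1}x+\R A^{-1}x$ with $A_\lambda^{-1}x$-coefficient precisely $2\Omega_k$, which is exactly the claim. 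The main obstacle is this telescoping step: $\nabla V^{(k)}$ is \emph{not} proportional to $(F_i^{(k)}/x_i)_i$, so one must carry along the cross terms $V^{(j)}\nabla\langle A^{k-1-j}x,x\rangle$ and exploit the recurrence \eqref{rec} (equivalently, that $V^{(k)}$ solves the Bertrand--Darboux equations \eqref{BD}) to see all the cancellations; this is the place where separability of $V^{(k)}$ genuinely enters. Once the identity is established the Lax equation holds; undoing the time reparametrization and noting, exactly as in Theorem~\ref{T6}, that $\mathcal L(\lambda)$ and $\mathcal A(\lambda)$ are invariant under the reflections $(x_i,y_i)\mapsto(-x_i,-y_i)$ gives the stated equivalence up to the action of $\mathbb Z_2^{n+1}$.
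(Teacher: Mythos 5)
Your overall strategy coincides with the paper's (which is itself only an outline): pass to the reparametrized time of Theorem \ref{T1}, verify $\mathcal L'=[\mathcal L,\langle A^{-2}x,x\rangle\mathcal A]$ directly using the constraints \eqref{E-constraints}, linearity in $(\mu_0^2,\dots,\mu_n^2,\sigma_1,\dots,\sigma_m)$, the recurrence \eqref{rec} and the identity $A^{-1}_\lambda A^{s}=\lambda^{s}A^{-1}_\lambda-\sum_{i=0}^{s-1}\lambda^{s-1-i}A^{i}$; and your reduction of the genuinely new content to the statement that, on $\langle A^{-1}x,x\rangle=1$, the vector $A^{-1}_\lambda\nabla V^{(k)}+\sum_{j<k}\lambda^{k-1-j}\nabla V^{(j)}-2\Omega_k A^{-1}_\lambda x$ is proportional to $A^{-1}x$ is the correct one (I recover exactly this from the $(1,2)$-entry). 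However, your bookkeeping claim that ``only the $(1,2)$-entry carries the potential, the other three entries being covered by the geodesic case'' is wrong. Writing $\mathcal L=\bigl(\begin{smallmatrix}\alpha&\beta\\ \gamma&-\alpha\end{smallmatrix}\bigr)$ and $\langle A^{-2}x,x\rangle\mathcal A=\bigl(\begin{smallmatrix}0&b\\ c&0\end{smallmatrix}\bigr)$, the cross commutators in your decomposition, $[\mathcal L_0,\mathcal B_1]$ and $[\mathcal L_1,\mathcal B_0]$, have nonzero diagonal parts (proportional to $\gamma b_1$ and $\beta_1 c$), and $\alpha'=(q_\lambda(x,y))'$ contains $\nabla V$ through $\dot y$; so the $(1,1)$-equation $\alpha'=\beta c-\gamma b$ does carry the potential. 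After the geodesic and Rosochatius cancellations it reduces, on the constraints, exactly to $2\Omega_k(1+q_\lambda(x,x))=2\Delta_k+\langle A^{-1}_\lambda x,\nabla V^{(k)}\rangle$ --- i.e.\ this entry is where the normalization of $\Omega_k$ is consumed. With $\Omega_k$ defined as in the statement the check is immediate, but it is a check you must make, not ``the geodesic case''.

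The more substantive gap is that the heart of the theorem relative to Theorem \ref{T6} --- the only place where the recurrence \eqref{rec} and the separable structure of $V^{(k)}$ enter --- is precisely the vector identity above, and your proposal does not prove it: you describe an induction in which ``the terms telescope'' and then concede that this telescoping is ``the main obstacle''. That computation \emph{is} the paper's proof (the paper reduces everything to \eqref{E-constraints}, \eqref{rec} and the displayed operator identity), so stopping there leaves the core step asserted rather than established. For what it is worth, your route does close: from \eqref{rec} one gets $F^{(k)}_i=x_i^2\,\Delta_k(x,a_i)$, which is equivalent to your relation $(1+q_\lambda(x,x))\Delta_k=\lambda^{k-1}+\sum_i F^{(k)}_i/(\lambda-a_i)$ via $\Delta_{k+1}=\lambda\Delta_k-V^{(k)}$; combining the closed form $V^{(k)}=\langle A^{k-1}x,x\rangle-\sum_{j<k}V^{(j)}\langle A^{k-1-j}x,x\rangle$ with the operator identity and $\langle A^{-1}x,x\rangle=1$, the induction can be carried out, and for $k=1,2$ one checks directly that the $A^{-1}_\lambda x$-coefficient is $2\Omega_1=2$, respectively $2\Omega_2=2\lambda-4\langle x,x\rangle$, with remainder along $A^{-1}x$. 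Until that induction (and the $(1,1)$-entry verification, and at least a remark on why the Lax equation conversely determines the flow up to $\mathbb Z_2^{n+1}$, as in Theorem \ref{T6}) is written out, the proposal is an accurate plan but not a proof.
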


For example,
\begin{eqnarray*}
&&\Delta_1=1, \qquad \qquad \qquad \Omega_1=1,\\
&&\Delta_2=\lambda-\langle x,x\rangle, \qquad\,\,
\Omega_2=\lambda-2\langle x,x\rangle,\\
&&\Delta_3=\lambda^2-\lambda\langle x,x\rangle-\langle
Ax,x\rangle+\langle x,x\rangle^2, \\
&&\Omega_3=\lambda^2-2\lambda\langle x,x\rangle-2\langle
Ax,x\rangle +3\langle x,x\rangle^2.
\end{eqnarray*}

Now, the integrals \eqref{novi int} and the Lax representation are
related by
\begin{eqnarray*}
\det(\mathcal L(\lambda)) &=&
(1+q_{\lambda}(x,x))\left(q_\lambda(y,y)+q_\lambda\left(\frac{\mu}{x},
\frac{\mu}{x}\right)+\Delta(x,\lambda)\right)-q_\lambda(x,y)^2\\
&=&
\sum_{k=1}^m\lambda^{k-1}\sigma_k+\sum_{i=0}^n\frac{f_i}{\lambda-a_i}+
\sum_{i=0}^n\frac{\mu_i^2}{(\lambda-a_i)^2}.
\end{eqnarray*}

By taking  $V^+=\sigma/2V^{(2)}=\sigma/2\langle
Ax,x\rangle-\sigma/2\langle x,x\rangle^2$ and assuming $\mu_0=0$,
in the limit $a_0 \rightarrow 0$ , we get the Lax representation
for a natural mechanical system in $\R^n$ under the influence of
the the Garnier potential $V=\sigma/2\langle a\mathbf x,\mathbf
x\rangle-\sigma/2\langle\mathbf x,\mathbf x\rangle^2$ obtained by
Antonowicz and Rauch-Wojciechowski \cite{AW} (see also \cite{Su}).
Note that, similarly as in Eilbecktt, Enol'skii, Kuznetsov and
Tsiganov \cite{EEKT}, one can consider the problem within a
framework of $r$-matrix method.

Finally note that the polynomials $V^{(k)}$, as well as the Lax
representation, are well defined for a symmetric ellipsoid
\eqref{ellipsoid}, \eqref{symmetric}. Therefore, repeating the
construction presented in Section 5, we obtain.

\begin{cor}
The system \eqref{H-sep-pot} on a symmetric ellipsoid
\eqref{ellipsoid}, \eqref{symmetric} is completely integrable in a
non-commutative sense by means of integrals \eqref{Ros} and
$$
\tilde f_s=\sum_{i\in I_s} \left(y_i^2+\sum_{k=1}^m \sigma_k
F^{(k)}_i +\frac{\mu_i^2}{x_i^2}+\sum_{j\notin
I_s}\frac{P_{ij}}{a_i-a_j}\right), \quad s=0,1,\dots,r.
$$
Generic trajectories take place over $r+\rho$-dimensional
invariant isotropic tori, spanned by the Hamiltonian vector fields
$X_{\tilde f_s}, X_{P_s}$. Also, the problem is Liouville
integrable by means of integrals $\tilde f_s$ and \eqref{L-JR}.
\end{cor}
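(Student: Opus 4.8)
The plan is to repeat, almost verbatim, the proofs of Theorems \ref{integrabilnost} and \ref{integrabilnost2}, the only new feature being that the diagonal summands of the non-commutative integrals now carry the separable-potential terms $\sum_k\sigma_k F_i^{(k)}$ in place of $\sigma x_i^2$, whereas the off-diagonal quantities $P_{ij}$ are literally the same as in the Jacobi--Rosochatius case. First I would record the input on the non-symmetric deformation. The potential \eqref{sep-pot} is a linear combination of the polynomial basis $V^{(k)}$ and the Rosochatius potentials $V^{(-1)}_s$, hence separable in elliptic coordinates; consequently, on any deformed ellipsoid $E^n_\epsilon$ with pairwise distinct semi-axes $a_i^\epsilon$, the functions
\begin{equation*}
f_i^\epsilon=y_i^2+\sum_{k=1}^m\sigma_k F_i^{(k)}(x)+\frac{\mu_i^2}{x_i^2}+\sum_{j\ne i}\frac{P_{ij}}{a_i^\epsilon-a_j^\epsilon},\qquad i=0,\dots,n,
\end{equation*}
are the commuting integrals \eqref{novi int} associated with \eqref{sep-pot} (the Rosochatius cross terms of $P_{ij}$ coming from the $V^{(-1)}_s$ contribution to $\mathcal V$), and by the Lax representation of the previous theorem they are exactly the coefficients in the partial-fraction expansion of $\det\mathcal L(\lambda)$; in particular $\{f_i^\epsilon,f_j^\epsilon\}_D^\epsilon=0$.

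Next I would run the limit argument of Theorem \ref{integrabilnost} without any change. Setting $\tilde f_s^\epsilon=\sum_{i\in I_s}f_i^\epsilon$ and $P_{s,ij}^\epsilon=(a_i^\epsilon-a_j^\epsilon)f_i^\epsilon$ and passing to the limit $\epsilon\to0$ along the very same families of deformations used there, one obtains that the $P_{s,ij}$ are integrals of \eqref{H-sep-pot}, that $\tilde f_s$ and $P_s$ are central in $\mathcal F=\{\tilde f_s,P_{s,ij}\}$, and that the commutation relations \eqref{cetvrta} and \eqref{treca} hold. This goes through verbatim precisely because the potential only enters the diagonal summands $\tilde f_s$ and never the $P_{ij}$, so all the singular limits of $P^\epsilon_{s,ij}$ are computed exactly as in Section 5.

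For completeness I would argue as in Section 5. For $\sigma_k,\mu_i$ small the spans $F(x,y)=\langle X_{\tilde f_s},X_{P_{s,ij}}\rangle$ and $K(x,y)=\langle X_{\tilde f_s},X_{P_s}\rangle$ have, by lower semicontinuity of the rank from the geodesic case \cite{Kris}, generic dimensions at least $2n-r-\rho$ and at most $r+\rho$; the upper bound comes from the single relation among $\tilde f_s,P_s$ generalizing \eqref{peta}, which one reads off from $\det\mathcal L(0)=0$ --- an identity holding at every point because the constraints \eqref{E-constraints} make the first column and the lower-right entry of $\mathcal L(0)$ vanish. Since all the functions involved are rational in $\sigma_k,\mu_i$, these dimension bounds persist for generic parameter values, so $\dim F(x,y)+\dim K(x,y)=2n=\dim T^*E^n$ and $\mathcal F$ is a complete set. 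The Nekhoroshev--Mishchenko--Fomenko theorem then gives non-commutative integrability with $(r+\rho)$-dimensional invariant isotropic tori spanned by $X_{\tilde f_s},X_{P_s}$. For the Liouville part I would reproduce the proof of Theorem \ref{integrabilnost2}: the $L_{s,k}$ of \eqref{L-JR} are built only from the $P_{s,ij}$, so \eqref{cetvrta} and \eqref{treca} yield $\{L_{s_1,k_1},L_{s_2,k_2}\}_D=0$; they are $n-r$ independent functions, and together with $r$ of the $\tilde f_s$ (independent modulo the one relation) they form a complete commutative family of $n$ functions on $T^*E^n$.

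I do not expect a genuine obstacle. The one point requiring a little care is the completeness/generic-rank count: one must make sure that switching on the separable polynomial potential does not drop the rank of $F(x,y)$ below that of the geodesic flow studied in \cite{Kris}. This is handled exactly as in Section 5 via the lower semicontinuity of the rank of a family of Hamiltonian vector fields together with the algebraicity of all the objects, so that a bound valid near $\sigma_k=\mu_i=0$ is valid generically.
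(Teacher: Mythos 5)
Your proposal is correct and follows exactly the route the paper intends: the paper gives no separate argument for this corollary beyond "repeating the construction presented in Section 5," using the Section 7 Lax representation and the commuting integrals \eqref{novi int}, and your write-up is precisely that repetition with the right details (the potential terms $\sum_k\sigma_k F^{(k)}_i$ only modify the diagonal parts, the $P_{ij}$ and hence the deformation-limit computations are unchanged, and the relation replacing \eqref{peta} follows from $\det\mathcal L(0)=0$ on the constraints). No gaps; this matches the paper's approach.
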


\subsection*{Acknowledgments}  This research was supported by the
Serbian Ministry of Science Project 174020, Geometry and Topology
of Manifolds, Classical Mechanics and Integrable Dynamical
Systems. A draft of Section 6 is obtained during author's visiting
UPC, Barcelona in September 2011. Author would like to thanks Yuri
Fedorov for useful remarks and kind hospitality.

\end{document}